\DeclareMathAlphabet{\Ma}{U}{msa}{m}{n}
\DeclareMathAlphabet{\Mb}{U}{msb}{m}{n}
\DeclareMathAlphabet{\Meuf}{U}{euf}{m}{n}
\DeclareSymbolFont{ASMa}{U}{msa}{m}{n}
\DeclareSymbolFont{ASMb}{U}{msb}{m}{n}
\DeclareMathOperator{\ran}{ran}
\newcommand{\scalar}[2]{\langle#1\,,#2\rangle}
\newcommand{\scalarb}[2]{\langle#1\,,#2\rangle_{\partial \Omega}}
\newcommand{\scalarbw}[2]{\langle#1\,,#2\rangle_{W^\perp}}
\newcommand{\pair}[2]{(#1\,,#2)}
\newcommand{\norm}[1]{\|#1\|}
\newcommand{\normm}[1]{|\negthinspace\|#1\|\negthinspace|}
\newcommand{\A}{\mathcal{A}}
\renewcommand{\H}{\mathcal{H}}
\renewcommand{\L}{\mathcal{L}}
\newcommand{\D}{\mathcal{D}}
\renewcommand{\d}{\mathrm{d}}
\newcommand{\pO}{{\partial \Omega}}
\newcommand{\C}{\mathcal{C}}
\newcommand{\1}{\mathbb{I}}
\newcommand{\R}{\mathbb{R}}
\newtheorem{theorem}{Theorem}
\newtheorem{corollary}[theorem]{Corollary}
\newtheorem{proposition}[theorem]{Proposition}
\newtheorem{definition}[theorem]{Definition}
\newtheorem{lemma}[theorem]{Lemma}
\newtheorem{example}[theorem]{Example}
\newtheorem{remark}[theorem]{Remark}
\numberwithin{equation}{section}
\numberwithin{theorem}{section}
\title{
On self-adjoint extensions and symmetries in quantum mechanics
}
\date{}
\author{Alberto Ibort$^{1,2}$, Fernando Lled\'{o}$^{1,2}$ and Juan Manuel P\'erez-Pardo$^{1,2,3}$}
\address{$^1$Department of Mathematics,
University Carlos~III, Madrid, Avda. de la Universidad 30, E-28911 Legan\'es
(Madrid), Spain.}
\address{$^2$Instituto de Ciencias Matem\'{a}ticas (CSIC - UAM - UC3M - UCM), c./ Nicol\'{a}s Cabrera 13-15, 
Campus de Cantoblanco, UAM, 28049, Madrid, Spain.}
\address{$^3$ INFN-Sezione di Napoli, Via Cintia Edificio 6, I--80126 Napoli, Italy.
}
\email{albertoi@math.uc3m.es, flledo@math.uc3m.es, juanma@na.infn.it}
\subjclass[2010]{81Q10, 35J05, 46L60}
\keywords{Self-adjoint extensions of symmetric operators, quantum symmetries, reduction theory for unbounded operators, $G$-invariant 
Laplacians}
\thanks{The first and third name authors are partly supported by the project MTM2010-21186-C02-02
    of the spanish {\em Ministerio de Ciencia e Innovaci\'on} and QUITEMAD programme P2009 ESP-1594.
The second-named author was
partially supported by projects DGI MICIIN MTM2012-36372-C03-01 and Severo Ochoa SEV-2011-0087
of the spanish Ministry of Economy and Competition.
The third-named author was also partially supported in 2011 and 2012 by mobility grants of the
``\emph{Universidad Carlos III de Madrid}''
}
\begin{document}

\begin{abstract}
Given a unitary representation of a Lie group $G$ on a Hilbert space $\mathcal H$, we develop the theory of 
$G$-invariant self-adjoint extensions of symmetric operators both using von Neumann's theorem and the theory of quadratic forms. 
We also analyze the relation between the reduction theory of the unitary representation and the reduction of the 
$G$-invariant unbounded operator. We also prove a $G$-invariant version of the representation theorem for quadratic forms. 

The previous results are applied to the study of $G$-invariant
self-adjoint extensions of the Laplace-Beltrami operator on a smooth Riemannian manifold with boundary on which
the group $G$ acts. These extensions are labeled by admissible unitaries $U$ acting on the $L^2$-space at the boundary and having
spectral gap at $-1$.
It is shown that if the unitary representation $V$ of the symmetry group $G$ is traceable, then the self-adjoint extension of the 
Laplace-Beltrami operator determined by $U$ is $G$-invariant if $U$ and $V$ commute at the boundary. Various significant examples are discussed at the end.
\end{abstract}

\maketitle

\tableofcontents


\section{Introduction}

Symmetries of quantum mechanical systems are described by a group of transformations that preserves its essential structures. 
They play a fundamental role in studying the properties of the quantum system and reveal
fundamental aspects of the theory which are not present neither in the dynamics involved nor in the forces.
Space or time symmetries, internal symmetries, the study of invariant states or spontaneously broken symmetries
are standard ingredients in the description of quantum theories.
In many cases quantum numbers or superselection rules are labels characterizing representations of symmetry groups.
The publication of the seminal books of Weyl, Wigner and
van der Waerden (cf.~\cite{bWeyl28,bWigner31,bWaerden32}) in the late twenties
also indicates that quantum mechanics was using group theoretical methods
almost from its birth. We refer, e.g., to \cite[Chapter~12]{moretti13} or \cite{simon76}
for a more thorough introduction to various symmetry notions in quantum mechanics.

It was shown by Wigner that any symmetry transformation of a quantum system preserving the transition probabilities
between two states must be implemented by a semi-unitary (i.e., by a unitary or an anti-unitary) operator
(see, e.g., \cite[Introduction]{wigner39} or \cite[Chapters 2]{thaller92}).
The action of a symmetry group $G$ on a system is given in terms of a semi-unitary projective representation 
of $G$ on the physical Hilbert space, that can be described in terms of semi-unitary representations of $U(1)$-central extensions 
of the group or by means of an appropriate representation group (see, for instance, \cite{Bargmann,Santander75,Santander80}).   
Since the main examples of symmetries considered in this article
will be implemented in terms of unitary operators we will restrict here to this case. Moreover, anti-unitary
representation appear rarely in applications (typically implementing time reversal) and restrict to discrete
groups. The situation with an anti-unitary representation of a discrete symmetry group can also be easily incorporated
in our approach. 

In order to motivate how the symmetry can be implemented at the level of unbounded operators, consider
a self-adjoint Hamiltonian $T$ on the Hilbert space $\H$
and let $U(t):=e^{itT}$ be the strongly continuous one-parameter group implementing the unitary evolution
of the quantum system. Then, if $G$ is a quantum symmetry represented by the unitary representation
$V\colon G\to \mathcal{U}(\H)$ it is natural to require that $V$ and $U$ commute, i.e.,
\begin{equation}\label{commute}
 U(t) V(g) = V(g) U(t)\;,\quad t\in\R\;,\; g\in G\;.
\end{equation}
At the level of self-adjoint generators and, recalling that the domain of $T$ is given by
\[
 \D(T):=\left\{ \psi\in\H\mid \lim_{t\to 0}\frac{(U(t)-\1)\psi}{t}\quad\mathrm{exists}\;   \right\}\;,
\]
we have that (\ref{commute}) implies
\begin{equation}\label{eq:first-inv}
 V(g) \D(T) \subset \D(T)\quad\mathrm{and}\quad V(g)T\psi=TV(g)\psi\;,\quad \psi\in \D(T)\;.
\end{equation}
Of course, the requirement that the unitary representation $V$ of the symmetry group $G$ commutes with the 
dynamics of the system as in Eq.~(\ref{commute}) is restrictive. For example, if $V$ is a strongly
continuous representation of a Lie group, then (\ref{commute}) implies
the existence of conserved quantities that do not depend explicitly on 
time. Nevertheless, the previous comments justify that in the context of a single unbounded symmetric 
operator $T$ (not necessarily a Hamiltonian) it is reasonable to define $G$-invariance of $T$ 
as in Eq.~(\ref{eq:first-inv}) (see Section~\ref{sec:vonNeumannSymmetry} for details).

In the study of quantum systems it is standard
that some heuristic arguments suggest an expression for an observable
which is only symmetric on an initial dense domain but not self-adjoint.
The description of such systems is not complete until a self-adjoint extension of the operator has been determined,
e.g., a self-adjoint Hamiltonian operator $T$. Only in this case a unitary evolution of the system is given. This is due to the one-to-one
correspondence between densely defined self-adjoint operators and strongly continuous one-parameter groups of unitary operators
$U_t = \exp it T$ provided by Stone's theorem. 
The specification of a self-adjoint extension is typically done
by choosing suitable boundary conditions and this corresponds to a global understanding of the system
(see, e.g., \cite{ILP13,IbPer2010} and references therein).
Accordingly, the specification of the self-adjoint extension is not just
a mathematical technicality, but a crucial step in the description of the observables and the 
dynamics of the quantum system
(see, e.g., \cite[Chapter~X]{reed75} for further results and motivation).
We refer also to \cite{gitman12,moretti13,schmuedgen12} for recent textbooks that address systematically the problem of
self-adjoint extension from different points of view (see also the references therein).

The question of how does the process of selecting self-adjoint extensions of symmetric operators intertwine with the notion of quantum symmetry arises. This question is at the focus of our interest in this article.
We provide here natural characterizations of those self-adjoint extensions that are compatible
with the given symmetries. Concretely, if a symmetric operator is $G$-invariant in the sense of 
Eq.~(\ref{eq:first-inv}), then it is clear that 
not all self-adjoint extensions of the operator will also be $G$-invariant. 
This is evident if one fixes the self-adjoint extension by selecting boundary conditions. In general,
these conditions need not preserve the underlying symmetry of the system.
We present in Section~\ref{sec:vonNeumannSymmetry} and Section~\ref{sec:InvariantQF}
the characterization of $G$-invariant self-adjoint extensions from two different point of views: 
first, in the most general context of deficiency spaces provided by von Neumann's theorem.
Second, using the representation theorem of quadratic forms in terms of self-adjoint operators.
We prove in Theorem~\ref{QAinvariant} a $G$-invariant version of the representation theorem for quadratic forms.
In Section~\ref{sec:reduction} we give an alternative notion of $G$-invariance in terms of the 
theory of von Neumann algebras (cf., Proposition~\ref{pro:affiliate}). We relate also here
the irreducible sub-representations of $V$ with the reduction of 
the corresponding $G$-invariant self-adjoint extension $T$. In particular, we show that 
if $T$ is unbounded and $G$-invariant, then the group $G$ must act on the Hilbert space via a highly reducible representation $V$.
Finally, we apply the theory developed to a large class of self-adjoint extensions of the Laplace-Beltrami operator
on a smooth, compact manifold with smooth boundary on which a group is represented with a traceable unitary representation
(see Definition~\ref{def:traceable}).
In particular, self-adjoint extension of the Laplace-Beltrami operator with respect to groups acting by isometries 
on the manifold are discussed.
In this context the extensions are labeled by suitable unitaries on the boundary of
the manifold (see \cite{ILP13} for details). Concrete manifolds like a cylinder or a half-sphere
with $\mathbf{Z}_2$ or $SO(2)$ actions, respectively, will also be analyzed.

Apart from the previous considerations there are many instances where, though only partially, the previous problem has been considered.   
Just to mention a few here we refer to the analysis of translational symmetries and the study of self-adjoint 
extensions of the Laplacian in the description of a scalar quantum field in 1+1 dimensions in a cavity \cite{asorey06}.   
In a different vein we quote the spectral analysis of Hamiltonians in concentric spherical shells where the spherical symmetry is used 
in a critical way \cite{dittrich92,exner07}. In an operator theoretic context we refer, for example, to the notion of periodic Weyl-Titchmarsh
functions or invariant operators with respect to linear-fractional transformations
\cite{bekker04,bekker07}.
Even from a purely geometric viewpoint we should mention the analysis of isospectral 
manifolds in the presence of symmetries \cite{shams06}. We also refer to 
\cite[Section~13.5]{schmuedgen12} for the analysis of self-adjoint extensions commuting 
with a conjugation.

This article is organized as follows: in Section~\ref{sec:Preliminaries} we summarize well-known results on the theory of 
self-adjoint extensions, including the theory of scales of Hilbert spaces. In the next section
we introduce the main definitions concerning $G$-invariant operators and give an explicit
characterization of $G$-invariant self-adjoint extensions in the most general setting, i.e., using the 
abstract characterization due to von Neumann \cite{neumann30}. 
In Section~\ref{sec:InvariantQF} we introduce the notion of $G$-invariant quadratic forms and show
that the self-adjoint operators representing them will also be $G$-invariant operators. In the following section we present first steps of a 
reduction theory for $G$-invariant self-adjoint operators. For this we use 
systematically the notion of an unbounded operator affiliated to a von Neumann algebra.
In Section~\ref{sec:InvariantLB} we analyze the quadratic forms associated to the
Laplace-Beltrami operator when there is a Lie group acting on the manifold. Thus, we provide a characterization of the self-adjoint
extensions of the Laplace-Beltrami operator that are $G$-invariant.

{\bf Notation}: In this article all unbounded, linear operators $T$ that act on a separable, complex Hilbert space $\H$ are 
densely defined and we denote the corresponding domain by $\D(T)\subset\H$.

\section{Basic material on self-adjoint extensions}
\label{sec:Preliminaries}

For convenience of the reader and to fix our notation we will summarize here some standard facts on the theory of self-adjoint extensions
of symmetric operators, representation theorems for quadratic forms and the theory of rigged Hilbert spaces. We refer to standard references, e.g.,
\cite{reed80,akhiezer61b,schmuedgen12,kato95,koshmanenko99}, for proofs, further details and references.

\subsection{Symmetric and self-adjoint operators in Hilbert space}
\label{sec:vonNeumann}

Let $T$ be an unbounded, linear operator on the complex, separable Hilbert space $\H$ and with dense domain
$\D(T)\subset\H$. Recall that the operator $T$ is called symmetric if
$$\scalar{\Psi}{T\Phi}=\scalar{T\Psi}{\Phi}\quad\forall\Psi,\Phi\in\D(T)\;.$$
Moreover, $T$ is self-adjoint if it is symmetric and $\D(T)=\D(T^\dagger)$, where the
domain of the adjoint operator $\D(T^\dagger)$ is the set of all $\Psi\in\H$
such that there exists $\chi\in\H$ with
$$\scalar{\Psi}{T\Phi}=\scalar{\chi}{\Phi}\quad\forall\Phi\in\D(T)\;.$$
In this case we define $T^\dagger\Psi:=\chi$. If $T$ is symmetric then
$T^\dagger$ is a closed extension of $T$, $T\subset T^\dagger$, i.e., $\D(T)\subset\D(T^\dagger)$ and
$T^\dagger|_{\D(T)}=T\;.$

The relation between self-adjoint and closed, symmetric operators is subtle and extremely important, specially from the  physical point of view. It is thus natural to ask
if given a symmetric operator one can find a closed extension of it that is self-adjoint and
whether or not it is unique. Von Neumann addressed this issue in the late 20s and answered the question in an abstract setting,
cf., \cite{neumann30}. We recall the main definition and results needed later (see \cite[Theorem X.2]{reed75}).

\begin{definition}\label{def:deficiencyspaces}
Let $T$ be a closed, symmetric operator. The \textbf{deficiency spaces} $\mathcal{N}_{\pm}$ are defined to be
$$\mathcal{N}_{\pm}=\{\Phi\in\H\bigr|(T^\dagger\mp\mathbf{i})\Phi=0\}\;.$$ The \textbf{deficiency indices} are
$$n_{\pm}=\operatorname{dim}\mathcal{N}_{\pm}\;.$$
\end{definition}

\begin{theorem}[von Neumann]\label{thmvonNeumann}
Let $T$ be a closed, symmetric operator. The self-adjoint extensions of $T$ are in one-to-one correspondence with the set of unitaries
(in the usual inner product) of $\mathcal{N}_+$ onto $\mathcal{N}_-$. If $K$ is such a unitary then the corresponding self-adjoint operator
$T_K$ has domain $$\D(T_K)=\{\Phi+(\mathbb{I}+K)\xi\bigr|\Phi\in\D(T),\,\xi\in\mathcal{N}_+\}\;,$$ and
$$T_K\bigl(\Phi+(\mathbb{I}+K)\xi\bigr)=T^\dagger\bigl(\Phi+(\mathbb{I}+K)\xi\bigr)= T\Phi+\mathbf{i}(\mathbb{I}+K)\xi\;.$$
\end{theorem}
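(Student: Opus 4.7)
The plan is to reduce the question of self-adjoint extensions to one about unitary extensions of a partial isometry via the Cayley transform; this is the classical route.

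\textbf{Step 1 (Cayley transform).} I would start by setting $C_T := (T-\mathbf{i})(T+\mathbf{i})^{-1}$. For $\Psi\in\D(T)$, symmetry of $T$ gives $\|(T\pm\mathbf{i})\Psi\|^2=\|T\Psi\|^2+\|\Psi\|^2$, so $C_T$ is a well-defined isometry from $\operatorname{Ran}(T+\mathbf{i})$ onto $\operatorname{Ran}(T-\mathbf{i})$. Since $T$ is closed, both ranges are closed subspaces of $\H$, and the usual range-kernel duality for adjoints identifies their orthogonal complements with $\mathcal{N}_+$ and $\mathcal{N}_-$ respectively.

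\textbf{Step 2 (Bijection with unitaries $K$).} I would then show that closed symmetric extensions $\tilde T\supset T$ correspond to isometric extensions $V\supset C_T$ with $\ker(\mathbb{I}-V)=\{0\}$, via $\tilde T = \mathbf{i}(\mathbb{I}+V)(\mathbb{I}-V)^{-1}$ on $\D(\tilde T)=\operatorname{Ran}(\mathbb{I}-V)$, and moreover that $\tilde T$ is self-adjoint iff $V$ is unitary on all of $\H$. Since $C_T$ already exhausts $\operatorname{Ran}(T+\mathbf{i})\to\operatorname{Ran}(T-\mathbf{i})$, extending $C_T$ to a unitary on $\H$ amounts to choosing a unitary $K:\mathcal{N}_+\to\mathcal{N}_-$ on the orthogonal complements, yielding the claimed bijection $K\leftrightarrow T_K$.

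\textbf{Step 3 (Explicit formulas).} With $\tilde C:=C_T\oplus K$, parametrize a generic element of $\H$ as $\phi+\xi$ with $\phi=(T+\mathbf{i})\Psi$ for some $\Psi\in\D(T)$ and $\xi\in\mathcal{N}_+$. A direct computation of $(\mathbb{I}-\tilde C)(\phi+\xi)$ and $(\mathbb{I}+\tilde C)(\phi+\xi)$, using $C_T\phi=(T-\mathbf{i})\Psi$ and $\tilde C\xi=K\xi$, recovers the stated description of $\D(T_K)$ and the formula for $T_K$ after absorbing scalar factors into $\Psi$. The equality with the action of $T^\dagger$ on the same elements is then immediate from $T^\dagger\xi=\mathbf{i}\xi$ and $T^\dagger(K\xi)=-\mathbf{i} K\xi$, since $T^\dagger$ extends $T$ and both $\mathcal{N}_\pm\subset\D(T^\dagger)$.

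\textbf{Main obstacle.} The crucial analytic input is the equivalence between self-adjointness of a closed symmetric operator and unitarity of its Cayley transform, or equivalently the criterion $\operatorname{Ran}(T\pm\mathbf{i})=\H\Leftrightarrow\mathcal{N}_\mp=\{0\}$. Once that is in hand the argument is essentially bookkeeping through the transform, but care is needed to verify that $\ker(\mathbb{I}-\tilde C)=\{0\}$, so that the inverse Cayley formula genuinely yields a densely defined (hence self-adjoint) operator, and that distinct unitaries $K$ produce distinct extensions, so that the correspondence is a true bijection rather than merely a parametrization.
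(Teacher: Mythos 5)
The paper does not actually prove this statement: it is imported as classical background with the proof deferred to \cite[Theorem~X.2]{reed75}, and the Cayley-transform route you outline is precisely the argument given in that reference, so your strategy is the standard and correct one (Steps 1 and 2, including the identifications $\operatorname{Ran}(T\pm\mathbf{i})^\perp=\mathcal{N}_\pm$ via closedness of $T$ and the self-adjointness criterion, are exactly the right analytic inputs). Two sign issues in Step~3 need attention, though. First, with your choice $\tilde C=C_T\oplus K$ one computes $(\mathbb{I}-\tilde C)\bigl((T+\mathbf{i})\Psi\bigr)=2\mathbf{i}\Psi$ and $(\mathbb{I}-\tilde C)\xi=(\mathbb{I}-K)\xi$, so $\operatorname{Ran}(\mathbb{I}-\tilde C)=\D(T)+(\mathbb{I}-K)\mathcal{N}_+$; to land on the parametrization $\D(T)+(\mathbb{I}+K)\mathcal{N}_+$ of the statement you must extend $C_T$ by $-K$ on $\mathcal{N}_+$ (equivalently, relabel the unitary), which is harmless for the bijection but should be made explicit. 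Second, and more substantively, your own observation that $T^\dagger\xi=\mathbf{i}\xi$ and $T^\dagger K\xi=-\mathbf{i}K\xi$ yields $T^\dagger(\mathbb{I}+K)\xi=\mathbf{i}(\mathbb{I}-K)\xi$, \emph{not} $\mathbf{i}(\mathbb{I}+K)\xi$; so the final displayed formula does not follow ``immediately'' as you claim --- rather, your computation shows that the action formula as printed in the statement is a sign typo, the correct right-hand side being $T\Phi+\mathbf{i}(\mathbb{I}-K)\xi$, in agreement with \cite[Theorem~X.2]{reed75}. You should flag that discrepancy rather than assert agreement. With those two corrections your argument goes through, provided you also carry out the checks you list under ``Main obstacle'' (injectivity of $\mathbb{I}-\tilde C$, so that the inverse transform is a densely defined symmetric operator, and injectivity of the map $K\mapsto T_K$).
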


\begin{remark}
\begin{itemize}
\item[(i)]
The preceding definition and theorem can be also stated without assuming that the symmetric operator $T$
is closed (see, e.g., \cite[Section~XII.4]{DunfordII}). In view of Corollary~\ref{cor:Gclosure} and that in the context of
von Neumann algebras of Section~\ref{sec:reduction} the closure of $T$ is essential, we make this simplifying assumption
here.

\item[(ii)]
We refer to \cite{Posilicano08} for a recent article that characterizes the class of all self-adjoint extensions
of the symmetric operator obtained as a restriction of a self-adjoint operator to a suitable subspace of its domain. In particular,
the explicit relation of the techniques used to the classical result by von Neumann is also worked out.

\end{itemize}
\end{remark}

Finally, we recall that the densely defined operator $T\colon\D(T)\to\H$ is
\textbf{semi-bounded from below} if there is a constant $m\geq0$ such that
$$\scalar{\Phi}{T\Phi}\geq -m\norm{\Phi}^2\quad\forall\Phi\in\D(T)\;.$$
The operator $T$ is \textbf{positive} if the lower bound satisfies $m=0$\,.
Note that semi-bounded operators are automatically symmetric.

\subsection{Closable quadratic forms} \label{sec:closableqf}

In this section we introduce the notion of closed and closable quadratic forms. Standard references are,
e.g.,~\cite[Chapter~VI]{kato95}, \cite[Section~VIII.6]{reed80} or \cite[Section~4.4]{davies95}.

\begin{definition}
Let $\D$ be a dense subspace of the Hilbert space $\H$ and denote by $Q\colon\mathcal{D}\times\mathcal{D}\to \mathbb{C}$
a sesquilinear form (anti-linear in the first entry and linear in the second entry). The quadratic form associated to $Q$
with domain $\mathcal D$ is its evaluation on the diagonal, i.e., $Q(\Phi):=Q(\Phi,\Phi)$\,,
$\Phi\in\mathcal{D}$\,. The sesquilinear form is called \textbf{Hermitean} if
\[
  Q(\Phi,\Psi)=\overline{Q(\Psi,\Phi)}\;,\quad \Phi,\Psi\in\mathcal{D}\;.
\]
The quadratic form is  \textbf{semi-bounded from below} if there is an $m\geq 0$ such that
\[
  Q(\Phi)\geq -m \norm{\Phi}^2\;,\; \Phi \in \mathcal{D}\;.
\]
The smallest possible value $m$ satisfying the preceding inequality is called the \textbf{lower bound} for the quadratic form $Q$.
In particular, if  $Q(\Phi)\geq 0$ for all $\Phi\in\mathcal{D}$ then we call $Q$ \textbf{positive}.

\end{definition}

Note that if $Q$ is semi-bounded with lower bound $m$\,, then
$$Q(\Phi)+m\norm{\Phi}^2\;\,,\quad \Phi\in\D$$
is positive on the same domain.
We need to recall also the notions of closable and closed quadratic forms as
well as the fundamental representation theorems that relate closed, semi-bounded quadratic forms with self-adjoint, semi-bounded operators.

\begin{definition}
Let $Q$ be a semi-bounded quadratic form with lower bound $m\geq 0$ and dense domain $\D\subset\H$.
The quadratic form $Q$ is \textbf{closed} if $\D$ is closed with respect to the norm
\[
 \normm{\Phi}_Q:=\sqrt{Q(\Phi)+(1+m)\|\Phi\|^2}\;,\quad\Phi\in\D\;.
\]
If Q is closed and $\D_0\subset\D$ is dense with respect to the norm $\normm{\cdot}_Q$\,, then $\D_0$ is called a
\textbf{form core} for $Q$.

Conversely, the closed quadratic form $Q$ with domain $\D$ is called an
\textbf{extension} of the quadratic form $Q$ with domain $\D_0$. A quadratic form is said to be
\textbf{closable} if it has a closed extension.
\end{definition}

\begin{remark}$\phantom{=}$\label{Remclosable}
\begin{enumerate}
\item The norm $\normm{\cdot}_Q$ is induced by the following inner product on
the domain:
\[
 \langle\Phi,\Psi\rangle_Q:= Q(\Phi,\Psi)+(1+m)\langle\Phi,\Psi\rangle\;,\quad \Phi,\Psi\in\D\;.
\]
 \item It is always possible to close $\D \subset\mathcal{H}$ with respect to the norm $\normm{\cdot}_Q$.
The quadratic form is closable iff this closure is a subspace of $\H$.
 \end{enumerate}
\end{remark}

The following representation theorem shows the
deep relation between closed, semi-bounded quadratic forms and self-adjoint operators.
This result goes back to the pioneering work in the 50ies by Friedrichs, Kato, Lax and Milgram, and others 
(see, e.g., comments to Section VIII.6 in \cite{reed80}).
The representation theorem can be extended to the class of sectorial forms and operators
(see \cite[Section~VI.2]{kato95}), but we will only need here its version for
self-adjoint operators.

\begin{theorem}\label{fundteo}
Let $Q$ be an Hermitean, closed, semi-bounded quadratic form defined on the dense domain
$\D\subset\H$. Then it exists a unique, self-adjoint, semi-bounded operator $T$
with domain $\D(T)$ and the same lower bound such that
\begin{enumerate}
\item $\Psi\in\mathcal{D}(T)$ iff $\Psi\in \D$ and it exists $\chi \in \H$ such that
$$Q(\Phi,\Psi)=\langle\Phi,\chi\rangle\,,\quad\forall \Phi\in\D\;.$$
In this case we write $T\Psi=\chi$ and $Q(\Phi,\Psi)=\langle\Phi,T\Psi\rangle$ for any $\Phi\in\D$\,,\;$\Psi\in\D(T)$.
\item $\D(T)$ is a core for $Q$.
\end{enumerate}
\end{theorem}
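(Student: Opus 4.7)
The plan is to construct $T$ via the Riesz representation theorem applied to the auxiliary Hilbert space $\mathcal{H}_Q := (\D, \langle\cdot,\cdot\rangle_Q)$, which is complete precisely because $Q$ is closed (Remark~\ref{Remclosable}). Since $\|\Phi\|^2 \le \normm{\Phi}_Q^2$, the inclusion $\mathcal{H}_Q \hookrightarrow \H$ is a contraction, so for each $\chi \in \H$ the map $\Phi \mapsto \langle \Phi, \chi\rangle$ is a continuous linear functional on $\mathcal{H}_Q$. Riesz then produces a unique $B\chi \in \mathcal{H}_Q \subset \H$ with
\begin{equation*}
\langle \Phi, B\chi\rangle_Q \;=\; \langle \Phi, \chi\rangle\;,\quad \forall \Phi \in \D\;,
\end{equation*}
defining a bounded operator $B\colon \H \to \H$ with $\|B\| \le 1$.

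Next I would verify three properties of $B$: it is self-adjoint (by Hermiticity of $Q$, since $\langle \chi_1, B\chi_2\rangle = \langle B\chi_1, B\chi_2\rangle_Q$ is Hermitean-symmetric in $\chi_1,\chi_2$), positive (from $\langle \chi, B\chi\rangle = \langle B\chi, B\chi\rangle_Q = Q(B\chi) + (1+m)\|B\chi\|^2 \ge \|B\chi\|^2$ using the lower bound of $Q$), and injective (if $B\chi=0$ then $\langle\Phi,\chi\rangle=0$ on the dense subspace $\D$, hence $\chi=0$). Bounded injective self-adjointness means $\operatorname{ran} B$ is dense and $B^{-1}$ is a densely defined self-adjoint operator on $\operatorname{ran} B$. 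I would then define
\begin{equation*}
T \;:=\; B^{-1} - (1+m)\mathbb{I}\;,\qquad \D(T) := \operatorname{ran} B\;,
\end{equation*}
which is self-adjoint. For $\Psi = B\chi \in \D(T)$ and any $\Phi\in\D$, a direct computation
\begin{equation*}
Q(\Phi,\Psi) \;=\; \langle\Phi,\Psi\rangle_Q - (1+m)\langle\Phi,\Psi\rangle \;=\; \langle \Phi, \chi - (1+m)\Psi\rangle \;=\; \langle \Phi, T\Psi\rangle
\end{equation*}
gives the representation formula, and setting $\Phi = \Psi$ yields $\langle\Psi,T\Psi\rangle = Q(\Psi) \ge -m\|\Psi\|^2$, so $T$ has the same lower bound as $Q$.

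For the converse in (i), if $\Psi\in\D$ and some $\tilde\chi\in\H$ satisfies $Q(\Phi,\Psi) = \langle\Phi,\tilde\chi\rangle$ on $\D$, then $\langle\Phi,\Psi\rangle_Q = \langle\Phi, \tilde\chi + (1+m)\Psi\rangle$, and uniqueness in the definition of $B$ forces $\Psi = B\bigl(\tilde\chi + (1+m)\Psi\bigr) \in \operatorname{ran} B = \D(T)$, with $T\Psi = \tilde\chi$. Assertion (ii) reduces to showing $\operatorname{ran} B$ is dense in $\mathcal{H}_Q$: any $\Phi\in\mathcal{H}_Q$ orthogonal to $\operatorname{ran} B$ in $\langle\cdot,\cdot\rangle_Q$ satisfies $0 = \langle \Phi, B\chi\rangle_Q = \langle\Phi,\chi\rangle$ for every $\chi\in\H$, hence $\Phi=0$. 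Uniqueness of $T$ is immediate since $T+(1+m)\mathbb{I}$ must be the inverse of the Riesz isomorphism $B$, which is determined purely by $Q$.

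The main conceptual step, and the only place where real work occurs, is setting up the correct auxiliary inner product so that the representation $\langle\Phi,\Psi\rangle_Q = \langle\Phi,\chi\rangle$ collapses the problem to one application of Riesz; everything else is checking self-adjointness and tracking the $(1+m)\mathbb{I}$ shift. The one mildly delicate point I would be careful with is the converse direction of (i), where I must recognise that the Riesz identification automatically places $\Psi$ in $\operatorname{ran} B$ and thus inside $\D(T)$, rather than merely in $\D$.
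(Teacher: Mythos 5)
Your proof is correct, but note that the paper itself offers no proof of this statement: it is quoted as the classical representation theorem of Friedrichs--Kato--Lax--Milgram, with the reader referred to \cite{kato95}, \cite{reed80} and \cite{davies95}. Your argument --- Riesz representation in the form Hilbert space $(\D,\scalar{\cdot}{\cdot}_Q)$, the bounded positive injective operator $B$, and $T:=B^{-1}-(1+m)\1$ on $\ran B$ --- is precisely the standard proof found in those references, and all the steps check out, including the two points you flag as delicate (the converse direction of (i) via uniqueness in the Riesz identification, and the density of $\ran B$ in the $\normm{\cdot}_Q$-topology for (ii)). One small ordering remark: the claim that $T$ has \emph{the same} lower bound as $Q$ (rather than merely $T\geq -m$) uses the form-core property of $\D(T)$ established in (ii), so that the infimum of $Q(\Psi)/\norm{\Psi}^2$ over $\D(T)$ agrees with the infimum over all of $\D$; it is worth stating this dependence explicitly rather than concluding it immediately after the computation $\scalar{\Psi}{T\Psi}=Q(\Psi)$.
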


Following \cite[Theorem~4.4.2]{davies95} we get the following characterization of
representable quadratic forms.
\begin{theorem}\label{pro:rep}
Let $Q$ be a semi-bounded, quadratic form with lower bound $m$ and domain $\D$. Then
the following conditions are equivalent:
\begin{enumerate}
 \item There is a lower semi-bounded operator $T$ with lower bound $m$ representing the quadratic form $Q$.
 \item The domain $\D$ is complete with respect to the norm $\normm{\cdot}_Q$.
\end{enumerate}
\end{theorem}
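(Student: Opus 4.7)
The plan is to prove the two implications separately, leveraging Theorem~\ref{fundteo} for one direction and the spectral calculus for the other. As a preliminary observation, the semi-boundedness hypothesis $Q(\Phi)\geq -m\|\Phi\|^2$ forces $Q(\Phi)\in\mathbb{R}$ on the diagonal, and by polarization the underlying sesquilinear form is Hermitean; this ensures the hypotheses of Theorem~\ref{fundteo} are within reach.

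For $(2)\Rightarrow(1)$, the assumption that $\D$ is complete with respect to the norm $\normm{\cdot}_Q$ is, by the definition recalled in Section~\ref{sec:closableqf}, precisely the statement that $Q$ is a closed, Hermitean, semi-bounded quadratic form. Theorem~\ref{fundteo} then yields the desired self-adjoint, lower semi-bounded operator $T$ with the same lower bound $m$ such that $Q(\Phi,\Psi)=\scalar{\Phi}{T\Psi}$ for $\Psi\in\D(T)$ and $\Phi\in\D$, which is exactly what it means for $T$ to represent $Q$.

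For $(1)\Rightarrow(2)$, given such a self-adjoint operator $T$ with lower bound $m$ representing $Q$, I would introduce the positive self-adjoint operator $A:=T+(1+m)\1$, which satisfies $A\geq \1$, and take its self-adjoint positive square root $A^{1/2}$ via the functional calculus; $A^{1/2}$ is closed, so $\D(A^{1/2})$ is a Hilbert space under the graph inner product. On $\D(T)=\D(A)$ one has the key identity
\[
  \normm{\Phi}_Q^{\,2} \;=\; Q(\Phi)+(1+m)\|\Phi\|^2 \;=\; \scalar{\Phi}{A\Phi} \;=\; \|A^{1/2}\Phi\|^2 ,
\]
so $\normm{\cdot}_Q$ agrees on $\D(T)$ with the graph norm of $A^{1/2}$. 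Since $\D(T)$ is a form core for $Q$ (which is part of what it means for $T$ to represent $Q$, cf.\ Theorem~\ref{fundteo}(ii)), $\D(T)$ is $\normm{\cdot}_Q$-dense in $\D$; on the other hand $\D(T)=\D(A)$ is a core for the closed operator $A^{1/2}$, hence graph-norm-dense in $\D(A^{1/2})$. Taking closures in the common norm forces $\D=\D(A^{1/2})$, and completeness of the latter gives completeness of $\D$ in $\normm{\cdot}_Q$.

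The main obstacle is the identification $\D=\D(A^{1/2})$: the inclusion $\D\subset \D(A^{1/2})$ requires extending the norm identity from $\D(T)$ to $\D$ by approximation, while the reverse inclusion uses that the $\normm{\cdot}_Q$-closure of $\D(T)$ inside $\H$ is unique and is already contained in $\D$ via the representation relation. A cleaner alternative, should the direct identification prove awkward, is to apply Theorem~\ref{fundteo} to the closed form $\widehat Q(\Phi,\Psi):=\scalar{A^{1/2}\Phi}{A^{1/2}\Psi}-(1+m)\scalar{\Phi}{\Psi}$ on $\D(A^{1/2})$, whose representing operator is $T$, and invoke the uniqueness clause of that theorem to conclude $\D=\D(A^{1/2})$.
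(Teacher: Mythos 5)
The paper itself gives no proof of this statement: it is quoted from \cite{davies95} (Theorem~4.4.2), so there is no internal argument to compare against. Your proof is correct and is essentially the standard one underlying Davies' result. The direction $(2)\Rightarrow(1)$ is indeed immediate from Theorem~\ref{fundteo} once one notes, as you do, that semi-boundedness forces the diagonal values to be real and hence, by polarization over $\mathbb{C}$, the sesquilinear form to be Hermitean. For $(1)\Rightarrow(2)$ your square-root construction works, with three small caveats. First, the functional calculus requires $T$ to be self-adjoint, not merely symmetric and semi-bounded as the paper's loose wording would allow; this is clearly the intended reading (it is what Theorem~\ref{fundteo} produces), but it is worth stating. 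Second, $\normm{\Phi}_Q=\|A^{1/2}\Phi\|$ is \emph{equivalent} to, not equal to, the graph norm of $A^{1/2}$ (because $A\geq\1$), which is all you need. Third, and most substantively, the implication fails if ``$T$ represents $Q$'' is read only as the weak identity $Q(\Phi,\Psi)=\scalar{\Phi}{T\Psi}$ for $\Phi\in\D$, $\Psi\in\D(T)$: restricting a closed form to a dense, non-closed subdomain still containing $\D(T)$ yields a counterexample. You correctly build the form-core condition of Theorem~\ref{fundteo}(ii) into the meaning of ``represents,'' which matches Davies' formulation (there condition (1) asserts that $Q$ \emph{is} the form of $T$), and under that reading your identification $\D=\D(A^{1/2})$ --- via closedness of $A^{1/2}$ for the inclusion $\D\subset\D(A^{1/2})$ and the core property of $\D(A)$ for the reverse --- and the resulting completeness of $\D$ go through.
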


One of the most common uses of the representation theorem is to obtain self-adjoint extensions of symmetric, semi-bounded operators.
Given a semi-bounded, symmetric operator $T$ one can consider the associated quadratic form
$$Q_T(\Phi,\Psi)=\scalar{\Phi}{T\Psi}\quad \Phi,\Psi\in\D(T)\;.$$ These quadratic forms are always closable,
cf., \cite[Theorem X.23]{reed75}, and therefore their closure is associated to a unique self-adjoint operator.

Even if a symmetric operator has uncountably many possible self-adjoint extensions, the representation theorem above
allows to select a particular one given a suitable quadratic form. This extension is called the Friedrichs' or {\em hard}
extension and is in a natural sense maximal (see Chapters~10 and 13 in \cite{schmuedgen12} for a relation to Krein-von Neumann
or {\em soft} extensions).
The approach that we shall take in Section \ref{sec:InvariantQF} and Section \ref{sec:InvariantLB} uses this kind of
Friedrichs type extension.


\subsection{Scales of Hilbert spaces}
\label{sec:scalesHs}

The theory of scales of Hilbert spaces, also known as theory of rigged Hilbert spaces, has been used in
many ways in mathematics and mathematical physics. One of the standard applications of this theory appears in the
proof of the representation theorems mentioned above. We state next the main results,
(see \cite[Chapter I]{berezanskii68}, \cite[Chapter 2]{koshmanenko99} for proofs and more results).\\

Let $\H$ be a Hilbert space with scalar product $\scalar{\cdot}{\cdot}$ and induced norm $\norm{\cdot}$. Let $\H_+$ be a dense, linear
subspace of $\H$ which is a complete Hilbert space with respect to another scalar product that will be denoted by $\scalar{\cdot}{\cdot}_+$.
The corresponding norm is $\norm{\cdot}_+$ and we assume that
\begin{equation}\label{inclusion inequality}
\norm{\Phi}\leq\norm{\Phi}_+\;,\quad \Phi\in\H_+\;.
\end{equation}

Any vector $\Phi\in\H$ generates a continuous linear functional $L_\Phi\colon\H_+\to \mathbb{C}$  as follows. For $\Psi\in\H_+$ define
\begin{equation}
L_{\Phi}(\Psi)=\scalar{\Phi}{\Psi}\;.
\end{equation}
Continuity follows by the Cauchy-Schwartz inequality and Eq.~\eqref{inclusion inequality}.

Since $L_\Phi$ is a continuous linear functional on $\H_+$ it can be represented, according to Riesz theorem,
using the scalar product in $\H_+$. Namely, it exists  a vector $\xi\in\H_+$ such that
\begin{equation}
\forall\Psi\in\H_+\,,\quad L_{\Phi}(\Psi)=\scalar{\Phi}{\Psi}=\scalar{\xi}{\Psi}_+\;,
\end{equation}
and the norm of the functional coincides with the norm in $\H_+$ of the element $\xi$\,. 
One can use the above equalities to define an operator
\begin{equation}
\begin{array}{c}
\hat{I}\colon \H\to\H_+\\
\hat{I}\Phi=\xi\;.
\end{array}
\end{equation}
This operator is clearly injective since $\H_+$ is a dense subset of $\H$ and therefore it can be used to define a new scalar product on $\H$
\begin{equation}
\scalar{\cdot}{\cdot}_-:=\scalar{\hat{I}\cdot}{\hat{I}\cdot}_+\;.
\end{equation}
The completion of $\H$ with respect to this scalar product defines a new Hilbert space, $\H_-$\,, and the corresponding norm will be
denoted accordingly by $\norm{\cdot}_-$\,. It is clear that $\H_+\subset\H\subset \H_-$\,, with dense inclusions.
Since $\norm{\xi}_+=\norm{\hat{I}\Phi}_+=\norm{\Phi}_-$\,, the operator $\hat{I}$ can be extended by continuity to an isometric bijection.

\begin{definition}\label{def:scales}
The Hilbert spaces $\H_+$\,, $\H$ and $\H_-$ introduced above define a \textbf{scale of Hilbert spaces}. The extension by continuity of the
operator $\hat{I}$ is called the \textbf{canonical isometric bijection}. It is denoted by:
\begin{equation}
I\colon \H_-\to\H_+\;.
\end{equation}
\end{definition}

\begin{proposition}\label{proppairing}
The scalar product in $\H$ can be extended continuously to a pairing
\begin{equation}
\pair{\cdot}{\cdot}\colon\H_-\times\H_+\to\mathbb{C}\;.
\end{equation}
\end{proposition}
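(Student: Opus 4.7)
The plan is to define the pairing explicitly via the canonical isometric bijection $I\colon\H_-\to\H_+$ that has just been introduced, and then verify that it agrees with $\scalar{\cdot}{\cdot}$ on $\H\times\H_+$ and is jointly continuous in the $\norm{\cdot}_-\times\norm{\cdot}_+$ topology. Concretely, for $\Phi\in\H_-$ and $\Psi\in\H_+$ I would set
\[
\pair{\Phi}{\Psi}:=\scalar{I\Phi}{\Psi}_+.
\]
Sesquilinearity is then inherited from $\scalar{\cdot}{\cdot}_+$ together with the linearity of $I$.

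The first key step is to check that this reduces to the original scalar product on $\H\times\H_+$. If $\Phi\in\H\subset\H_-$ then, by construction, $I\Phi=\hat I\Phi$, and by definition of $\hat I$ we have $\scalar{\Phi}{\Psi}=\scalar{\hat I\Phi}{\Psi}_+$ for every $\Psi\in\H_+$; hence $\pair{\Phi}{\Psi}=\scalar{\Phi}{\Psi}$ on this subspace.

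The second step is the continuity estimate. Using the Cauchy--Schwarz inequality in $\H_+$ together with the fact that $I$ is an isometry from $\H_-$ onto $\H_+$,
\[
|\pair{\Phi}{\Psi}|=|\scalar{I\Phi}{\Psi}_+|\leq \norm{I\Phi}_+\,\norm{\Psi}_+ = \norm{\Phi}_-\,\norm{\Psi}_+.
\]
Thus $(\cdot,\cdot)$ is a bounded sesquilinear form on $\H_-\times\H_+$ of norm at most $1$, proving continuity.

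Uniqueness of the extension follows from the density of $\H$ in $\H_-$: any two continuous sesquilinear forms on $\H_-\times\H_+$ that agree with $\scalar{\cdot}{\cdot}$ on $\H\times\H_+$ must coincide on the whole product. There is no real obstacle in this proof; the only subtle point is recognising that the identity $\scalar{\Phi}{\Psi}=\scalar{\hat I\Phi}{\Psi}_+$, which is built into the very definition of $\hat I$, is precisely what makes $\scalar{\cdot}{\cdot}$ bounded by $\norm{\cdot}_-\norm{\cdot}_+$ on the dense subspace $\H\times\H_+$, so that the BLT-type extension goes through.
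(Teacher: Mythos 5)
Your proof is correct and follows essentially the same route as the paper: the key estimate $|\scalar{\Phi}{\Psi}|=|\scalar{I\Phi}{\Psi}_+|\leq\norm{\Phi}_-\norm{\Psi}_+$ via Cauchy--Schwarz and the isometry of $I$ is exactly the paper's argument, with the extension by continuity/density. You merely make explicit the agreement with the original scalar product and the uniqueness of the extension, which the paper leaves implicit.
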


\begin{proof}
Let $\Phi\in\H$ and $\Psi\in\H_+$. Using the Cauchy-Schwartz inequality we have that
\begin{equation}\label{CSpairing}
|\scalar{\Phi}{\Psi}|=|\scalar{I\Phi}{\Psi}_+|\leq \norm{I\Phi}_+\norm{\Psi}_+=\norm{\Phi}_-\norm{\Psi}_+
\end{equation}
and we can extend the scalar product by continuity to the pairing $\pair{\cdot}{\cdot}$.
\end{proof}


\section{Self-adjoint extensions with symmetry}
\label{sec:vonNeumannSymmetry}

We begin now analyzing the question of how the process of finding a self-adjoint extension of a symmetric operator
intertwines with the notion of a quantum symmetry. We will denote by $G$ a group and let
\[
  V\colon G\to\mathcal{U}(\mathcal{H})
\]
be a fixed unitary representation of $G$ on the complex, separable Hilbert space $\mathcal H$. We will introduce the notion
of $G$-invariance by which we mean invariance under the fixed representation $V$.

\begin{definition}\label{def:G-inv}
Let $T$ be a linear operator with dense domain $\D(T)\subset\mathcal{H}$ and consider a unitary
representation $V\colon G\to\mathcal{U}(\H)$.
The operator $T$ is said to be \textbf{$G$-invariant} if
$TV(g)\supseteq V(g)T$,
i.e., if $V(g)\D(T)\subset\D(T)$ for all $g\in G$ and
$$T {V(g)}\Psi=V(g) T \Psi \quad \forall g\in G,\;\forall \Psi\in\D(T)\;.$$
\end{definition}

Due to the invertibility of the unitary operators representing the group we have the following
immediate consequence on $G$-invariant subspaces $\mathcal K$ of $\H$ which we will use several times:
\begin{equation} \label{subsetequal}
\mathrm{if}~V(g)\mathcal{K}\subset\mathcal{K}\,,\quad \forall g\in G\;,\;\;\mathrm{then~} V(g)\mathcal{K}=\mathcal{K}\quad \forall g\in G\,.
\end{equation}

\begin{proposition}\label{adjointinvariant}
Let $T\colon\D(T)\subset\H\to\H$ be a  G-invariant, symmetric operator. Then the adjoint operator $T^{\dagger}$ is $G$-invariant.
\end{proposition}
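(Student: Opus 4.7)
The plan is to unpack the definition of the adjoint and translate the $G$-invariance of $T$ directly into $G$-invariance of $T^{\dagger}$, exploiting that each $V(g)$ is unitary so $V(g)^{*} = V(g^{-1})$, and that by (\ref{subsetequal}) the $G$-invariance of $\D(T)$ actually reads $V(g)\D(T) = \D(T)$ for every $g \in G$.

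First I would fix $g \in G$ and an arbitrary $\Psi \in \D(T^{\dagger})$, writing $\chi := T^{\dagger}\Psi$, so that
\[
 \scalar{\Psi}{T\Phi} = \scalar{\chi}{\Phi} \qquad \text{for all } \Phi \in \D(T).
\]
To verify that $V(g)\Psi \in \D(T^{\dagger})$ I need to exhibit some $\chi' \in \H$ with $\scalar{V(g)\Psi}{T\Phi} = \scalar{\chi'}{\Phi}$ for every $\Phi \in \D(T)$. The natural computation is
\[
 \scalar{V(g)\Psi}{T\Phi} = \scalar{\Psi}{V(g^{-1}) T\Phi},
\]
and here the $G$-invariance of $T$ enters: since $V(g^{-1})\Phi \in \D(T)$ by (\ref{subsetequal}), Definition~\ref{def:G-inv} gives $V(g^{-1}) T\Phi = T V(g^{-1}) \Phi$. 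Substituting and using that $V(g^{-1})\Phi$ lies in $\D(T)$ so the adjoint relation applies, we get
\[
 \scalar{V(g)\Psi}{T\Phi} = \scalar{\Psi}{T V(g^{-1})\Phi} = \scalar{\chi}{V(g^{-1})\Phi} = \scalar{V(g)\chi}{\Phi}.
\]
Thus $\chi' := V(g)\chi = V(g) T^{\dagger} \Psi$ works, which simultaneously establishes $V(g)\Psi \in \D(T^{\dagger})$ and the intertwining relation $T^{\dagger} V(g)\Psi = V(g) T^{\dagger}\Psi$.

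Combining the conclusions, $V(g)\D(T^{\dagger}) \subset \D(T^{\dagger})$ and $T^{\dagger} V(g) = V(g) T^{\dagger}$ on $\D(T^{\dagger})$ for every $g \in G$, which is precisely the condition in Definition~\ref{def:G-inv} applied to $T^{\dagger}$. There is no real obstacle here; the only point to handle carefully is that the identity $V(g^{-1}) T\Phi = T V(g^{-1})\Phi$ must be invoked on the element $V(g^{-1})\Phi$, whose membership in $\D(T)$ is guaranteed precisely by the equality $V(g^{-1})\D(T) = \D(T)$ noted in (\ref{subsetequal}); without that equality one would only get an inclusion going the wrong way.
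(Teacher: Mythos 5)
Your argument is correct and coincides with the paper's own proof: both fix $\Psi\in\D(T^\dagger)$ with $\chi=T^\dagger\Psi$ and run the same chain $\scalar{V(g)\Psi}{T\Phi}=\scalar{\Psi}{V(g^{-1})T\Phi}=\scalar{\Psi}{TV(g^{-1})\Phi}=\scalar{\chi}{V(g^{-1})\Phi}=\scalar{V(g)\chi}{\Phi}$ to conclude $V(g)\Psi\in\D(T^\dagger)$ and $T^\dagger V(g)\Psi=V(g)T^\dagger\Psi$. Your extra care about $V(g^{-1})\Phi\in\D(T)$ is fine, though the inclusion $V(g^{-1})\D(T)\subset\D(T)$ already follows from Definition~\ref{def:G-inv} applied to the group element $g^{-1}$, so the full equality from (\ref{subsetequal}) is not actually needed.
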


\begin{proof}
Let $\Psi\in\D(T^\dagger)$\,. Then, according to the definition of adjoint operator there is a
vector $\chi\in\H$ such that
$$\scalar{\Psi}{T\Phi}=\scalar{\chi}{\Phi}\quad \forall \Phi\in\D(T)\;.$$
Using the $G$-invariance we have
\begin{align*}
\scalar{V(g)\Psi}{T\Phi}&=\scalar{\Psi}{V(g^{-1})T\Phi}\\
&=\scalar{\Psi}{TV(g^{-1})\Phi}\\
&=\scalar{\chi}{V(g^{-1})\Phi}\\
&=\scalar{V(g)\chi}{\Phi}\;.
\end{align*}
The preceding equalities hold for any $\Phi\in\D(T)$ and therefore $V(g)\Psi\in\D(T^\dagger)$\,.
Moreover, we have that $T^\dagger V(g)\Psi=V(g)\chi=V(g)T^\dagger \Psi\;.$
\end{proof}

\begin{corollary}\label{cor:Gclosure}
 Let $T\colon\D(T)\subset\H\to\H$ be a $G$-invariant and symmetric operator on $\mathcal H$. Then its closure
 $\overline T$ is also $G$-invariant.
\end{corollary}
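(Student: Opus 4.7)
The plan is to use the standard characterization of the closure of a (closable) operator in terms of its graph. Since $T$ is symmetric and densely defined, it is closable, and a pair $(\Psi,\chi)\in\H\oplus\H$ belongs to the graph of $\overline{T}$ precisely when there exists a sequence $\Psi_n\in\D(T)$ with $\Psi_n\to\Psi$ and $T\Psi_n\to\chi$. The guiding idea is that the bounded operator $V(g)\oplus V(g)$ on $\H\oplus\H$ transports such approximating sequences across the closure.

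Concretely, fix $g\in G$ and take $\Psi\in\D(\overline{T})$ together with an approximating sequence $\Psi_n\in\D(T)$ satisfying $\Psi_n\to\Psi$ and $T\Psi_n\to\overline{T}\Psi$. Because $T$ is $G$-invariant, each $V(g)\Psi_n$ lies in $\D(T)$ and $TV(g)\Psi_n=V(g)T\Psi_n$. By continuity of $V(g)$ we then have $V(g)\Psi_n\to V(g)\Psi$ and $TV(g)\Psi_n=V(g)T\Psi_n\to V(g)\overline{T}\Psi$. The graph characterization of $\overline{T}$ therefore yields $V(g)\Psi\in\D(\overline{T})$ together with $\overline{T}V(g)\Psi=V(g)\overline{T}\Psi$, which is exactly the $G$-invariance of $\overline{T}$ as required by Definition~\ref{def:G-inv}.

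I do not expect any real obstacle: the statement is a soft consequence of the general fact that a bounded operator which preserves a subset automatically preserves its closure. An alternative route, which avoids invoking graphs explicitly, is to note that for symmetric $T$ one has $\overline{T}=T^{\dagger\dagger}$ and to apply Proposition~\ref{adjointinvariant} twice. The only point to verify there is that the proof of Proposition~\ref{adjointinvariant} does not actually use symmetry of $T$: the computation relies only on dense definiteness and $G$-invariance of $T$. Hence the same argument applied to the densely defined $G$-invariant operator $T^\dagger$ shows that $T^{\dagger\dagger}=\overline{T}$ is $G$-invariant. Either approach produces a clean one-paragraph formal proof.
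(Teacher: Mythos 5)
Your proposal is correct. Your primary argument (the graph--closure argument) is a genuinely different and more elementary route than the paper's: the paper proves the corollary by writing $\overline{T}=T^{\dagger\dagger}$ and invoking Proposition~\ref{adjointinvariant} twice, which is exactly your ``alternative route.'' The graph argument buys self-containedness and slightly more generality --- it works verbatim for any closable, densely defined $G$-invariant operator, with no reference to adjoints --- at the cost of an explicit $\varepsilon$-chase with approximating sequences. The double-adjoint route is shorter on the page but, as you correctly point out, has a subtlety that the paper itself glosses over: Proposition~\ref{adjointinvariant} is \emph{stated} for symmetric $T$, whereas its second application is to $T^{\dagger}$, which is symmetric only when $T$ is essentially self-adjoint. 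Your observation that the proof of that proposition uses only dense definiteness of the operator (guaranteed for $T^{\dagger}$ because $\D(T)\subset\D(T^{\dagger})$) and $G$-invariance, not symmetry, is precisely the point needed to make the paper's own one-line proof rigorous. Either of your two arguments is acceptable; the first is independent of the paper's machinery, the second matches the paper while patching its implicit step.
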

\begin{proof}
 The operator $T$ is symmetric and, therefore, closable. From $\overline{T}=T^{\dagger\dagger}$ and since
 $T$ is $G$-invariant, we have by the preceding proposition that $T^\dagger$ is $G$-invariant,
 hence also $\overline{T}=(T^\dagger)^\dagger$\,.
\end{proof}

The preceding result shows that we can always assume without loss of generality that the $G$-invariant
symmetric operators are closed.

We begin next with the analysis of the $G$-invariance of the self-adjoint extensions given by von Neumann's
classical result (cf., Theorem~\ref{thmvonNeumann}).

\begin{corollary}\label{cor:invariantdeficiency}
Let $T\colon\D(T)\subset\H\to\H$ be a closed, symmetric and $G$-invariant operator.
Then, the deficiency spaces $\mathcal{N}_{\pm}$, cf.,
Definition~\ref{def:deficiencyspaces}, are invariant under the action of the group, i.e.,
$$V(g)\mathcal{N}_{\pm}=\mathcal{N}_{\pm}\;.$$
\end{corollary}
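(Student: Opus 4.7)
The plan is to reduce the statement to the $G$-invariance of the adjoint operator $T^\dagger$, which is already established in Proposition~\ref{adjointinvariant}, and then combine it with the general observation \eqref{subsetequal} to upgrade the resulting inclusion to an equality.

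More precisely, first I would recall that by Proposition~\ref{adjointinvariant} the $G$-invariance of $T$ implies the $G$-invariance of $T^\dagger$; in particular, $V(g)\D(T^\dagger)\subset\D(T^\dagger)$ and $T^\dagger V(g)\Psi=V(g)T^\dagger\Psi$ for every $\Psi\in\D(T^\dagger)$ and $g\in G$. Then, for any $\Phi\in\mathcal{N}_\pm\subset\D(T^\dagger)$, the vector $V(g)\Phi$ again lies in $\D(T^\dagger)$, and a direct computation gives
\[
 (T^\dagger \mp \mathbf{i})V(g)\Phi = V(g)T^\dagger\Phi \mp \mathbf{i}V(g)\Phi = V(g)(T^\dagger\mp \mathbf{i})\Phi = 0,
\]
so that $V(g)\Phi\in\mathcal{N}_\pm$ and hence $V(g)\mathcal{N}_\pm\subset\mathcal{N}_\pm$.

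Finally, to conclude with the equality $V(g)\mathcal{N}_\pm=\mathcal{N}_\pm$, I would invoke the elementary observation \eqref{subsetequal}, which applies because the deficiency spaces are linear subspaces left stable by every $V(g)$. There is no real obstacle here: the result is a direct corollary of the $G$-invariance of $T^\dagger$ and the invertibility of unitary representations, and the only minor point to keep in mind is that one must work with $T^\dagger$ (not $T$) since $\mathcal{N}_\pm$ is defined as a kernel of $T^\dagger\mp\mathbf{i}$.
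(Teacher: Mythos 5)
Your argument is exactly the paper's: both invoke Proposition~\ref{adjointinvariant} to get $(T^\dagger\mp\mathbf{i})V(g)\Phi=V(g)(T^\dagger\mp\mathbf{i})\Phi=0$ for $\Phi\in\mathcal{N}_\pm$, yielding the inclusion $V(g)\mathcal{N}_\pm\subset\mathcal{N}_\pm$, and then upgrade to equality via \eqref{subsetequal}. The proof is correct and essentially identical to the one in the paper.
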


\begin{proof}
Let $\xi\in\mathcal{N}_+\subset \D(T^\dagger)$\,. Then $(T^\dagger-i)\xi=0$ and we have 
from Proposition~\ref{adjointinvariant} that
\begin{equation*}
(T^\dagger-i)V(g)\xi=V(g)(T^\dagger-i)\xi=0\;.
\end{equation*}
This shows that $V(g)\mathcal{N}_{+}\subset \mathcal{N}_{+}$ for all $g \in G$
and by (\ref{subsetequal}) we get the equality. Similarly for $\mathcal{N}_-$\,.
\end{proof}

\begin{theorem}\label{Ginvariantoperator}
Let $T\colon\D(T)\subset\H\to\H$ be a closed, symmetric and $G$-invariant operator with equal deficiency indices
(cf., Definition~\ref{def:deficiencyspaces}).
Let $T_K$ be the self-adjoint extension of $T$ defined by the unitary $K\colon\mathcal{N}_+\to\mathcal{N}_-$\,.
Then $T_K$ is $G$-invariant iff $V(g)K \xi=K V(g) \xi$ for all $\xi\in\mathcal{N_+}$\,, $g\in G$\,.
\end{theorem}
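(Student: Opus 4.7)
The plan is to combine the explicit parametrization of $\D(T_K)$ given in Theorem~\ref{thmvonNeumann} with the invariance $V(g)\mathcal{N}_\pm=\mathcal{N}_\pm$ from Corollary~\ref{cor:invariantdeficiency}. The crucial structural input I would use is the uniqueness of the decomposition $\Psi=\Phi+(\mathbb{I}+K)\xi$ with $\Phi\in\D(T)$ and $\xi\in\mathcal{N}_+$, or equivalently the first von Neumann formula asserting that the sum $\D(T)+\mathcal{N}_++\mathcal{N}_-$ inside $\D(T^\dagger)$ is an algebraic direct sum; I would invoke this as a classical fact underlying Theorem~\ref{thmvonNeumann}.

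For the sufficient direction, I would assume $V(g)K\xi=KV(g)\xi$ for every $\xi\in\mathcal{N}_+$ and $g\in G$ and pick an arbitrary $\Psi=\Phi+(\mathbb{I}+K)\xi\in\D(T_K)$. Commuting $V(g)$ past $K$ then yields
\[
V(g)\Psi=V(g)\Phi+V(g)\xi+V(g)K\xi=V(g)\Phi+(\mathbb{I}+K)V(g)\xi,
\]
where $V(g)\Phi\in\D(T)$ by $G$-invariance of $T$ and $V(g)\xi\in\mathcal{N}_+$ by Corollary~\ref{cor:invariantdeficiency}. Hence $V(g)\Psi\in\D(T_K)$, and the explicit formula for $T_K$ combined with $TV(g)\Phi=V(g)T\Phi$ immediately gives $T_KV(g)\Psi=V(g)T_K\Psi$.

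For the necessary direction, I would fix $\xi\in\mathcal{N}_+$ and exploit that $(\mathbb{I}+K)\xi\in\D(T_K)$. The assumed $G$-invariance of $T_K$ forces $V(g)(\mathbb{I}+K)\xi=V(g)\xi+V(g)K\xi\in\D(T_K)$, so there must exist $\Phi'\in\D(T)$ and $\eta\in\mathcal{N}_+$ with
\[
V(g)\xi+V(g)K\xi=\Phi'+\eta+K\eta.
\]
By Corollary~\ref{cor:invariantdeficiency}, the left-hand side splits as a piece in $\mathcal{N}_+$ plus a piece in $\mathcal{N}_-$, while on the right-hand side $\Phi'\in\D(T)$, $\eta\in\mathcal{N}_+$ and $K\eta\in\mathcal{N}_-$. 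Uniqueness of the direct-sum decomposition $\D(T)+\mathcal{N}_++\mathcal{N}_-$ then forces $\Phi'=0$, $\eta=V(g)\xi$ and $K\eta=V(g)K\xi$, which is exactly the claim $KV(g)\xi=V(g)K\xi$.

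The only delicate point in this scheme is the appeal to uniqueness of the von Neumann decomposition; once that is granted, both implications reduce to routine bookkeeping with the parametrization of $T_K$ already supplied by Theorem~\ref{thmvonNeumann}.
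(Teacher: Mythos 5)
Your proof is correct. The sufficiency direction coincides with the paper's argument: commute $V(g)$ past $(\mathbb{I}+K)$, use Corollary~\ref{cor:invariantdeficiency} to keep $V(g)\xi$ in $\mathcal{N}_+$, and use $T_K\subset T^\dagger$ together with Proposition~\ref{adjointinvariant} for the intertwining relation. The necessity direction, however, is genuinely different. The paper conjugates: it forms $K'=V(g)KV(g)^\dagger$, checks that $\D(T_{K'})=V(g)\D(T_K)=\D(T_K)$, and then invokes the \emph{injectivity} of the von Neumann correspondence $K\mapsto T_K$ to conclude $K'=K$. You instead apply $V(g)$ to the single vector $(\mathbb{I}+K)\xi$ and read off the components against the first von Neumann formula $\D(T^\dagger)=\D(T)\dotplus\mathcal{N}_+\dotplus\mathcal{N}_-$, whose directness forces $\eta=V(g)\xi$ and $K\eta=V(g)K\xi$. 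Both routes lean on a classical fact not stated explicitly in Theorem~\ref{thmvonNeumann} as quoted (you flag yours honestly; the paper's injectivity claim is likewise implicit in the phrase ``one-to-one correspondence'' and ultimately rests on the same decomposition). Your version is slightly more economical in that it works vectorwise and never needs to verify that $K'$ is again a unitary from $\mathcal{N}_+$ onto $\mathcal{N}_-$, while the paper's version has the minor advantage of packaging the whole necessity argument as a statement about the parametrization map itself. Either is acceptable; no gap.
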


\begin{proof}
To show the direction ``$\Leftarrow$'' recall that by
Theorem~\ref{thmvonNeumann} the domain of $T_K$ is given by
$$\D(T_K)=\D(T)\oplus\bigl(\mathbb{I}+K \bigr)\mathcal{N}_+\;.$$ Let $\Psi\in\D(T)$ and $\xi\in\mathcal{N}_+$\,. Then we have that
\begin{align*}
V(g)\bigl( \Psi + \bigl(\mathbb{I}+K\bigr)\xi \bigr)&=V(g)\Psi+ \bigl(V(g)+V(g)K\bigr)\xi\\
&=V(g)\Psi + \bigl(V(g)+KV(g)\bigr)\xi\\
&=V(g)\Psi+ \bigl(\mathbb{I}+K\bigr)V(g)\xi\;.
\end{align*}
By assumption $V(g)\Psi\in\D(T)$, and by Corollary \ref{cor:invariantdeficiency}, $V(g)\xi\in\mathcal{N_+}$\,. Hence $V(g)\D(T_K)\subset\D(T_K )$\,.
Moreover, we have that for $\Phi=\Psi+(\mathbb{I}+K)\xi\in\mathcal{D}(T_K)$
\begin{align*}
T_KV(g)\Phi&=T^\dagger V(g)\bigl(\Psi+\bigl(\mathbb{I}+K\bigr)\xi\bigr)\\
&=TV(g)\Psi+T^\dagger V(g)\bigl(\mathbb{I}+K\bigr)\xi\\
&=V(g)T\Psi + V(g)T^\dagger\bigl(\mathbb{I}+K\bigr)\xi=V(g)T_K\Phi\;,
\end{align*}
where we have used Proposition \ref{adjointinvariant}.

To prove the reverse implication ``$\Rightarrow$'' suppose that we have the self-adjoint extension defined by the unitary
$$K'=V(g)KV(g)^\dagger\;.$$
If we consider the domain $\D(T_{K'})$ defined by this unitary we have that
\begin{align*}
\D(T_{K'})&=\D(T) + \bigl(\mathbb{I}+V(g)KV(g)^\dagger\bigr)\mathcal{N}_+\\
&=V(g)\D(T) + V(g)\bigl(\mathbb{I}+K\bigr)V(g)^\dagger\mathcal{N}_+\\
&=V(g)\D(T_K)=\D(T_K)\;,
\end{align*}
where we have used again Proposition \ref{adjointinvariant}, Corollary~\ref{cor:invariantdeficiency} and 
(\ref{subsetequal}).
Now von Neumann's result stated in Theorem~\ref{thmvonNeumann} establishes a one-to-one correspondence between isometries
$K\colon\mathcal{N}_+\to\mathcal{N}_-$ and self-adjoint extensions of the operator $T$\,.
Therefore $K=K'=V(g)KV(g)^\dagger$ and the statement follows.
\end{proof}


\section{Invariant quadratic forms}
\label{sec:InvariantQF}

As mentioned in the first two sections the relation between closed, semi-bounded quadratic forms and self-adjoint operators
is realized through the so-called representation theorems.
We present here a notion of $G$-invariant quadratic form and prove a
representation theorem for $G$-invariant structures.

\begin{definition}
Let $Q$ be a quadratic form with domain $\D$ and let $V:G\to\mathcal{U}(\H)$ be a unitary representation of the group $G$\,.
We will say that the quadratic form is \textbf{$G$-invariant} if $V(g)\D\subset\D$  for all $g\in G$ and
$$Q(V(g)\Phi)=Q(\Phi)\quad\forall \Phi\in\D, \forall g\in G\;.$$
\end{definition}

It is clear by the polarization identity that if the quadratic form $Q$ is $G$-invariant,
then the associated sesquilinear form also satisfies $Q(V(g)\Phi,V(g)\Psi)=Q(\Phi,\Psi)$\,, $g\in G$.

We will now relate the notions of $G$-invariance for self-adjoint operators and for quadratic forms.

\begin{theorem}\label{QAinvariant}
Let $Q$ be a closed, semi-bounded quadratic form with domain $\D$ and let $T$ be the representing semi-bounded, self-adjoint operator.
The quadratic form $Q$ is $G$-invariant iff the operator $T$ is $G$-invariant.
\end{theorem}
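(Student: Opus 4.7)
The plan is to prove both implications by moving between the operator level (where $G$-invariance involves $\D(T)$) and the form level (where invariance involves the possibly larger domain $\D$), using crucially that $\D(T)$ is a form core for $Q$, i.e., dense in $\D$ with respect to $\normm{\cdot}_Q$.

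For the direction \emph{$Q$ is $G$-invariant $\Rightarrow$ $T$ is $G$-invariant}, I would argue directly via the representation theorem (Theorem~\ref{fundteo}). Let $\Psi\in\D(T)\subset\D$. By polarization, $G$-invariance of $Q$ gives $Q(V(g)\Phi,V(g)\Psi)=Q(\Phi,\Psi)$ for all $\Phi,\Psi\in\D$ and all $g\in G$. In particular, for any $\Phi\in\D$,
\begin{equation*}
Q(\Phi,V(g)\Psi)=Q(V(g^{-1})\Phi,\Psi)=\scalar{V(g^{-1})\Phi}{T\Psi}=\scalar{\Phi}{V(g)T\Psi}\;,
\end{equation*}
where I used that $V(g^{-1})\Phi\in\D$ and that $\Psi\in\D(T)$. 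By the characterization of $\D(T)$ in Theorem~\ref{fundteo}(i), this shows $V(g)\Psi\in\D(T)$ and $TV(g)\Psi=V(g)T\Psi$, which is exactly $G$-invariance of $T$.

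For the reverse direction \emph{$T$ is $G$-invariant $\Rightarrow$ $Q$ is $G$-invariant}, I would first verify the claim on the core $\D(T)$ and then extend by continuity. On $\D(T)$, using $Q(\Phi)=\scalar{\Phi}{T\Phi}$, unitarity of $V(g)$, and $T$-invariance,
\begin{equation*}
Q(V(g)\Phi)=\scalar{V(g)\Phi}{TV(g)\Phi}=\scalar{V(g)\Phi}{V(g)T\Phi}=\scalar{\Phi}{T\Phi}=Q(\Phi)\;,\quad \Phi\in\D(T)\;.
\end{equation*}
Combining this identity with unitarity of $V(g)$ on $\H$, for $\Phi\in\D(T)$ one obtains $\normm{V(g)\Phi}_Q=\normm{\Phi}_Q$. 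Now pick an arbitrary $\Phi\in\D$. Since $\D(T)$ is a form core, choose $\Phi_n\in\D(T)$ with $\normm{\Phi_n-\Phi}_Q\to 0$. Then $V(g)\Phi_n\in\D(T)$, and the isometry computed above shows $(V(g)\Phi_n)_n$ is $\normm{\cdot}_Q$-Cauchy. Because $Q$ is closed, $\D$ is complete in $\normm{\cdot}_Q$, so the sequence converges in $\normm{\cdot}_Q$ to some $\Xi\in\D$; since it also converges to $V(g)\Phi$ in $\H$ (by continuity of $V(g)$ with respect to $\norm{\cdot}$), we must have $\Xi=V(g)\Phi\in\D$. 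Continuity of $Q$ in $\normm{\cdot}_Q$ then yields $Q(V(g)\Phi)=\lim Q(V(g)\Phi_n)=\lim Q(\Phi_n)=Q(\Phi)$.

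The main obstacle is the closure step in the second implication: $G$-invariance of $T$ only gives information on $\D(T)$, whereas $G$-invariance of $Q$ is a statement about the strictly larger domain $\D$. The key is to exploit that $\D(T)$ is a form core and that $V(g)$ acts isometrically in both the ambient $\H$-norm and, by the calculation on $\D(T)$, in the graph-type norm $\normm{\cdot}_Q$; completeness of $(\D,\normm{\cdot}_Q)$ then produces the candidate limit, and uniqueness of $\H$-limits identifies it as $V(g)\Phi$, simultaneously giving $V(g)\D\subset\D$ and the invariance of $Q$.
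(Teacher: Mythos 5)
Your argument is correct and follows essentially the same route as the paper's proof: the forward direction uses the characterization of $\D(T)$ from Theorem~\ref{fundteo} together with $Q(\Phi,V(g)\Psi)=Q(V(g^{-1})\Phi,\Psi)$, and the reverse direction first establishes invariance on the form core $\D(T)$ and then extends by a $\normm{\cdot}_Q$-Cauchy-sequence argument using closedness of $Q$. Your identification of the $\normm{\cdot}_Q$-limit with $V(g)\Phi$ via uniqueness of limits in $\H$ is, if anything, slightly more explicit than the paper's.
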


\begin{proof}
To show the direction ``$\Rightarrow$'' recall from
Theorem~\ref{fundteo} that $\Psi\in\D(T)$ iff $\Psi\in\D$ and there exists $\chi\in\H$ such that
$$Q(\Phi,\Psi)=\scalar{\Phi}{\chi}\quad\forall \Phi\in\D\;.$$
Then, if $\Psi\in\D(T)$, and using the $G$-invariance of the quadratic form, we have that
\begin{align*}
Q(\Phi,V(g)\Psi)&=Q(V(g)^\dagger\Phi,\Psi)\\
                &=\scalar{V(g)^\dagger\Phi}{\chi}=\scalar{\Phi}{V(g)\chi}\;.
\end{align*}
This implies that $V(g)\Psi\in\D(T)$ and from
$$TV(g)\Psi=V(g)\chi=V(g)T\Psi\,,\quad \Psi\in\D(T)\,,g\in G\;,$$
we show the $G$-invariance of the self-adjoint operator $T$.

For the reverse implication ``$\Leftarrow$''
we use the fact that $\D(T)$ is a core for the quadratic form.
For $\Phi,\Psi\in\D(T)$ we have that
\begin{align*}
 Q(\Phi,\Psi)&= \scalar{\Phi}{T\Psi}=\scalar{V(g)\Phi}{V(g)T\Psi} \\
             &= \scalar{V(g)\Phi}{TV(g)\Psi}=Q(V(g)\Phi,V(g)\Psi)\;.
\end{align*}

These equalities show that the $G$-invariance of $Q$ is true at least for the elements in the domain of the operator.
Now for any $\Psi\in\D$ there is a sequence
$\{\Psi_n\}_n\subset\D(T)$ such that $\normm{\Psi_n-\Psi}_Q\to 0$\,. This, together with the equality above,
implies that $\{V(g)\Psi_n\}_n$ is a Cauchy sequence with respect to $\normm{\cdot}_{Q}$\,. Since $Q$ is closed,
the limit of this sequence is in $\D$\,. Moreover it is clear that $\lim_{n\to\infty}V(g)\Psi_n=V(g)\Psi$\,,
hence
$$\normm{V(g)\Psi_n-V(g)\Psi}_Q\to0\,.$$

So far we have proved that $V(g)\D\subset\D$. Now for any $\Phi,\Psi\in\D$ consider
sequences $\{\Phi_n\}_n,\{\Psi_n\}_n\subset\D(T)$ that converge respectively to $\Phi,\Psi\in\D$ in the topology induced by $\normm{\cdot}_{Q}$\,. Then the limit
\begin{align*}
Q(\Phi,\Psi)&=\lim_{n\to\infty}\lim_{m\to\infty}Q(\Phi_n,\Psi_m)\\
            &=\lim_{n\to\infty}\lim_{m\to\infty}Q(V(g)\Phi_n,V(g)\Psi_m)=Q(V(g)\Phi,V(g)\Psi)\;.
\end{align*}
concludes the proof.
\end{proof}

The preceding result and Theorem~\ref{pro:rep} allow to give the following characterization of
representable $G$-invariant quadratic forms.

\begin{proposition}\label{InvariantRepresentation}
Let $Q$ be a $G$-invariant, semi-bounded quadratic form with lower bound $m$ and domain $\D$. The following statements are equivalent:
\begin{enumerate}
\item[(i)] There is a $G$-invariant, self-adjoint operator $T$ on $\D(T)\subset\H$ with lower semi-bound $m$ and that represents the quadratic form,
i.e., $$Q(\Phi,\Psi)=\scalar{\Phi}{T\Psi}\quad\forall \Phi\in\D,\forall \Psi\in \D(T)\;.$$
\item[(ii)] The domain $\D$ of the quadratic form is complete in the norm $\normm{\cdot}_{Q}$.
\end{enumerate}
\end{proposition}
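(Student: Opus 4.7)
The plan is to deduce this proposition directly by combining the two main ingredients already available in the paper: the (non-invariant) representation criterion of Theorem~\ref{pro:rep}, which characterizes representable semi-bounded quadratic forms via completeness in the form norm $\normm{\cdot}_Q$, and Theorem~\ref{QAinvariant}, which asserts that $G$-invariance of a closed semi-bounded quadratic form is equivalent to $G$-invariance of its representing self-adjoint operator. The bulk of the argument therefore reduces to bookkeeping; no new analytic estimate is needed.

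For the direction (i) $\Rightarrow$ (ii), I would start from the assumption that there exists a self-adjoint, semi-bounded operator $T$ with lower bound $m$ representing $Q$ on $\D$. Ignoring for a moment the $G$-invariance of $T$, Theorem~\ref{pro:rep} immediately yields that $\D$ is complete in $\normm{\cdot}_Q$, which is precisely (ii). The $G$-invariance hypothesis on $T$ plays no role in this implication; it is automatically propagated to $Q$ by Theorem~\ref{QAinvariant}, consistent with the standing hypothesis that $Q$ is $G$-invariant.

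For the direction (ii) $\Rightarrow$ (i), the completeness of $\D$ in $\normm{\cdot}_Q$ says exactly that the semi-bounded quadratic form $Q$ is closed. Invoking Theorem~\ref{pro:rep} (or equivalently Theorem~\ref{fundteo}) produces a unique self-adjoint operator $T$ on some dense domain $\D(T)\subset\D\subset\H$ with lower bound $m$ representing $Q$, i.e.\ $Q(\Phi,\Psi)=\scalar{\Phi}{T\Psi}$ for all $\Phi\in\D$ and $\Psi\in\D(T)$. At this point I would apply Theorem~\ref{QAinvariant} to the closed semi-bounded form $Q$: since $Q$ is $G$-invariant by hypothesis, its representing self-adjoint operator $T$ is $G$-invariant in the sense of Definition~\ref{def:G-inv}, which establishes (i).

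I do not anticipate any genuine obstacle; the only subtle point is to make sure that the notion of ``closed'' used in Theorem~\ref{QAinvariant} matches the completeness condition (ii) of the present proposition. This is exactly the content of the definition of a closed quadratic form, so the two usages agree verbatim and the two theorems can indeed be chained as described.
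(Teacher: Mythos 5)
Your proposal is correct and follows exactly the route the paper intends: the paper gives no separate proof of this proposition, stating only that it follows from Theorem~\ref{QAinvariant} together with Theorem~\ref{pro:rep}, which is precisely the chaining you carry out (completeness in $\normm{\cdot}_Q$ equals closedness, Theorem~\ref{fundteo}/\ref{pro:rep} supplies the representing operator, and Theorem~\ref{QAinvariant} transfers the $G$-invariance in both directions).
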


To conclude this section we make contact with the theory of scales of Hilbert spaces introduced in Section~\ref{sec:scalesHs}.
Let $Q$ be a closed, semi-bounded quadratic form with domain $\D\subset\H$. We will show that if $Q$ is $G$-invariant then one
can automatically produce unitary representations $V_\pm$ on the natural scale of Hilbert spaces
$$
 \mathcal{H}_+\subset \mathcal{H} \subset\mathcal{H}_-\;,
$$
where $\mathcal{H}_+:=\D$.

\begin{theorem}\label{thm:unit-scales}
Let $Q$ be a closed, semi-bounded, $G$-invariant quadratic form with lower bound $m$. Then
\begin{enumerate}
\item $V$ restricts to a unitary representation $V_+$ on $\mathcal{H}_+:=\mathcal{D}\subset\mathcal{H}$ with scalar product given by
$$\scalar{\Phi}{\Psi}_+:=\scalar{\Phi}{\Psi}_Q=(1+m)\scalar{\Phi}{\Psi}+Q(\Phi,\Psi)\,,\quad \Phi,\Psi\in\H_+\;.$$

\item $V$ extends to a unitary representation $V_-$ on $\mathcal{H}_-$ and we have, on $\mathcal{H}_-$\,,
\begin{equation}\label{eq:commute-I}
V_+(g) I=I V_-(g)\;,\quad g\in G\;,
\end{equation}
where $I\colon \mathcal{H}_-\to \mathcal{H}_+$ is the canonical isometric bijection of Definition \ref{def:scales}.
	\end{enumerate}
\end{theorem}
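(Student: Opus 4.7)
The plan is to establish (1) first by verifying directly that the restriction of $V(g)$ to $\mathcal{H}_+=\D$ is unitary with respect to $\scalar{\cdot}{\cdot}_+$, and then to use the duality intrinsic to the scale of Hilbert spaces to transport this to an action on $\mathcal{H}_-$.

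For part (1), I would proceed as follows. The $G$-invariance of $Q$ gives $V(g)\D\subset\D$ by hypothesis, so $V_+(g):=V(g)|_{\mathcal{H}_+}$ maps $\mathcal{H}_+$ into itself. Polarizing the identity $Q(V(g)\Phi)=Q(\Phi)$ yields $Q(V(g)\Phi,V(g)\Psi)=Q(\Phi,\Psi)$ for all $\Phi,\Psi\in\D$. Combined with the unitarity of $V(g)$ on $\mathcal{H}$ one gets
\[
\scalar{V_+(g)\Phi}{V_+(g)\Psi}_+=(1+m)\scalar{\Phi}{\Psi}+Q(\Phi,\Psi)=\scalar{\Phi}{\Psi}_+\,,
\]
so $V_+(g)$ is $\normm{\cdot}_Q$-isometric. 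Applying the same argument to $g^{-1}$ gives $V(g^{-1})\D\subset\D$, hence $V_+(g)V_+(g^{-1})=\1$ on $\mathcal{H}_+$, yielding surjectivity. The homomorphism property is inherited from $V$, so $V_+\colon G\to\mathcal{U}(\mathcal{H}_+)$ is a unitary representation.

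For part (2), the key technical step is to identify $\hat{I}\circ V(g)$ on $\mathcal{H}$. Given $\Phi\in\mathcal{H}$ and $\Psi\in\mathcal{H}_+$, I would compute
\[
\scalar{V(g)\Phi}{\Psi}=\scalar{\Phi}{V(g^{-1})\Psi}=\scalar{\hat{I}\Phi}{V_+(g^{-1})\Psi}_+=\scalar{V_+(g)\hat{I}\Phi}{\Psi}_+\,,
\]
where in the middle step I use the defining property of $\hat{I}$ (noting $V_+(g^{-1})\Psi\in\mathcal{H}_+$) and in the last step the unitarity of $V_+(g)$ on $\mathcal{H}_+$ established above. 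By density of $\mathcal{H}_+$ in $\mathcal{H}_+$ (tautologically) and the Riesz representation underpinning $\hat{I}$, this forces $\hat{I}(V(g)\Phi)=V_+(g)\hat{I}(\Phi)$. Consequently
\[
\norm{V(g)\Phi}_-=\norm{\hat{I}V(g)\Phi}_+=\norm{V_+(g)\hat{I}\Phi}_+=\norm{\hat{I}\Phi}_+=\norm{\Phi}_-\,,
\]
so $V(g)$ is $\norm{\cdot}_-$-isometric on the dense subspace $\mathcal{H}\subset\mathcal{H}_-$ and therefore extends uniquely by continuity to a bounded operator $V_-(g)$ on $\mathcal{H}_-$. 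Surjectivity and the representation property transfer to the closure, so $V_-\colon G\to\mathcal{U}(\mathcal{H}_-)$ is a unitary representation. Finally, the intertwining $\hat{I}V(g)=V_+(g)\hat{I}$ on $\mathcal{H}$ extends by continuity to the canonical isometry $I\colon\mathcal{H}_-\to\mathcal{H}_+$, giving $I V_-(g)=V_+(g)I$, which is (\ref{eq:commute-I}).

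The only real obstacle is the bookkeeping in part (2): one must be careful that $V_+(g^{-1})\Psi$ lies in $\mathcal{H}_+$ (so that the defining property of $\hat{I}$ can be applied) and that the uniqueness statement implicit in the Riesz-type identification of $\hat{I}\Phi$ is used correctly. Everything else is a direct transfer of the unitarity of $V$ from $\mathcal{H}$ to the two neighboring spaces in the scale via the $G$-invariance of $Q$.
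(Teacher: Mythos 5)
Your proposal is correct and takes essentially the same approach as the paper: part (1) is the paper's argument verbatim, and in part (2) the central computation --- pushing $\scalar{V(g)\Phi}{\Psi}=\scalar{\Phi}{V(g^{-1})\Psi}$ through the Riesz identity defining $\hat{I}$ and the unitarity of $V_+$ --- is exactly the chain of equalities the paper writes using the pairing $\pair{\cdot}{\cdot}$. The only (harmless) difference is the order of construction: the paper defines $V_-(g):=I^{-1}V_+(g)I$ on all of $\H_-$ first and then checks that it restricts to $V(g)$ on $\H$, whereas you first establish the intertwining $\hat{I}V(g)=V_+(g)\hat{I}$ on $\H$ and then extend $V(g)$ to $\H_-$ by continuity.
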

\begin{proof}
(i) To show that the representation $V_+:=V|_{\H_+}$ is unitary with respect to $\scalar{\cdot}{\cdot}_+$
    note that by definition of $G$-invariance of the quadratic form we have
    for any $g\in G$ that $V(g)\colon\mathcal{H}_+\to \mathcal{H}_+$ and
$$
 \scalar{V(g)\Phi}{V(g)\Psi}_+=\scalar{\Phi}{\Psi}_+\quad \Phi,\Psi\in\H_+\,.
$$
Since any $V(g)$ is invertible we conclude that $V$ restricts to a unitary representation on $\mathcal{H}_+$\,.

(ii) To show that $V$ extends to a unitary representation $V_-$ on $\H_-$ consider first the following representation of $G$ on $\H_-$\,:
$$
V_-(g):=I^{-1}V_+(g)I\;.
$$
We show first that this representation is unitary: since $V_-$ is invertible it is enough to check the
isometry condition using part (i). Indeed, for any $\alpha,\beta\in\H_-$ we have
\begin{align*}
\scalar{V_-(g)\alpha}{V_-(g)\beta}_-&=\scalar{I^{-1}V_+(g)I\alpha}{I^{-1}V_+(g)I\beta}_-\\
&=\scalar{V_+(g)I\alpha}{V_+(g)I\beta}_+=\scalar{I\alpha}{I\beta}_+=\scalar{\alpha}{\beta}_-\;.
\end{align*}
The restriction of $V_-(g)$, $g\in G$, to $\H$ coincides with $V(g)$. Indeed, consider the pairing $\pair{\cdot}{\cdot}\colon\H_-\times\H_+\to\mathbb{C}$ of Proposition~\ref{proppairing} and 
let $\Phi\in\H\subset\H_-$. Then for all $\Psi\in\H_+$
\begin{align*}
\pair{V_-(g)\Phi}{\Psi}&=\pair{I^{-1}V_+(g)I\Phi}{\Psi}\\
&=\scalar{V(g)_+I\Phi}{\Psi}_+\\
&=\scalar{I\Phi}{V_+(g^{-1})\Psi}_+\\
&=\pair{\Phi}{V_+(g^{-1})\Psi}\\
&=\scalar{\Phi}{V(g^{-1})\Psi}\\
&=\scalar{V(g)\Phi}{\Psi}=\pair{V(g)\Phi}{\Psi}\;.
\end{align*}
Since $V_-(g)$ is a bounded operator in $\H_-$ and $\H$ is dense in $\H_-$, $V_-(g)$ is the extension of $V(g)$ to $\H_-$\,.
\end{proof}

\section{Reduction theory}\label{sec:reduction}

The aim of this section is to provide an alternative point of view for the notion of $G$-invariance of operators
(cf., Section~\ref{sec:vonNeumannSymmetry}) in terms of von Neumann algebras. Based on this approach we will
address some reduction issues of the unbounded operator in terms of the reducibility of unitary representation $V$
implementing the quantum symmetry.

Recall that a von Neumann algebra $\A$ is a unital *-subalgebra of $\mathcal{L}(\H)$
(the set of bounded linear operators in $\H$) which is closed in the weak operator topology.
Even if a von Neumann algebra consists only of bounded linear operators acting on a
Hilbert space, this class of operator algebras can be related in a natural way to closed, unbounded and densely defined
operators. In fact, von Neumann introduced these algebras in 1929 and proved the celebrated bicommutant theorem,
when he extended the spectral theorem to closed, unbounded normal operators in a
Hilbert space (cf., \cite{vNeumann29}). Since then, the 
notion of affiliation of an unbounded operator to an operator algebra has been applied in different situations
(see, e.g., \cite{Woronowicz91,Woronowicz92,BY90}).

Let $\mathcal{S}$ be a self-adjoint subset of
$\mathcal{L}(\H)$, i.e., if $S\in\mathcal{S}$, then $S^*\in\mathcal{S}$. 
We denote by $\mathcal{S}'$ the commutant of $\mathcal{S}$
in $\mathcal{L}(\H)$, i.e., the set of all bounded and linear operators on
$\H$ commuting with all operators in $\mathcal{S}$. It is a fact that
$\mathcal{S}'$ is a von Neumann algebra and that the corresponding
bicommutant $\mathcal{S}'':=(\mathcal{S}')'$ is the smallest von Neumann algebra
containing $\mathcal{S}$, i.e., $\mathcal{S}''$ is the von Neumann algebra generated by
the set $\mathcal{S}\subset\mathcal{L}(\H)$. We refer to Sections~4.6 and 5.2 of
\cite{Pedersen89} for further details and proofs.

The definition of commutant of a densely defined unbounded operator $T$ is more delicate since
one has to take into account the domains. The following definition generalizes the notion of commutant
mentioned before.

\begin{definition}\label{def:com}
Let $T\colon\mathcal{D}(T)\subset\mathcal{H}\to \H$ be a closed, densely defined operator. The commutant of $T$
is given by 
\[
 \{T\}':=\{A\in\mathcal{L}(\mathcal{H})\mid A\mathcal{D}(T)\subset \mathcal{D}(T)\quad\mathrm{and}\quad
           TA\Phi=AT\Phi\;,\; \Phi\in\mathcal{D}(T)\}\;,
\]
i.e., $A\in\{T\}'$ if $TA\supseteq AT$.
\end{definition}

Since $T$ is a closed operator we have that $\{T\}'\cap \{T^\dagger\}'$ is a von Neumann algebra in $\mathcal{L}(\mathcal{H})$.
We denote the von Neumann algebra associated to the bounded components of $T$ as
\[
 W^*(T):=\left(\{T\}'\cap \{T^\dagger\}'\right)'\subset\mathcal{L}(\mathcal{H})\;.
\]
In particular, if $T$ is self-adjoint, the spectral projections of $T$ are contained in 
$W^*(T)$.
\begin{definition}\label{def:affi}
The closed, densely defined operator $T\colon\mathcal{D}(T)\subset\mathcal{H}\to \H$ is {\bf affiliated to a von Neumann algebra}
$\mathcal A\subset\mathcal{L}(\mathcal{H})$ (and we write this as $T a \mathcal A$) if
\[
 W^*(T)\subset\mathcal{A}\;.
\]
\end{definition}

\begin{remark}
There are different equivalent characterizations of the notion of affiliation: 
$Ta \mathcal A$ iff $\{T\}'\cap \{T^\dagger\}'\supset\mathcal{A}'$. In particular, this implies that
$TA'\supseteq A'T$, $A'\in\mathcal{A}'$, i.e., $A' \mathcal{D}(T)\subset \mathcal{D}(T)$ and 
$TA'\Phi= A'T\Phi$ for all $A'\in\mathcal{A}'$, $\Phi\in\mathcal{D}(T)$.
If $T$ is an (unbounded) self-adjoint operator, then
$W^*(T)$ coincides with the von Neumann algebra $\mathcal{C}(T)''$ generated by the Cayley transform of $T$. 
Recall that the Cayley transform
\[
 \mathcal{C}(T):=(\1-i T)(\1+iT)^{-1}\in\mathcal{U}(\mathcal{H})
\]
is a unitary that can be associated with $T$. We conclude that $T a \mathcal{A}$ iff $\mathcal{C}(T)\in\mathcal{A}$.

Finally, note that if $T$ is a bounded operator then, $W^*(T)=\{T,T^\dagger\}''$ is the von Neumann algebra generated by
$T, T^\dagger$ and $T a \mathcal A$ iff $T, T^\dagger\in\mathcal{A}$. 
\end{remark}

In the following result we will give a useful characterization of $G$-invariance for symmetric operators
in terms of the affiliation to the commutant of the quantum symmetry.

\begin{proposition}\label{pro:affiliate}
Let $T\colon\mathcal{D}(T)\subset\mathcal{H}\to \H$ be a closed, symmetric operator.
Then, $T$ is $G$-invariant iff $T a \mathcal{V}'$,
where $\mathcal V$ is the von Neumann algebra generated by $\{V(g)\mid g\in G\}$
(i.e.,
$
 \mathcal{V}=\{V(g)\mid g\in G\}''\subset\mathcal{L}(\mathcal{H})
$) and $\mathcal{V}'$ its commutant. Moreover, any $G$-invariant
self-adjoint extension of $T$ is also affiliated to $\mathcal{V}'$.
\end{proposition}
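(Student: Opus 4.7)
The plan is to translate both sides of the claimed equivalence into statements about commutants, and then to invoke Proposition~\ref{adjointinvariant} as the key input for the nontrivial direction. By Definition~\ref{def:com}, the $G$-invariance of $T$ is precisely the condition $V(g)\in\{T\}'$ for every $g\in G$. Using the characterization of affiliation recalled in the remark after Definition~\ref{def:affi}, the condition $T a \mathcal{V}'$ is equivalent to $\mathcal{V}=(\mathcal{V}')'\subset\{T\}'\cap\{T^\dagger\}'$ (note $\mathcal{V}$ is already a von Neumann algebra), or, taking commutants again, to $W^*(T)\subset\mathcal{V}'$.

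With this reformulation the implication $T a \mathcal{V}'\Rightarrow T$ is $G$-invariant is immediate: $V(g)\in\mathcal{V}\subset\{T\}'$ for every $g\in G$. For the converse, assume $T$ is $G$-invariant. Proposition~\ref{adjointinvariant} then shows that $T^\dagger$ is also $G$-invariant, so $V(g)\in\{T^\dagger\}'$ for every $g$, and therefore $V(g)\in\{T\}'\cap\{T^\dagger\}'$. Since $\{T\}'\cap\{T^\dagger\}'$ is a von Neumann algebra, it must contain the von Neumann algebra $\mathcal{V}=\{V(g)\mid g\in G\}''$, giving the inclusion $\mathcal{V}\subset\{T\}'\cap\{T^\dagger\}'$ as required. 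The final assertion then follows by applying the equivalence just established to any $G$-invariant self-adjoint extension $\widetilde T$ of $T$: such a $\widetilde T$ is in particular closed and symmetric, so $\widetilde T a \mathcal{V}'$.

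The only substantive step in this argument is the passage from $V(g)\in\{T\}'$ to $V(g)\in\{T^\dagger\}'$, which is exactly the content of Proposition~\ref{adjointinvariant}. Once that is in hand, the rest is routine manipulation with commutants and double commutants; no new analytic input about the domain, spectrum, or deficiency spaces of $T$ is required.
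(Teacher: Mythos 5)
Your proof is correct and follows essentially the same route as the paper: both directions reduce to the commutant reformulation of affiliation, with Proposition~\ref{adjointinvariant} supplying the inclusion $V(g)\in\{T^\dagger\}'$ needed for the forward implication, and the final assertion obtained by applying the equivalence to the extension. The only cosmetic difference is that you state the converse as $\mathcal{V}\subset\{T\}'\cap\{T^\dagger\}'$ while the paper passes to commutants and writes $W^*(T)\subset\mathcal{V}'$; these are equivalent by the bicommutant theorem.
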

\begin{proof}

If $T a \mathcal{V}'$, then it is immediate that $T$ is $G$-invariant, since 
$(\mathcal{V}')'\subset\{T\}'\cap \{T^\dagger\}'$ and therefore the 
generators $\{V(g)\mid g\in G\}$ of the von Neumann algebra $\mathcal{V}=\mathcal{V}''$
satisfy $\{V(g)\mid g\in G\}\subset\{T\}'$. This gives the $G$-invariance of $T$ (cf.,
Definition~\ref{def:com}).

To show the reverse implication assume that $T$ is $G$-invariant according to 
Definition~\ref{def:G-inv}, i.e., $\{V(g)\mid g\in G\}\subset \{T\}'$.
From Proposition~\ref{adjointinvariant} we also have that
\[
 \{V(g)\mid g\in G\}\subset\{T\}'\cap \{T^\dagger\}'\;,
\]
hence
\[
 \mathcal{V}'=\{V(g)\mid g\in G\}'\supset\left(\{T\}'\cap \{T^\dagger\}'\right)'=W^*(T)\;,
\]
which implies that $T a \mathcal{V}'$. The same argument shows that any $G$-invariant, self-adjoint
extension of $T$ is also affiliated to $\mathcal{V}'$.
\end{proof}

We begin next with the analysis of the relation between the reducing subspaces of the 
quantum symmetry $V$ and those of the self-adjoint operator $T$ defined on the dense domain
$\mathcal{D}(T)$. The following result and part~(ii) of Theorem~\ref{thm:disjointirrep} are 
straightforward consequences of Schur's lemma.

\begin{lemma}\label{lem:irrep}
Let $T\colon\mathcal{D}(T)\subset\mathcal{H}\to \H$ be a self-adjoint operator. If $T$ is 
$G$-invariant with respect to a unitary, irreducible representation $V$ of $G$ on the Hilbert
space $\mathcal{H}$, then $T$ must be bounded and
\[
 T=i\left(\frac{\lambda-1}{\lambda+1}\right)\1\quad\mathrm{for~some}\quad \lambda\in\mathbb{T}\setminus \{-1\}\;.
\]
\end{lemma}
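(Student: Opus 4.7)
The plan is to combine Proposition~\ref{pro:affiliate} with Schur's lemma and then read off the form of $T$ from its Cayley transform.

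First, I would observe that since $T$ is self-adjoint and $G$-invariant, Proposition~\ref{pro:affiliate} gives $T \, a \, \mathcal{V}'$, where $\mathcal{V} = \{V(g) \mid g \in G\}''$ is the von Neumann algebra generated by the representation. Because $V$ is irreducible, Schur's lemma asserts that $\mathcal{V}' = \mathbb{C}\,\1$, so that every bounded operator commuting with all $V(g)$ is a scalar multiple of the identity.

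Next, I would invoke the characterization recorded in the remark after Definition~\ref{def:affi}: affiliation $T \, a \, \mathcal{A}$ is equivalent to membership of the Cayley transform, $\mathcal{C}(T) = (\1 - iT)(\1+iT)^{-1} \in \mathcal{A}$. Applying this to $\mathcal{A} = \mathcal{V}' = \mathbb{C}\,\1$, we conclude that $\mathcal{C}(T) = \lambda\,\1$ for some scalar $\lambda$. Since $\mathcal{C}(T)$ is unitary, necessarily $\lambda \in \mathbb{T}$. Moreover $\lambda \neq -1$, because the Cayley transform of a self-adjoint operator always satisfies $\operatorname{ran}(\1 + \mathcal{C}(T)) = \operatorname{ran}(2(\1+iT)^{-1}) = \mathcal{D}(T)$, which is dense; if $\lambda = -1$ then $\1 + \mathcal{C}(T) = 0$, contradicting density.

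Finally, I would invert the Cayley transform algebraically. The identity $\1 - iT = \lambda(\1 + iT)$ on $\mathcal{D}(T)$ gives $(1-\lambda)\,\1 = i(1+\lambda)T$, whence
\[
T \;=\; \frac{1-\lambda}{i(1+\lambda)}\,\1 \;=\; i\!\left(\frac{\lambda-1}{\lambda+1}\right)\1,
\]
which is bounded (indeed, a scalar operator) and defined on all of $\mathcal{H}$. The only potential subtlety is the $\lambda = -1$ exclusion, which is the main technical point, but it follows directly from the invertibility of $\1 + iT$ on $\mathcal{D}(T)$; the rest is an immediate application of Schur's lemma once one passes to the Cayley transform.
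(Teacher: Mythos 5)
Your proposal is correct and follows essentially the same route as the paper: Schur's lemma gives $\mathcal{V}'=\mathbb{C}\1$, affiliation via Proposition~\ref{pro:affiliate} forces $\mathcal{C}(T)=\lambda\1$ with $\lambda\in\mathbb{T}$, the case $\lambda=-1$ is excluded by density of $\mathcal{D}(T)=\operatorname{ran}(\1+iT)^{-1}$, and inverting the Cayley transform yields the scalar formula for $T$. Your justification of $\lambda\neq-1$ via $\1+\mathcal{C}(T)=2(\1+iT)^{-1}$ is a slightly more explicit rendering of the paper's isometry argument, but it is the same point.
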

\begin{proof}
Schur's lemma and the irreducibility of $V$ imply that $\mathcal{V}'=\mathbb{C}\1$. 
Moreover, by Proposition~\ref{pro:affiliate} and since the Cayley transform is a unitary and
$\mathcal{C}(T)\in\mathcal{V}'$ we have
\[
 \mathcal{C}(T)=(\1-iT)(\1+iT)^{-1}=\lambda\1\quad\textrm{for~some}\quad \lambda\in\mathbb{T}\;.
\]
The case $\lambda=-1$ is not possible since $\mathcal{C}(T)$ is an isometry of $(\1+iT)\mathcal{D}(T)$
onto $(\1-iT)\mathcal{D}(T)$ and $\mathcal{D}(T)$ is dense. Therefore $\mathcal{C}(T)=\lambda\1$
for some $\lambda\in\mathbb{T}\setminus \{-1\}$.
This implies that for any $\Phi\in\mathcal{D}(T)$ we have
\[
 T\Phi=i\left(\frac{\lambda-1}{\lambda+1}\right) \Phi\;.
\]
Since the right hand side of the previous equation is bounded we can extend
the formula for $T$ to the whole Hilbert space.
\end{proof}

To continue our analysis we have to define first in which sense an unbounded operator can 
be reduced by a closed subspace.
Roughly speaking, the reduction means that we can write $T$ as the sum of two
parts: one acting on the reducing subspace and one acting on its orthogonal complement.
The following definition generalizes the standard one for bounded operators and uses the 
notion of commutant of an unbounded operator as in Definition~\ref{def:com}.

\begin{definition}
Let $T\colon\mathcal{D}(T)\subset\mathcal{H}\to \H$ be a self-adjoint operator and
$\H_1$ be a closed subspace of the Hilbert space $\H$. We denote by $P_1$ the orthogonal projection
onto $\H_1$. The subspace $\H_1$ (or $P_1$) reduces $T$ if $P_1\in\{T\}'$, i.e., if
$P_1\D(T)\subset\D(T)$ and $TP_1\,\Phi= P_1T\,\Phi$, $\Phi\in\D(T)$.
\end{definition}

The previous definition implies that if $\H_1$ is reducing for $T$, then $P_1^\perp=\1-P_1$ is also 
reducing and $\D(T)=\D(T)\cap\H_1+\D(T)\cap\H_1^\perp$. Moreover, the subspace $\H_1$ is invariant in the sense that
\[
 T|(\D(T)\cap\H_1)\subset \H_1\quad\mathrm{and}\quad T|(\D(T)\cap\H_1^\perp)\subset \H_1^\perp\;.
\]
If $T$ is self-adjoint, then the spectral projections $E(\omega)$ (with $\omega$ Borel on 
the spectrum $\sigma(T)$) reduce $T$.

\begin{theorem}\label{thm:disjointirrep}
Let $G$ be a group and consider a unitary, reducible representation $V$ which decomposes as 
\[
 V=\mathop{\oplus}_{n=1}^N V_n \quad\mathrm{on}\quad \H=\mathop{\oplus}_{n=1}^N \H_n\;,\quad N\in\mathbb{N}\cup \{\infty\}\;,
\]
where the sub-representations $V_k$, $k=1,\dots, N$ are irreducible and mutually inequivalent.
Let $T\colon\mathcal{D}(T)\subset\mathcal{H}\to \H$ be a self-adjoint and $G$-invariant operator with respect to
the representation $V$. Then
\begin{itemize}
 \item[(i)] Any projection $P_k$ onto $\H_k$, $k=1,\dots, N$, is central in $\mathcal V$, (i.e., $P_k\in\mathcal{V}\cap\mathcal{V}'$), 
            and reduces the operator $T$, (i.e., $P_k\in\{T\}'$).
 \item[(ii)] If $N<\infty$, then $T$ must be a bounded operator and there exist $\lambda_k\in\mathbb{T}\setminus \{-1\}$, $k=1,\dots, N$, such that
\[
 T\cong i \;\mathrm{diag}\left( 
       \left(\frac{\lambda_1-1}{\lambda_1+1}\right)\1_{\mathcal{H}_{1}}, \dots, \left(\frac{\lambda_N-1}{\lambda_N+1}\right)\1_{\mathcal{H}_{N}}
                          \right)\;.
\]

\end{itemize}
\end{theorem}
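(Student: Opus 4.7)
The overall strategy is to first extract the structure of the von Neumann algebra $\mathcal V$ from the hypothesis that $V$ decomposes into disjoint irreducibles, and then combine it with the affiliation result of Proposition~\ref{pro:affiliate} and the irreducible case already handled in Lemma~\ref{lem:irrep}.

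For part (i), I would begin by computing $\mathcal V'$ using Schur's lemma. Since the subrepresentations $V_k$ are irreducible and pairwise inequivalent, every bounded operator commuting with all $V(g)$ must be block-diagonal with respect to the decomposition $\mathcal H=\oplus_n \mathcal H_n$ (intertwiners between inequivalent irreducibles vanish) and moreover a scalar multiple of the identity on each block (Schur applied block-wise). Hence
\[
 \mathcal V'=\mathop{\oplus}_{n=1}^N \mathbb C\,\1_{\mathcal H_n}\;,
\]
and by the bicommutant theorem $\mathcal V=\mathcal V''=\oplus_n \mathcal L(\mathcal H_n)$. Each projection $P_k$ is then a scalar on its block and therefore belongs to $\mathcal V'$; since it is also block-diagonal it belongs to $\mathcal V$, so $P_k\in \mathcal V\cap \mathcal V'$ is central in $\mathcal V$. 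By Proposition~\ref{pro:affiliate} and the remark following Definition~\ref{def:affi}, $T\,a\,\mathcal V'$ is equivalent to $\mathcal V=(\mathcal V')'\subset \{T\}'\cap\{T^\dagger\}'$, so every element of $\mathcal V$, and in particular $P_k$, commutes with $T$ in the sense of Definition~\ref{def:com}. This is exactly the statement that $P_k$ reduces $T$.

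For part (ii), since each $P_k$ reduces the self-adjoint operator $T$, the restrictions $T_k:=T|_{\mathcal D(T)\cap \mathcal H_k}$ are self-adjoint operators on $\mathcal H_k$ and $T=\oplus_{k=1}^N T_k$ on $\mathcal D(T)=\oplus_{k=1}^N (\mathcal D(T)\cap\mathcal H_k)$. The $G$-invariance of $T$, combined with the fact that $V(g)$ preserves each $\mathcal H_k$ and acts there as $V_k(g)$, implies that $T_k$ is $G$-invariant with respect to the irreducible representation $V_k$. Lemma~\ref{lem:irrep} then forces each $T_k$ to be bounded, of the form $i\,\tfrac{\lambda_k-1}{\lambda_k+1}\,\1_{\mathcal H_k}$ for some $\lambda_k\in\mathbb T\setminus\{-1\}$. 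When $N<\infty$, a finite direct sum of bounded scalar operators is bounded, yielding the diagonal representation claimed.

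The main obstacle I foresee is the careful bookkeeping around unbounded operators and their reduction: namely, verifying that the bounded operators $P_k\in\mathcal V$ truly reduce $T$ through the affiliation characterisation, and that the resulting block restrictions $T_k$ are themselves self-adjoint on $\mathcal H_k$ (so that Lemma~\ref{lem:irrep} is applicable). This is where Proposition~\ref{pro:affiliate} and the equivalent formulation $\mathcal V\subset\{T\}'\cap\{T^\dagger\}'$ do the essential work, translating a statement about the symmetry group into a statement about the spectral projections of $T$ interacting correctly with each isotypical component.
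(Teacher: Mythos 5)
Your proposal is correct and reaches the paper's conclusion, but it routes the two key steps differently. For part (i) the paper does not simply invoke the bicommutant characterization of affiliation: after computing $\mathcal{V}'$ via Schur's lemma it proves $P_k\in\{T\}'$ by hand, using the spectral projections $E(\omega)$ of $T$ (which lie in $\mathcal{V}'$ by Proposition~\ref{pro:affiliate}) and the measures $\mu_\Phi(\omega)=\|E(\omega)\Phi\|^2$ to verify explicitly that $P_k\mathcal{D}(T)\subset\mathcal{D}(T)$ and that $TP_k\Phi=P_kT\Phi$ on $\mathcal{D}(T)$. Your shortcut --- $T\,a\,\mathcal{V}'$ iff $\mathcal{V}=(\mathcal{V}')'\subset\{T\}'\cap\{T^\dagger\}'$, hence $P_k\in\mathcal{V}\subset\{T\}'$ --- is legitimate and shorter, but it rests entirely on the equivalence stated without proof in the remark after Definition~\ref{def:affi}, which in turn uses that $\{T\}'\cap\{T^\dagger\}'$ is a von Neumann algebra; the paper's measure-theoretic argument is essentially a self-contained verification of exactly that implication in the case at hand. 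For part (ii) the paper works with the global Cayley transform $\mathcal{C}(T)\in\mathcal{V}'$, reads off the scalars $\lambda_k$ from the diagonal form of $\mathcal{V}'$, and excludes $\lambda_k=-1$ directly on each block; you instead restrict $T$ to each reducing subspace and apply Lemma~\ref{lem:irrep} blockwise. Both work; your version buys a cleaner delegation to the irreducible case at the price of having to justify (as you correctly anticipate) that each restriction $T_k$ with domain $\mathcal{D}(T)\cap\mathcal{H}_k$ is densely defined, self-adjoint on $\mathcal{H}_k$, and $G$-invariant for $V_k$ --- all standard once $P_k$ is known to reduce $T$ and to commute with every $V(g)$.
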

\begin{proof}
(i) Since the $V_k$'s are all irreducible and mutually inequivalent it follows by Schur's lemma that
\[
 \mathcal{V}'\cong 
         \Big\{
         \mathrm{diag}\left( 
         \lambda_1\1_{\mathcal{H}_{1}}, \dots, \lambda_N\1_{\mathcal{H}_{N}}
                     \right)
         \mid \lambda_1,\dots,\lambda_N\in\mathbb{C}
         \Big\}\;.
\]
Moreover, since any $P_k$ reduces $V$ it is immediate that the projections are central. To show that
$P_k\in\{T\}'$, $k=1,\dots, N$, consider the spectral projections $E(\cdot)$ of $T$ and define for 
any $\Phi \in\mathcal{D}(T)$ the following positive finite measure on the Borel sets of $\sigma(T)$:
\[
 \mu_\Phi (\omega):=\|E(\omega)\Phi \|^2=\langle \Phi , E(\omega)\Phi \rangle\;.
\]
Now, any $P_k$ is central for $\mathcal V$ and, by $G$-invariance, Proposition~\ref{pro:affiliate} 
implies that $E(\omega)\in\mathcal{V}'$ for any Borel set $\omega$. Therefore we have
\[
 \mu_{P_k \Phi }(\omega) =\|E(\omega) P_k \Phi \|^2\leq \|E(\omega)\Phi \|^2=\mu_\Phi (\omega)
\]
and this implies that $P_k\mathcal{D}(T)\subset\mathcal{D}(T)$. Similarly, using the spectral theorem
one can show that
\[
 \langle y, TP_k\, \Phi \rangle =\langle y, P_k\, T \,\Phi \rangle\quad\mathrm{for~all}\quad y\in\H\;,\;\Phi \in\mathcal{D}(T)\;,
\]
hence $P_k\in\{T\}'$. 

(ii) Since $T=T^*$ we have that the Cayley transform is unitary and 
\[   
\mathcal{C}(T)=(\1-iT)(\1+iT)^{-1}\in\mathcal{V}'\;.
\]
Therefore, there is a $\lambda_k\in\mathbb{T}$, $k=1,\dots, N$, such that
\[
\mathcal{C}(T)\cong
\mathrm{diag}\left( 
         \lambda_1\1_{\mathcal{H}_{1}}, \dots, \lambda_N\1_{\mathcal{H}_{N}}
                     \right)\;.
\]
As in the proof of Lemma~\ref{lem:irrep} we exclude first the case $\lambda_k=-1$, $k=1,\dots, N$.
If $\lambda_k=-1$ and since the projection $P_k$ is reducing we have for any $\Phi _k\in P_k\mathcal{D}(T)$
that
\[
 (\1+iT)\Phi _k = -(\1-iT)\Phi _k \quad\Rightarrow\quad \Phi _k=0\;.
\]
Therefore $P_k\mathcal{D}(T)=\{0\}$ and we can omit the $k$th-summand in the decomposition of $T$.
Hence without loss of generality we can assume that $\lambda_k\in\mathbb{T}\setminus \{-1\}$,
$k=1,\dots, N$ and a similar reasoning on each block as in Lemma~\ref{lem:irrep} gives the formula for $T$.
\end{proof}

Part~(ii) of the previous theorem says that any representation of $V$ implementing a 
quantum symmetry of an unbounded, self-adjoint operator must be highly reducible. 
Note that only if $N=\infty$ may $T$ be unbounded. E.g., consider the case
where $\lim_N\lambda_N= -1$.
In the particular case of a compact group acting on an infinite dimensional
Hilbert space, we know that the decomposition of $V$ into irreducible representations must have infinite irreducible 
components.
In this sense the representations considered in the examples of the following sections are meaningful. 
The following remark and Proposition~\ref{prop:equivalentirrep} show that this is so even if we consider equivalent irreducible representations.

\begin{remark}
If the irreducible representations are not mutually inequivalent, then the corresponding projections need not be reducing.
In fact, consider the example $V=V_1\oplus V_2$ on $\H=\H_1\oplus \H_2$ with $V_1\cong V_2$, i.e., there is a unitary
$U\colon\H_1\to\H_2$ such that $V_2=UV_1U^*$. Then 
\[
 \mathcal{V}'=\left\{
                \begin{pmatrix}
                  \lambda_1 \1_{\mathcal{H}_1} &  \lambda_2 U^* \\
                  \lambda_3 U  & \lambda_4 \1_{\mathcal{H}_2} 
                \end{pmatrix}
                 \mid \lambda_k\in\mathbb{C}, k=1,\dots,4
              \right\}
\] and 
\[
\mathcal{V}=\mathcal{V}''=
               \left\{
                \begin{pmatrix}
                  A_1 &  0 \\
                  0   & UA_1U^* 
                \end{pmatrix}
                 \mid A_1\in\mathcal{L}(\H_1)
              \right\}\;.
\]
This shows that $P_k\notin\mathcal{V}$ and, in fact if $V$ is a quantum symmetry for $T$, then $P_k$ need not be reducing for $T$.
\end{remark}

It can be shown that $T$ must also be bounded in this later case. Take into account that it is not assumed that the irreducible 
representations are finite dimensional. Below we show this in the simple case that the representation $V$ is a composition of 
two equivalent representations. The generalization to a finite number of equivalent representations is straightforward.

\begin{proposition}\label{prop:equivalentirrep}
Let $G$ be a group and consider a unitary, reducible representation $V$ which decomposes as a direct sum of two equivalent, irreducible representations. 
Let $T\colon \D(T)\subset\H\to\H$ be a $G$-invariant, self-adjoint operator with respect to the representation $V$. Then $T$ 
must be a bounded operator.
\end{proposition}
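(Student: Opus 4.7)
The plan is to exploit the very concrete description of $\mathcal{V}'$ given in the preceding remark, namely that, as a $*$-algebra, $\mathcal{V}'$ is isomorphic to $M_2(\mathbb{C})$, hence finite dimensional. By Proposition~\ref{pro:affiliate}, $G$-invariance of $T$ is equivalent to $T\,a\,\mathcal{V}'$, which in turn is equivalent to the Cayley transform $\mathcal{C}(T)=(\1-iT)(\1+iT)^{-1}$ lying in $\mathcal{V}'$. Thus all the spectral information of $T$ will be encoded in a $2\times 2$ matrix, forcing boundedness.

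To make the reduction to $M_2(\mathbb{C})$ precise, I would introduce the unitary
\[
 W\colon\H_1\oplus\H_1\to\H_1\oplus\H_2\;,\qquad W:=\begin{pmatrix}\1_{\H_1} & 0\\ 0 & U\end{pmatrix},
\]
and compute $W^*\mathcal{V}' W$. A direct block multiplication using $U^*U=\1$ shows that conjugation by $W$ sends a generic element of $\mathcal{V}'$ (as written in the remark) to the scalar block matrix
\[
 \begin{pmatrix}\lambda_1\1_{\H_1} & \lambda_2\1_{\H_1}\\ \lambda_3\1_{\H_1} & \lambda_4\1_{\H_1}\end{pmatrix}
 \;=\;M\otimes\1_{\H_1}\;,\quad M=\begin{pmatrix}\lambda_1 & \lambda_2\\ \lambda_3 & \lambda_4\end{pmatrix}\in M_2(\mathbb{C}).
\]
In other words, $W^*\mathcal{V}' W=M_2(\mathbb{C})\otimes\1_{\H_1}$.

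Applying this to $\mathcal{C}(T)\in\mathcal{V}'$ yields $W^*\mathcal{C}(T)W=M_0\otimes\1_{\H_1}$ for some $M_0\in M_2(\mathbb{C})$; since $\mathcal{C}(T)$ is unitary, so is $M_0$. Consequently $\sigma(\mathcal{C}(T))=\sigma(M_0)$ consists of at most two points on $\mathbb{T}$. The last step is to rule out $-1$ from this spectrum: if $-1\in\sigma(\mathcal{C}(T))$ it would be isolated, hence an eigenvalue, contradicting self-adjointness of $T$ (which requires $\ker(\1+\mathcal{C}(T))=\{0\}$, as in the argument used in the proofs of Lemma~\ref{lem:irrep} and Theorem~\ref{thm:disjointirrep}). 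Therefore $\1+\mathcal{C}(T)$ is boundedly invertible, and the inverse Cayley formula $T=i(\1-\mathcal{C}(T))(\1+\mathcal{C}(T))^{-1}$ exhibits $T$ as a bounded operator.

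The main obstacle is the first step: realizing that although $\mathcal{V}'$ acts on an infinite-dimensional Hilbert space, the equivalence $V_1\cong V_2$ via $U$ collapses it to a copy of $M_2(\mathbb{C})$ amplified by $\1_{\H_1}$. Once this identification is in hand, everything reduces to $2\times 2$ linear algebra and the routine spectral argument on the Cayley transform finishes the proof.
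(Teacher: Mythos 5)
Your proposal is correct and follows essentially the same route as the paper: both reduce $\mathcal{C}(T)\in\mathcal{V}'$ to a $2\times 2$ unitary matrix (the paper by unitarily diagonalizing the coefficient matrix $\Lambda$ and lifting the diagonalizer to an operator $S$ on $\H$, you by conjugating with $W=\operatorname{diag}(\1_{\H_1},U)$ to identify $\mathcal{V}'$ with $M_2(\mathbb{C})\otimes\1_{\H_1}$), and both then exclude $-1$ from the resulting finite spectrum via $\ker(\1+\mathcal{C}(T))=\{0\}$ to invert $\1+\mathcal{C}(T)$ boundedly. One cosmetic slip: the inverse Cayley formula should read $T=i(\mathcal{C}(T)-\1)(\1+\mathcal{C}(T))^{-1}$ (consistent with the paper's $T=i\tfrac{\lambda-1}{\lambda+1}\1$), i.e.\ your expression has the opposite sign, which of course does not affect the boundedness conclusion.
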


\begin{proof}
By assumption we have that $V=V_1\oplus V_2$ with $V_2= UV_1U^*$ where $U:\H_1\to\H_2$ is the unitary operator 
representing the equivalence and $\H=\H_1\oplus \H_2$. According to the previous remark we have that the Cayley transform of the operator $T$ is 
$$\mathcal{C}(T)=\begin{pmatrix}
                  \lambda_1 \1_{\H_1} &  \lambda_2 U^* \\
                  \lambda_3 U  & \lambda_4 \1_{\H_2} 
                \end{pmatrix}\;,\quad\mathrm{for~some}\quad\lambda_k\in\mathbb{C}, k=1,\dots, 4\;.$$
Moreover, since $\C(T)$ is a unitary operator, the coefficient matrix 
$$\Lambda:=	\begin{pmatrix}
\lambda_1 & \lambda_2 \\ \lambda_ 3 & \lambda_4
\end{pmatrix}$$
is a $2\times 2$ unitary matrix, i.e., $\Lambda\in\mathcal{U}(2)$\,. Therefore it exists a unique, unitary matrix 
$$\Sigma=	\begin{pmatrix}
s_1 & s_2 \\ s_ 3 & s_4
\end{pmatrix}\in \mathcal{U}(2)$$
that diagonalizes $\Lambda$, i.e., $\Sigma^*\Lambda \Sigma=\mathrm{diag}(\tilde{\lambda}_1,\tilde{\lambda}_2)$\,, 
$\tilde{\lambda_k}\in\mathbb{T},k=1,2$\,. Consider the unitary operator 
$$S=	\begin{pmatrix}
s_1\1_{\H_1} & s_2U^* \\ s_3U & s_4\1_{\H_2}
\end{pmatrix}\;.$$
This unitary transformation satisfies that 
\begin{equation}
S^*\C(T)S=\begin{pmatrix} \tilde{\lambda}_1 \,\1_{\tilde{\H}_1}& 0 \\ 
                          0 & \tilde{\lambda}_ 2\, \1_{\tilde{\H}_2} \end{pmatrix}
                          \quad\mathrm{for~some}\quad\tilde{\lambda}_1\,,\,\tilde{\lambda}_2\in\mathbb{T} \;,
\end{equation}
where the new block structure represents a different decomposition of $\H=\tilde{\H}_1\oplus\tilde{\H}_2$\,. 
With respect to this decomposition there are associated two proper subspaces of $\C(T)$ with proper projections $\tilde{P}_1$ and $\tilde{P}_2$. 
Notice that these projections reduce $T$. These projections do not coincide in general with $P_1$ and $P_2$\,, 
the projections associated to the decomposition $V=V_1\oplus V_2$\,.

The same arguments as in the proof of Theorem \ref{thm:disjointirrep} lead us to exclude the cases $\tilde{\lambda}_k=-1$ 
and we can consider that $\tilde{\lambda}_k\in\mathbb{T}\setminus \{-1\}, k=1,2$\,. Hence $T$ is a bounded operator.
\end{proof}

The following result is a straightforward consequence of the main results in this section.

\begin{corollary}
Let $T$ be an unbounded self-adjoint operator on a Hilbert space $\mathcal H$ which is $G$-invariant with respect to 
a unitary representation $V\colon G\to\mathcal{U}(\mathcal{H})$. Then $V$ cannot be a direct sum of finitely many irreducible 
representations.
\end{corollary}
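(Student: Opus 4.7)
The plan is to argue by contradiction, combining the two previous structural results of this section. Suppose $V = \oplus_{n=1}^N V_n$ with $N < \infty$ and each $V_n$ irreducible. The first step is to regroup the summands by equivalence class: collecting mutually equivalent $V_n$'s together, one rewrites
\[
  V \cong \mathop{\oplus}_{k=1}^M W_k \quad\text{on}\quad \H = \mathop{\oplus}_{k=1}^M \H^{(k)},
\]
where each $W_k$ is a direct sum of $n_k < \infty$ copies of a single irreducible $\rho_k$, and $\rho_k \not\cong \rho_{k'}$ for $k \ne k'$. Since $N < \infty$, both $M$ and each $n_k$ are finite.

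The second step is to show that each projection $P^{(k)}$ onto the isotypic component $\H^{(k)}$ reduces $T$. By Schur's lemma together with the fact that the $\rho_k$ are mutually inequivalent, the $P^{(k)}$ are central in the von Neumann algebra $\mathcal V = \{V(g)\mid g\in G\}''$. Now the spectral-theoretic argument from part~(i) of Theorem~\ref{thm:disjointirrep} used only that the projections onto the sub-representations were central in $\mathcal V$ (together with $G$-invariance of $T$, via Proposition~\ref{pro:affiliate}, to conclude that the spectral projections of $T$ lie in $\mathcal V'$). That argument therefore applies verbatim to the $P^{(k)}$, yielding $P^{(k)} \in \{T\}'$ and consequently a decomposition $T = \oplus_{k=1}^M T_k$ into self-adjoint operators on $\H^{(k)}$, each $T_k$ being $G$-invariant with respect to the representation $W_k$.

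The third step is to invoke Proposition~\ref{prop:equivalentirrep}, or rather its straightforward generalization (explicitly flagged in the text) from two equivalent irreducibles to $n_k$ equivalent irreducibles, to conclude that each $T_k$ is bounded. Since $M < \infty$, the orthogonal sum $T = \oplus_{k=1}^M T_k$ of finitely many bounded self-adjoint operators is itself bounded, contradicting the hypothesis that $T$ is unbounded.

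The only nontrivial point in this outline is really the second step, namely the passage from the irreducible decomposition of Theorem~\ref{thm:disjointirrep} to the isotypic decomposition used here; it is not hard, but one must notice that the proof there hinges on the centrality of the projections in $\mathcal V$, a property that survives the grouping of equivalent components. Everything else is bookkeeping.
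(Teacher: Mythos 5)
Your argument is correct and is exactly the assembly of Theorem~\ref{thm:disjointirrep} and Proposition~\ref{prop:equivalentirrep} that the paper intends when it calls the corollary ``a straightforward consequence of the main results in this section''; the paper supplies no further proof. The one detail genuinely worth spelling out --- that the reducing argument in part~(i) of Theorem~\ref{thm:disjointirrep} uses only the centrality of the projections in $\mathcal{V}$ and therefore applies verbatim to the isotypic projections $P^{(k)}$ --- you have identified and handled correctly.
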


The present section is a first step to analyze the relation between the reduction theory of a quantum mechanical symmetry
and the reduction of the unbounded $G$-invariant operator. 
In particular we consider only when $V$ decomposes as a direct
sum of irreducible representations. This is enough for the applications we have in mind in the following sections, where mainly 
compact groups act as a quantum symmetry. For a systematic and general theory of reduction one has to address, among other things,
the type decomposition of the von Neumann algebras corresponding to the intertwiner spaces of the representation $V$ and 
the corresponding direct integral decomposition of the self-adjoint operator
$T$ (see, e.g., \cite{Nussbaum64,Mackey76}).


\section{Invariant self-adjoint extensions of the Laplace-Beltrami operator}
\label{sec:InvariantLB}

As an application of the previous results we analyze the class of self-adjoint extensions
of the Laplace-Beltrami operator on a Riemannian manifold introduced in \cite{ILP13} according to their
invariance properties with respect to a symmetry group, in particular with respect to a group action on the manifold. 

Throughout the rest of this and the next section we will consider a smooth, compact, Riemannian manifold with boundary $(\Omega,\pO,\eta)$\,. 
The boundary $\pO$ of the Riemannian manifold $(\Omega,\pO,\eta)$ has itself the structure of a Riemannian manifold without 
boundary $(\pO,\partial\eta)$\,.  The Riemannian metric at the boundary is just the pull-back of the Riemannian metric $\partial\eta=i^*\eta$\,, 
where $i:\pO\hookrightarrow\Omega$ is the canonical inclusion map. The spaces of smooth functions over the manifolds verify that 
$$\C^\infty(\Omega)\bigr|_{\pO}\simeq\C^\infty(\pO)\;.$$
The Sobolev spaces of order $k\in\mathbb{R}^+$ over the manifolds $(\Omega,\pO,\eta)$ and $(\pO,\partial\eta)$ are going 
to be denoted by $\H^k(\Omega)$ and $\H^k(\pO)$, respectively. There is an important relation between the Sobolev spaces 
defined over the manifolds $\Omega$ and $\pO$. This is the well known Lions trace theorem (see, e.g., \cite[Theorem 7.39]{adams03} and 
Theorem~9.4 of Chapter~1 in \cite{lions72}).

\begin{theorem}[Lions]\label{LMtracetheorem}
Let $\Phi\in\C^{\infty}(\Omega)$ and let $\gamma\colon\C^\infty(\Omega)\to\C^{\infty}(\pO)$ be the trace map $\gamma(\Phi)=\Phi\bigr|_{\pO}$. 
There is a unique continuous extension of the trace map such that
\begin{enumerate}
\item $\gamma\colon\H^{k}(\Omega)\to\H^{k-1/2}(\pO)$\,, $k > 1/2$\;.\\
\item The map is surjective\;.\\
\end{enumerate}
\end{theorem}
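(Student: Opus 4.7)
The plan is to reduce the global statement to a local assertion on the half-space $\R^n_+:=\{x\in\R^n\mid x_n\geq 0\}$ via a partition of unity subordinate to a finite atlas of charts. Since $\Omega$ is compact and $\pO$ is smooth, a tubular neighbourhood of $\pO$ can be straightened so that boundary charts take the form $(x',x_n)$ with $\pO$ locally given by $\{x_n=0\}$. On each such chart the intrinsic Sobolev norms $\H^k(\Omega)$ and $\H^{k-1/2}(\pO)$ are equivalent to the Euclidean Sobolev norms on $\R^n_+$ and $\R^{n-1}$ respectively (the metric tensor and Jacobian are smooth and bounded), so it suffices to prove both claims for the model trace $\gamma_0\colon\C^\infty_c(\overline{\R^n_+})\to\C^\infty_c(\R^{n-1})$, $(\gamma_0 u)(x'):=u(x',0)$.

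For the continuity estimate I would take the partial Fourier transform $\hat u(\xi',x_n)$ in the tangential variables. The fundamental theorem of calculus yields
\[
 |\hat u(\xi',0)|^2 = -2\,\mathrm{Re}\int_0^\infty \overline{\hat u(\xi',x_n)}\,\partial_{x_n}\hat u(\xi',x_n)\,\d x_n\;,
\]
and Cauchy--Schwarz with weight $a=(1+|\xi'|^2)^{1/2}$ gives
\[
 |\hat u(\xi',0)|^2 \leq (1+|\xi'|^2)^{1/2}\!\int_0^\infty\! |\hat u|^2\,\d x_n + (1+|\xi'|^2)^{-1/2}\!\int_0^\infty\! |\partial_{x_n}\hat u|^2\,\d x_n\;.
\]
Multiplying by $(1+|\xi'|^2)^{k-1/2}$ and integrating in $\xi'\in\R^{n-1}$, the two summands become $\int\!\int (1+|\xi'|^2)^{k}|\hat u|^2$ and $\int\!\int (1+|\xi'|^2)^{k-1}|\partial_{x_n}\hat u|^2$, both of which are dominated by $\|u\|_{\H^k(\R^n_+)}^2$ via Plancherel in $x'$ and the pointwise bound $(1+|\xi'|^2)^{k-1}(1+\xi_n^2)\leq C(1+|\xi'|^2+\xi_n^2)^k$. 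Density of $\C^\infty_c(\overline{\R^n_+})$ in $\H^k(\R^n_+)$ then yields a unique continuous extension.

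For surjectivity I would exhibit an explicit continuous right-inverse $E\colon\H^{k-1/2}(\R^{n-1})\to\H^k(\R^n_+)$ of the form
\[
 (Ef)(x',x_n) := \mathcal{F}^{-1}_{\xi'\to x'}\Big( \varphi\bigl(x_n\sqrt{1+|\xi'|^2}\bigr)\,\hat f(\xi') \Big)\;,
\]
where $\varphi\in\C^\infty_c(\R)$ satisfies $\varphi(0)=1$. The change of variables $t=x_n\sqrt{1+|\xi'|^2}$ in the Plancherel-type computation of $\|Ef\|_{\H^k(\R^n_+)}$ produces exactly the $(1+|\xi'|^2)^{k-1/2}$-weighted integral defining $\|f\|_{\H^{k-1/2}(\R^{n-1})}$, and $\gamma_0 Ef=f$ is immediate, so $\gamma$ is surjective with a bounded lifting.

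The main obstacle is ensuring that the local estimates glue coherently into an intrinsic statement on the Riemannian manifold: the construction of the tubular neighbourhood, the compatibility of the partition of unity with the Sobolev norms on $\pO$ and $\Omega$, and the invariance of fractional-order spaces under coordinate changes (which is not automatic for non-integer $k$ but follows from interpolation between integer orders) all require care. Once these geometric reductions are in place, the Fourier-analytic heart of the argument is the sharp balance between the weight $(1+|\xi'|^2)^{k-1/2}$ and the Cauchy--Schwarz splitting, which is precisely what forces the $\tfrac12$-loss of regularity.
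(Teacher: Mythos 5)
The paper does not actually prove this statement: it is quoted as a classical result with pointers to Adams--Fournier and Lions--Magenes, so there is no internal proof to compare against. Your proposal is the standard textbook route (localization by a partition of unity, boundary flattening, partial Fourier transform in the tangential variables, explicit lifting operator), and the geometric reduction, the uniqueness-by-density step, and the construction and boundedness of the right inverse $E$ are all sound.

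There is, however, a genuine gap in the continuity estimate: as written it only covers $k\geq 1$, not the full range $k>1/2$ asserted in the theorem. The pointwise bound you invoke,
\[
 (1+|\xi'|^2)^{k-1}(1+\xi_n^2)\leq C\,(1+|\xi'|^2+\xi_n^2)^{k}\;,
\]
is false for $1/2<k<1$: at $\xi'=0$ the left-hand side is $1+\xi_n^2$ while the right-hand side is $C(1+\xi_n^2)^{k}$, and the ratio diverges as $\xi_n\to\infty$. The failure is not merely cosmetic. The integration-by-parts identity makes you pay for a full normal derivative of $u$ in $L^2$ (with a negative-order tangential weight), and this quantity is not dominated by $\norm{u}_{\H^k}$ when $k<1$; a function concentrating at scale $N^{-1}$ in the normal direction has $\int(1+|\xi'|^2)^{k-1}|\partial_{x_n}u|^2\sim N$ but $\norm{u}_{\H^k}^2\sim N^{2k-1}\ll N$. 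The standard repair for the whole range $k>1/2$ is to replace the fundamental-theorem-of-calculus step by the representation (after extending $u$ from the half-space to $\R^n$, or working on $\R^n$ before localizing)
\[
 \widehat{\gamma_0 u}(\xi')=\frac{1}{2\pi}\int_{\R}(\mathcal{F}u)(\xi',\xi_n)\,\d\xi_n\;,
\]
followed by Cauchy--Schwarz with the weights $(1+|\xi|^2)^{\pm k/2}$; the integral $\int_{\R}(1+|\xi'|^2+\xi_n^2)^{-k}\,\d\xi_n=c_k(1+|\xi'|^2)^{1/2-k}$ converges precisely when $k>1/2$ and produces exactly the weight $(1+|\xi'|^2)^{k-1/2}$. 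With that substitution, together with the interpolation argument you already flag for the coordinate invariance of the fractional norms, the proof closes.
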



\subsection{A class of self-adjoint extensions of the Laplace-Beltrami operator}

We recall here some results from \cite{ILP13} that describe a large class of self-adjoint extensions of the Laplace-Beltrami operator. 
The extensions are parameterized in terms of suitable unitaries on the boundary Hilbert space. This class is constructed in 
terms of a family of closed, semi-bounded quadratic forms via the representation theorem (cf., Theorem \ref{fundteo}). 
Before introducing this family we shall need some definitions.

\begin{definition}\label{DefGap}
Let $U\colon  \L^2(\pO)\to\L^2(\pO)$ be unitary and denote by $\sigma(U)$ its spectrum. The unitary $U$ on the boundary
\textbf{has spectral gap at $-1$} if one of the following conditions hold:
\begin{enumerate}
\item $\1+U$ is invertible.
\item $-1\in\sigma(U)$ and $-1$ is not an accumulation point of $\sigma(U)$.
\end{enumerate}
The eigenspace associated to the eigenvalue $-1$ is denoted by $W$. The corresponding orthogonal projections will be written as
$P_W$ and $P_{W^\perp}=\1-P_W$. 
\end{definition}

\begin{definition}\label{partialCayley}
Let $U$ be a unitary operator acting on $\L^2(\pO)$ with spectral gap at $-1$\,. 
The \textbf{partial Cayley transform} $A_U\colon\L^2(\pO)\to \L^2(\pO)$ is
the operator
\[
A_U:=\mathbf{i}\,P_{W^\bot} (U-\mathbb{I}) (U+\mathbb{I})^{-1}\;.
\]
\end{definition}

\begin{definition}\label{def:admissible}
Let $U$ be a unitary with spectral gap at $-1$\,. The unitary is said to be \textbf{admissible} if the partial Cayley transform
$A_U$ leaves the subspace $\H^{1/2}(\pO)$ invariant and is continuous with respect to the Sobolev norm of order $1/2$\,, i.e.,
$$\norm{A_U\varphi}_{\H^{1/2}(\pO)}\leq K \norm{\varphi}_{\H^{1/2}(\pO)}\;.$$
\end{definition}

\begin{example}
Consider a manifold with boundary given by the unit circle, i.e., $\partial\Omega=S^1$\,,
and define the unitary $(U_\beta\varphi)(z):=e^{i\beta(z)}\,\varphi(z)$\,, $\varphi\in \L^2(S^1)$\,.
If $\beta\in \L^2(S^1)$ and $\ran\beta\subset\{\pi\}\cup [0,\pi-\delta]\cup [\pi+\delta,2\pi)$\,, for some $\delta >0$\,, then $U_\beta$ has gap
at $-1$\,. If, in addition, $\beta\in C^\infty(S^1)$\,, then $U_\beta$ is admissible.
\end{example}

\begin{definition}\label{DefQU}
Let $U$ be a unitary with spectral gap at $-1$\,, $A_U$ the corresponding partial Cayley transform and $\gamma$
the trace map considered in Theorem~\ref{LMtracetheorem}.
The Hermitean quadratic form $Q_U$ associated to the unitary $U$ is defined by
$$Q_U(\Phi,\Psi)=\scalar{\d\Phi}{\d\Psi}_{\Lambda^1}-\scalarb{\gamma(\Phi)}{A_U\gamma(\Phi)}$$
on the domain
$$\D_U=\bigl\{ \Phi\in\H^1(\Omega)\bigr|   \; P_W\gamma(\Phi)=0 \bigr\}\;.$$ 
Here $\scalar{\cdot}{\cdot}_{\Lambda^1}$ 
stands for the canonical scalar product among one-forms on the manifold $\Omega$\,.
\end{definition}

It is worth to mention the reasons behind Definitions~\ref{DefGap} and \ref{def:admissible}. 
The spectral gap condition ensures that the partial Cayley transform becomes a bounded, self-adjoint operator on the subspace
$W^\perp$ and this guarantees that the quadratic form $Q_U$ is lower semi-bounded. Notice that we are dealing with 
unbounded quadratic forms and thus they are not continuous mappings of the Hilbert space. The admissibility condition is 
an analytic requirement to ensure that $Q_U$ is a closable quadratic form. 

In the next theorem we give a class of self-adjoint extensions of the minimal Laplacian
operator $\Delta_{\mathrm{min}}$ on the domain $\mathcal{H}^2_0(\Omega)$.  
We refer to \cite[Section~4]{ILP13} for a complete proof and additional motivation.
All the extensions are labeled by suitable unitaries $U$ at the boundary.

\begin{theorem}\label{char_U}
Let $U\colon\L^2(\pO)\to\L^2(\pO)$ be an admissible unitary operator with spectral gap at $-1$. 
Then the quadratic form $Q_U$ of Definition~\ref{DefQU} is semi-bounded from below and closable. Its closure 
is represented by a semi-bounded, self-adjoint extension of the minimal Laplacian $-\Delta_{\mathrm{min}}$. 
\end{theorem}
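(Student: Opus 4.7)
The plan is to verify the hypotheses of Theorem \ref{fundteo} for $Q_U$ and then identify the resulting self-adjoint operator as an extension of $-\Delta_{\mathrm{min}}$. First I would establish semi-boundedness: the spectral gap of $U$ at $-1$ makes $(\1+U)$ boundedly invertible on $W^\perp$, so $A_U$ is a bounded, self-adjoint operator on $\L^2(\pO)$ with image in $W^\perp$, and hence
\[
 |\scalarb{\gamma(\Phi)}{A_U\gamma(\Phi)}| \le \|A_U\|_\infty \, \|\gamma(\Phi)\|^2_{\L^2(\pO)}.
\]
A standard interpolation trace estimate (a consequence of Theorem \ref{LMtracetheorem} together with the compactness of $\H^{1/2}(\pO) \hookrightarrow \L^2(\pO)$) yields
\[
 \|\gamma(\Phi)\|^2_{\L^2(\pO)} \le \epsilon \|\d\Phi\|^2_{\Lambda^1} + C_\epsilon \|\Phi\|^2
\]
for every $\epsilon > 0$; choosing $\epsilon < 1/\|A_U\|_\infty$ gives $Q_U(\Phi) \ge -m\|\Phi\|^2$ with $m := C_\epsilon \|A_U\|_\infty$.

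Next I would prove closability by showing that the form norm $\normm{\cdot}_{Q_U}$ is equivalent to the Sobolev norm $\|\cdot\|_{\H^1(\Omega)}$ on $\D_U$. The lower bound follows immediately from the estimates above. For the matching upper bound one invokes admissibility: Theorem \ref{LMtracetheorem} gives $\|\gamma(\Phi)\|_{\H^{1/2}(\pO)} \le K'\|\Phi\|_{\H^1(\Omega)}$, and admissibility of $U$ yields $\|A_U \gamma(\Phi)\|_{\H^{1/2}(\pO)} \le K \|\gamma(\Phi)\|_{\H^{1/2}(\pO)}$, so the pairing $\scalarb{\gamma(\Phi)}{A_U \gamma(\Phi)}$ is controlled by $\|\Phi\|^2_{\H^1(\Omega)}$. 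Consequently
\[
 c\,\|\Phi\|^2_{\H^1(\Omega)} \le \normm{\Phi}^2_{Q_U} \le C\,\|\Phi\|^2_{\H^1(\Omega)},\qquad \Phi \in \D_U.
\]
Since $P_W \gamma\colon \H^1(\Omega) \to \L^2(\pO)$ is continuous, $\D_U = \ker(P_W \gamma)$ is a closed subspace of the Hilbert space $\H^1(\Omega)$, hence complete in $\normm{\cdot}_{Q_U}$; thus $Q_U$ is in fact closed on $\D_U$, and a fortiori closable.

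Now Theorem \ref{fundteo} applied to the closed, Hermitean, semi-bounded form $Q_U$ yields a unique semi-bounded, self-adjoint operator $T_U$ on $\H$ representing $Q_U$. To see that $T_U$ extends $-\Delta_{\mathrm{min}}$, take $\Phi \in \H^2_0(\Omega) = \D(-\Delta_{\mathrm{min}})$; then $\gamma(\Phi) = 0$, so $\Phi \in \D_U$ and the boundary term in $Q_U$ vanishes. For any $\Psi \in \D_U \subset \H^1(\Omega)$, Green's first identity reads
\[
 \scalar{\d\Psi}{\d\Phi}_{\Lambda^1} = \scalar{\Psi}{-\Delta\Phi} + \scalarb{\gamma(\Psi)}{\partial_\nu\Phi},
\]
and the boundary contribution vanishes because $\partial_\nu\Phi = 0$ on $\pO$ for $\Phi \in \H^2_0(\Omega)$. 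Therefore $Q_U(\Psi,\Phi) = \scalar{\Psi}{-\Delta\Phi}$ for all $\Psi \in \D_U$, and the characterization in Theorem \ref{fundteo} places $\Phi \in \D(T_U)$ with $T_U \Phi = -\Delta\Phi$, proving $T_U \supset -\Delta_{\mathrm{min}}$.

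The main obstacle is the closability step. Semi-boundedness is a soft consequence of the spectral gap and a trace estimate, and the identification as an extension of the minimal Laplacian is a direct integration by parts. The delicate point is the equivalence of $\normm{\cdot}_{Q_U}$ with $\|\cdot\|_{\H^1(\Omega)}$: a priori the boundary sesquilinear form $\scalarb{\gamma(\cdot)}{A_U\gamma(\cdot)}$ is only bounded in the $\L^2(\pO)$ topology, which is not strong enough to majorize it by $\|\Phi\|^2_{\H^1(\Omega)}$ uniformly on $\D_U$. This is precisely what the admissibility assumption remedies, by making $A_U$ continuous on $\H^{1/2}(\pO)$ and thereby securing the completeness of $\D_U$ in the form norm.
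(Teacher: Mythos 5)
The paper does not actually prove Theorem~\ref{char_U} in the text; it defers the proof to \cite[Section~4]{ILP13}. Your argument nonetheless follows what is essentially the intended route: an $\epsilon$-type trace estimate to obtain semi-boundedness, equivalence of $\normm{\cdot}_{Q_U}$ with the $\H^1(\Omega)$-norm on the $\H^1$-closed subspace $\D_U=\ker(P_W\circ\gamma)$ to obtain completeness (hence closedness, and a fortiori closability), the representation theorem (Theorem~\ref{fundteo}), and Green's identity on $\H^2_0(\Omega)\subset\D_U$ to conclude $T_U\supset-\Delta_{\mathrm{min}}$. Each of these steps is sound: the estimate $\norm{\gamma(\Phi)}^2_{\L^2(\pO)}\le\epsilon\norm{\d\Phi}^2+C_\epsilon\norm{\Phi}^2$ does follow from the compactness of $\gamma\colon\H^1(\Omega)\to\L^2(\pO)$ via an Ehrling-type argument, $\D_U$ is closed in $\H^1(\Omega)$ because $P_W\circ\gamma$ is continuous, and the boundary terms in Green's identity vanish for $\Phi\in\H^2_0(\Omega)$.

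The one point you have diagnosed incorrectly is the role of admissibility. Your closing paragraph claims that $\L^2(\pO)$-boundedness of the boundary form ``is not strong enough to majorize it by $\norm{\Phi}^2_{\H^1(\Omega)}$''; this is not so, since
\[
 |\scalarb{\gamma(\Phi)}{A_U\gamma(\Phi)}|\le\norm{A_U}_{\mathcal{L}(\L^2(\pO))}\,\norm{\gamma(\Phi)}^2_{\L^2(\pO)}\le C\,\norm{\Phi}^2_{\H^1(\Omega)}
\]
already follows from the spectral gap condition (which makes $A_U$ bounded and self-adjoint on $\L^2(\pO)$) together with the continuity of the trace $\gamma\colon\H^1(\Omega)\to\H^{1/2}(\pO)\hookrightarrow\L^2(\pO)$. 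Consequently your proof, as written, never genuinely invokes admissibility, whereas the paper states after Definition~\ref{DefQU} that admissibility is precisely the hypothesis securing closability. You should resolve this discrepancy against \cite{ILP13}: with the form and domain taken literally as in Definition~\ref{DefQU}, your completeness argument does go through, and admissibility is rather what is needed for the finer analysis of the representing operator (e.g., the $\H^{1/2}$-regularity of the boundary data and the recovery of the boundary condition in its domain, and the continuity of $P_W$ used in Theorem~\ref{repcommutation}). In any case, attribute each hypothesis to the step that actually uses it: the upper bound in your norm equivalence needs only the gap, not admissibility.
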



\subsection{Unitaries at the boundary and $G$-invariance}\label{subsec:6-2}
We will use next the results of Section~\ref{sec:InvariantQF} 
to give necessary and sufficient conditions on the characterization of the unitary $U$ in order
that the corresponding quadratic form $(Q_U,\D_U)$ is $G$-invariant. In particular, 
from Theorem~\ref{QAinvariant} we conclude 
that the self-adjoint operator representing its closure will also be $G$-invariant.

We need to consider first the quadratic form corresponding to the Neumann extension of the Laplace-Beltrami
operator: 
\[
Q_N(\Phi)=\|d\Phi\|^2 \;,\quad\quad \Phi\in\D_N=\mathcal{H}^1 (\Omega)\;.
\]
We will also call it Neumann quadratic form. Note that it corresponds to the quadratic form 
$Q_U$ of the previous section with admissible unitary $U = \mathbb{I}$. Moreover,
$U$ has spectral gap at $-1$ and for the corresponding orthogonal projection we have 
$P_{W}=\1$, hence $A_U=0$.

Let $G$ be a Lie group and $V\colon G \to \mathcal{U}(\L^2(\Omega))$ be a continuous unitary representation of $G$, 
i.e., for any $\Phi\in \L^2(\Omega)$ the map
\[
 G\ni g\mapsto V(g)\Phi \quad 
\]
is continuous in the $\L^2$-norm. We will assume that $Q_N$ is $G$-invariant, that is, 
$V(g) \mathcal{H}^1(\Omega) \subset \mathcal{H}^1(\Omega)$ 
and $Q_N(V(g) \Phi ) = Q_N(\Phi)$ for all $\Phi \in \mathcal{H}^1(\Omega)$ and $g\in G$. Then the following lemma 
shows that $V$ defines also a continuous unitary representation on $\mathcal{H}^{1}(\Omega)$ with its corresponding
Sobolev scalar product.

\begin{lemma}  
Let $V$ a strongly continuous unitary representation of the Lie group $G$ on $\L^2(\Omega)$ such that 
Neumann's quadratic form $Q_N$ is $G$-invariant.  Then $V$ leaves invariant the subspace $\mathcal{H}^{1}(\Omega)$ 
and defines a strongly continuous unitary representation on it.
\end{lemma}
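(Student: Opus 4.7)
My plan is to transfer all $\H^1(\Omega)$-level assertions to equivalent $\L^2(\Omega)$-level assertions via the self-adjoint operator representing $Q_N$. The Neumann form $Q_N$ is Hermitean, closed and non-negative, so Theorem~\ref{fundteo} provides a unique non-negative self-adjoint operator $N$ on $\L^2(\Omega)$ whose form domain is $\H^1(\Omega)$. Setting $A:=(N+\1)^{1/2}$, one has $\dom A=\H^1(\Omega)$ together with the isometric identification
\[
\normm{\Phi}_{Q_N}^2 \;=\; \|\Phi\|^2+Q_N(\Phi)\;=\;\|A\Phi\|^2,\quad\Phi\in\H^1(\Omega),
\]
so $A\colon\H^1(\Omega)\to\L^2(\Omega)$ is an isometric bijection intertwining the Sobolev inner product with the $\L^2$ one. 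The whole argument will consist in pulling the three assertions back through $A$.

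Next, I would apply Theorem~\ref{QAinvariant} to translate the $G$-invariance of $Q_N$ into $G$-invariance of $N$. By Proposition~\ref{pro:affiliate} this means $N\,a\,\mathcal{V}'$, so every spectral projection of $N$, and hence every bounded Borel function of $N$, lies in $\mathcal{V}'$. A standard spectral-calculus argument then yields the commutation $V(g)\dom A\subseteq\dom A$ together with $A V(g)\Phi=V(g)A\Phi$ for every $\Phi\in\H^1(\Omega)$ and every $g\in G$.

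With this commutation the three conclusions follow by conjugation into $\L^2(\Omega)$. \emph{Isometry}: $\normm{V(g)\Phi}_{Q_N}=\|AV(g)\Phi\|=\|V(g)A\Phi\|=\|A\Phi\|=\normm{\Phi}_{Q_N}$. \emph{Bijectivity}: the hypothesis $V(h)\H^1(\Omega)\subseteq\H^1(\Omega)$ applied to $h=g^{-1}$, combined with (\ref{subsetequal}), shows that $V(g)$ is a bijection of $\H^1(\Omega)$, and together with isometry this makes $V(g)$ a unitary in the Sobolev inner product. \emph{Strong continuity}: $\normm{V(g)\Phi-\Phi}_{Q_N}=\|A(V(g)\Phi-\Phi)\|=\|V(g)(A\Phi)-A\Phi\|\to 0$ as $g\to e$, directly from strong continuity of $V$ on $\L^2(\Omega)$ applied to the fixed vector $A\Phi$.

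The delicate point is the passage from the operator $G$-invariance of $N$ given by Theorem~\ref{QAinvariant} to the $V(g)$-commutation with the square root $A=(N+\1)^{1/2}$ on its entire domain $\H^1(\Omega)$. This is exactly the content of the affiliation statement $N\,a\,\mathcal{V}'$ from Section~\ref{sec:reduction}: all spectral projections of $N$ lie in $\mathcal{V}'$, so every Borel function of $N$ commutes with every $V(g)$ on its natural domain. Once this is disposed of, the $\L^2$-strong continuity of $V$ transfers to $\H^1$-strong continuity with no further work, and I do not foresee any other serious obstacle.
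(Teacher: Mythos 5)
Your proof is correct, but it takes a genuinely different route from the paper. The paper's own argument is short and elementary: since the hypothesis of $G$-invariance of $Q_N$ already contains $V(g)\H^1(\Omega)\subset\H^1(\Omega)$, the isometry is the one-line computation $\norm{V(g)\Phi}_1^2=\norm{V(g)\Phi}_0^2+Q_N(V(g)\Phi)=\norm{\Phi}_0^2+Q_N(\Phi)=\norm{\Phi}_1^2$, surjectivity follows from applying the domain invariance to $g^{-1}$ as in (\ref{subsetequal}), and strong continuity on $\H^1(\Omega)$ is obtained by a weak-compactness argument (bounded nets in $\H^1$ have weakly convergent subnets, the limit is identified through $\L^2$-convergence, and norm preservation upgrades weak to strong convergence). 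You instead realize the Sobolev norm as $\normm{\Phi}_{Q_N}=\norm{A\Phi}$ with $A=(N+\1)^{1/2}$, commute $V(g)$ past $A$ via Theorem~\ref{QAinvariant} and the affiliation statement of Proposition~\ref{pro:affiliate}, and then transport everything, including strong continuity, directly from $\L^2(\Omega)$. This buys you a cleaner, quantitative proof of strong continuity — $\normm{V(g)\Phi-\Phi}_{Q_N}=\norm{V(g)A\Phi-A\Phi}\to 0$ with no compactness needed — at the price of substantially heavier machinery; in particular your detour through $A$ is superfluous for the isometry step, which the hypotheses give directly. One point you should make explicit: the identification $\dom\bigl((N+\1)^{1/2}\bigr)=\H^1(\Omega)$ together with $Q_N(\Phi)=\norm{N^{1/2}\Phi}^2$ is Kato's \emph{second} representation theorem, which is standard but is not contained in the first representation theorem (Theorem~\ref{fundteo}) quoted in the paper, so it deserves a citation (e.g.\ Theorem VI.2.23 in \cite{kato95}) rather than being treated as an immediate consequence.
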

\begin{proof}  
Since $V(g)$ is invertible it is enough to show that $V(g)$ is an isometry with respect to the 
Sobolev norm $|| \cdot ||_1$ (see also the proof of Theorem~\ref{thm:unit-scales}). But this is immediate
since $V$ is unitary on $\L^2(\Omega)$ and $Q_N$ is $G$-invariant.
This is trivial if we realize that $||\cdot ||_1^2 = ||\cdot ||_0^2 + Q_N(\cdot )$, then because of the $G$-invariance 
of $Q_N$ we get that $||V(g) \Phi ||_1 = || \Phi ||_1$ for all $\Phi \in \mathcal{H}^{1}(\Omega)$, $g \in G$.

Finally, to prove strong continuity on $\mathcal{H}^{1}(\Omega)$ use the invariance
\[
 \| V(g)\Phi \|_1 = \norm{\Phi}_1\;,\quad g\in G
\]
and a standard weak compactness argument.
\end{proof}

\begin{definition} \label{def:traceable}
The representation $V\colon G\to \L^2(\Omega)$ {\bf has a trace (or that it is traceable) along the boundary} 
$\partial \Omega$ if there exists another continuous, unitary representation $v \colon G\to \mathcal{U}( \L^2(\partial \Omega))$ 
such that 
\begin{equation}\label{VPhi=vphi}
\gamma( V(g) \Phi ) = v(g)  \gamma (\Phi)  \, ,
\end{equation} 
for all $\Phi \in \mathcal{H}^1(\Omega)$ and $g\in G$ or, in other words, that the following diagram is commutative:
$$
\begin{array}{ccc}   \mathcal{H}^{1}(\Omega) & \overset{V(g)}{\longrightarrow} & \mathcal{H}^{1}(\Omega) \\
\gamma \downarrow & & \downarrow \gamma \\
 \mathcal{H}^{1/2}(\partial\Omega) & \overset{v(g)}{\longrightarrow}& \mathcal{H}^{1/2}(\partial\Omega)
\end{array}
$$
We will call $v$ the trace of the representation $V$.
\end{definition}

Notice that if the representation $V$ is traceable, its trace $v$ is unique.  

Now we are able to prove the following theorem:

\begin{theorem}\label{repcommutation}
Let $G$ a Lie group, and $V\colon G\to \mathcal{U}(\L^2(\Omega))$ a traceable continuous, unitary representation of $G$ with unitary trace
$v\colon G\to\mathcal{U}(\L^2(\pO))$ along the boundary $\partial\Omega$. Denote by $(Q_U,\mathcal{D}_U)$ the closable and 
semi-bounded quadratic form of Definition~\ref{DefQU} with an admissible unitary $U\in\mathcal{U}(\L^2(\Omega))$ 
having spectral gap at $-1$. Assume that the corresponding 
Neumann quadratic form $Q_N$ is $G$-invariant. Then we have the following cases:
\begin{itemize}
 \item[(i)] If $[v (g)\,,\,U]=0$ for all $g\in G$, then $Q_U$ is $G$-invariant. Its closure is also $G$-invariant and the self-adjoint extension
            of the minimal Laplacian representing the closed quadratic form will also be $G$-invariant.
 \item[(ii)] Consider the decomposition of the boundary Hilbert space $\L^2(\pO)\cong W\oplus W^\perp$, where $W$ is the eigenspace associated
             to the eigenvalue $-1$ of $U$ and denote by $P_W$ the orthogonal projection onto $W$. If $Q_U$ is $G$-invariant and 
             $P_W\colon\mathcal{H}^{1/2}(\partial\Omega)\to \mathcal{H}^{1/2}(\partial\Omega)$ continuous, then $[v (g)\,,\,U]=0$
             for all $g\in G$.
\end{itemize}
\end{theorem}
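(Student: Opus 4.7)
My plan is to split the quadratic form as
$$Q_U(\Phi)=Q_N(\Phi)-\langle\gamma(\Phi),A_U\gamma(\Phi)\rangle_{\pO},$$
and exploit the traceability identity $\gamma\circ V(g)=v(g)\circ\gamma$ so that all $G$-invariance questions on $Q_U$ reduce to statements about $v(g)$ at the boundary, where $P_W$ and $A_U$ live. For (i), the key observation is that $[v(g),U]=0$ together with the spectral and partial Cayley functional calculi automatically give $[v(g),P_W]=0$ and $[v(g),A_U]=0$, since $P_W$ and $A_U$ are bounded Borel functions of $U$. If $\Phi\in\D_U\subset\mathcal{H}^1(\Omega)$, the assumed $G$-invariance of $Q_N$ yields $V(g)\Phi\in\mathcal{H}^1(\Omega)$, and I would compute
$$P_W\gamma(V(g)\Phi)=P_Wv(g)\gamma(\Phi)=v(g)P_W\gamma(\Phi)=0,$$
so $V(g)\D_U\subset\D_U$. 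Combining $G$-invariance of $Q_N$ with unitarity of $v(g)$ then yields $Q_U(V(g)\Phi)=Q_U(\Phi)$. Since $V(g)$ is an isometry of $\normm{\cdot}_{Q_U}$ on $\D_U$, it extends uniquely to an isometry of the closure that still preserves $Q_U$, and Theorem~\ref{QAinvariant} transfers the resulting $G$-invariance to the representing self-adjoint operator.

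For (ii) the strategy is to disentangle the two ingredients of $Q_U$-invariance: stability of $\D_U$ should yield $[v(g),P_W]=0$, while preservation of the numerical value of the form should give $[v(g),A_U]=0$ on $W^\perp$. By Lions' trace theorem, $\gamma(\D_U)=\{\varphi\in\mathcal{H}^{1/2}(\pO):P_W\varphi=0\}$, so stability translates into $P_Wv(g)\varphi=0$ for every $\varphi\in\mathcal{H}^{1/2}(\pO)\cap W^\perp$. To promote this from the Sobolev level to the $\L^2$ level I would invoke the continuity hypothesis on $P_W$: for arbitrary $\varphi\in W^\perp$ and any approximation $\varphi_n\in\mathcal{H}^{1/2}(\pO)$ with $\varphi_n\to\varphi$ in $\L^2(\pO)$, the replacements $(\1-P_W)\varphi_n$ lie in $\mathcal{H}^{1/2}(\pO)\cap W^\perp$ and still converge to $\varphi$ in $\L^2$. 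Continuity of $v(g)$ and $P_W$ in $\L^2(\pO)$ then forces $v(g)W^\perp\subset W^\perp$, i.e., $[v(g),P_W]=0$.

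For the value part, subtracting the $Q_N$ contributions (which cancel by $G$-invariance of $Q_N$) leaves
$$\langle v(g)\varphi,A_Uv(g)\varphi\rangle_{\pO}=\langle\varphi,A_U\varphi\rangle_{\pO},\qquad \varphi\in\mathcal{H}^{1/2}(\pO)\cap W^\perp.$$
I would then polarize (using self-adjointness of $A_U$ on $W^\perp$) and use the density of $\mathcal{H}^{1/2}(\pO)\cap W^\perp$ in $W^\perp$ established above, together with boundedness of $A_U$ on $W^\perp$, to upgrade this to $v(g)^\dagger A_Uv(g)=A_U$ on the whole of $W^\perp$. Since the inverse partial Cayley transform recovers $U|_{W^\perp}$ continuously from $A_U|_{W^\perp}$, this gives $[v(g),U]=0$ on $W^\perp$; on $W$ we have $U=-\1$, which commutes with the (now well-defined) restriction $v(g)|_W$, so $[v(g),U]=0$ on $\L^2(\pO)$. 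I expect the main obstacle to be precisely this density step in (ii): without the continuity of $P_W$ on $\mathcal{H}^{1/2}(\pO)$, the identities extracted from $\D_U$-invariance and $Q_U$-invariance are only known on the Sobolev trace subspace and cannot be propagated to operator identities on $\L^2(\pO)$.
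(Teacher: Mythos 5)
Your proposal is correct and follows essentially the same route as the paper: split $Q_U=Q_N-B$, use traceability to push everything to the boundary, show $v(g)$ is block diagonal with respect to $W\oplus W^\perp$ (you via the functional calculus for $P_W$ and $A_U$ as functions of $U$, the paper via an explicit block computation using invertibility of $\1+U_0$), and in (ii) use the continuity of $P_W$ on $\H^{1/2}(\pO)$ to get density of $\gamma(\D_U)$ in $W^\perp$ and then upgrade the identities to all of $W^\perp$. Your explicit polarization step and the inverse partial Cayley transform recovering $U_0$ from $A_U$ are just slightly more detailed versions of the paper's appeal to $v_4(g)\in\{U_0\}'$ iff $v_4(g)\in\{A_U\}'$.
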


\begin{proof}
(i) Assume first that $[v (g)\,,\,U]=0$, $g\in G$. To show that $Q_U$ is $G$-invariant we have to analyze first the block structure of 
$U$ and $v(g)$ with respect to the decomposition
$
 \L^2(\pO)\cong W\oplus W^\perp
$.
Since
\[
U\cong 
\begin{pmatrix}
-\1 & 0 \\ 0 & U_0
\end{pmatrix}
\quad\mathrm{and}\quad
v(g)\cong
\begin{pmatrix}
v_1(g) & v_2(g) \\ v_3(g) & v_4(g)
\end{pmatrix}
\]
the commutation relations imply that $v_2(g)(\1+U_0)=0$ and $[v_4 (g)\,,\,U_0]=0$. But since $U$ has spectral gap at $-1$, then $\1+U_0$ is invertible
on $W^\perp$ and we must have $v_2(g)=0$. The unitarity of $v(g)$ implies $v_3(g)=0$ and $v(g)$ has block diagonal structure.  

By assumption $Q_N$ is $G$-invariant, so it is enough to show that the boundary quadratic form
\[
B(\Phi):=\scalarb{\gamma(\Phi)}{A_U\gamma(\Phi)}\;,\quad \Phi\in\D_U
\]
is also $G$-invariant, i.e., $V(g)\D_U\subset \D_U$ and $B(V(g)\Phi)=B(\Phi)$, $\Phi\in\D_U$.
To show the first inclusion, note that for any $\Phi\in\D_U$ we have 
\[
P_W\gamma(V(g)\Phi)=P_Wv(g) \gamma(\Phi)=v(g)P_W\gamma(\Phi)=0\;,
\]
where we have used that $V$ is traceable along $\pO$, $v(g)$ has diagonal block structure and 
$P_W\cong 
\begin{pmatrix}
\1 & 0 \\ 0 & 0
\end{pmatrix}$.

Finally, for any $g\in G$ and $\Phi\in\D_U$ we check 
\begin{align*}
B(V(g)\Phi)&=
\scalarbw{v_4(g)\gamma(\Phi)}{A_Uv_4 (g)\gamma(\Phi)}=\scalarbw{v_4 (g)\gamma(\Phi)}{v_4 (g)A_U\gamma(\Phi)}\\
&=\scalarbw{\gamma(\Phi)}{A_U\gamma(\Phi)}
=B(\Phi)\;.
\end{align*}
Note that all scalar products refer to $W^\perp$ and that for the last equation we used $v_4(g)\in\{U_0\}'$ iff
$v_4(g)\in\{A_U\}'$ and, again, all commutants are taken with respect to $W^\perp$ (cf., Section~\ref{sec:reduction}).
Since $Q_U$ is $G$-invariant and closable and the representation $V$ unitary it is straightforward to show that its
closure is also $G$-invariant (see, e.g., Theorem~VI.1.17 in \cite{kato95}). By Theorem~\ref{QAinvariant} it follows
that the self-adjoint extension of the minimal Laplacian representing the closed quadratic form will also be $G$-invariant.

(ii) By assumption we have that
\[
 P_W\colon\H^{1/2}(\pO)\to\H^{1/2}(\pO)
\]
is continuous in the fractional Sobolev norm and, therefore, $\gamma(\D_U)$ is dense in $W^\perp$. Since $Q_U$ is $G$-invariant
we have $V(g)\D_U\subset \D_U$, hence for any $\Phi\in\D_U$
\[
 0= P_W \gamma(V(g)\Phi) = P_W v(g) \gamma(\Phi) = P_W v(g) P_W^\perp \gamma(\Phi) \;.
\]
From the density of  $\gamma(\D_U)$ in $W^\perp$ we conclude that 
$v(g)\cong
\begin{pmatrix}
v_1(g) & 0 \\ 0 & v_4(g)
\end{pmatrix}
$
and we only need to show $[v_4 (g)\,,\,U_0]=0$ on $W^\perp$. But this follows from the $G$-invariance
of the boundary quadratic form $B$ and, again, the density of  $\gamma(\D_U)$ in $W^\perp$.
\end{proof}

We can now consider the following immediate consequences. 
\begin{corollary}\label{cor:1}
Let $U\in\mathcal{U}(\L^2(\pO))$ be admissible and such that $\1+U$ is invertible. Then 
$Q_U$ on $\D_U$ is $G$-invariant iff $[v (g)\,,\,U]=0$ for all $g\in G$.
\end{corollary}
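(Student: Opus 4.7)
The plan is to observe that the extra hypothesis $\mathbf{I}+U$ invertible trivializes the subspace $W$ and reduces the corollary to a direct application of Theorem~\ref{repcommutation}. Concretely, if $\mathbf{I}+U$ is invertible then $-1\notin\sigma(U)$, so the eigenspace $W$ associated to $-1$ is $\{0\}$, $W^{\perp}=\L^2(\pO)$, $P_{W^\perp}=\mathbf{I}$, and $P_W=0$. In particular $A_U=\mathbf{i}(U-\mathbf{I})(U+\mathbf{I})^{-1}$ is an everywhere-defined bounded self-adjoint operator on $\L^2(\pO)$ and the domain simplifies to $\D_U=\H^1(\Omega)$ since the constraint $P_W\gamma(\Phi)=0$ is automatic.

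For the direction ``$\Leftarrow$'' I would invoke part~(i) of Theorem~\ref{repcommutation} verbatim: once $[v(g),U]=0$ for every $g\in G$ and $Q_N$ is $G$-invariant (the latter being part of the running hypothesis in Subsection~\ref{subsec:6-2}), that theorem produces the $G$-invariance of $Q_U$ on $\D_U$.

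For the direction ``$\Rightarrow$'' I would apply part~(ii) of Theorem~\ref{repcommutation}. The only non-trivial hypothesis to check there is the continuity of $P_W$ as a map $\H^{1/2}(\pO)\to\H^{1/2}(\pO)$; but $P_W=0$ in our situation, so continuity is immediate. Hence part~(ii) yields $[v(g),U]=0$ for all $g\in G$.

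There is no real obstacle: the whole point of the corollary is that when $W=\{0\}$, the technical continuity requirement on $P_W$ needed for the reverse direction of Theorem~\ref{repcommutation} disappears, and one obtains a clean equivalence. The only thing one must be slightly careful about is to recall that the standing assumption of the subsection, namely the $G$-invariance of the Neumann quadratic form $Q_N$, is still being used implicitly when appealing to Theorem~\ref{repcommutation}(i).
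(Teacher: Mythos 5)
Your proposal is correct and follows essentially the same route as the paper: both directions reduce to Theorem~\ref{repcommutation}, with the observation that $\1+U$ invertible forces $W=\{0\}$, so that $\D_U=\H^1(\Omega)$ and the continuity hypothesis on $P_W$ (equivalently, the density of $\gamma(\D_U)$ in $W^\perp=\L^2(\pO)$) holds trivially. Your remark about the standing assumption on $Q_N$ matches the paper's setup in Subsection~\ref{subsec:6-2}.
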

\begin{proof}
We only need to show that the $G$-invariance implies that the unitaries on the boundary commute.
Note that by assumption $-1\not\in\sigma(U)$ and, with the notation in the preceding theorem, we have
that $W=\{0\}$. Therefore $\D_U=\H^1(\Omega)$ and the corresponding trace gives 
$\gamma(D_U)=\H^{1/2}(\pO)$ which is dense in $\L^2(\pO)$. The rest of the reasoning is litteraly as
in the proof of part~(ii) of Theorem~\ref{repcommutation}.
\end{proof}

We conlcude giving a characterization of $G$-invariance of the quadratic form $Q_U$ 
that uses the point spectrum of the unitary $U$. Recall that $\lambda$ is in the point spectrum
of an operator $T$ if $(\lambda-T)$ is not injective, i.e., $\lambda$ is an eigenvalue 
of $T$. We denote the set of all eigenvalues by $\sigma_p(T)\subset\sigma(T)$.

\begin{lemma}\label{lem:2}
Consider a unitary $U\in\mathcal{U}(\L^2(\pO))$ with spectral gap at $-1$. Assume that 
$\H^{1/2}(\pO)$ is invariant for $U$ and that its restriction
\[
 U_+:=U|\H^{1/2}(\pO)
\]
is continuous with respect to the Sobolev $1/2$-norm. If $U_+$ has only point spectrum, 
i.e., $\sigma(U_+)=\sigma_p(U_+)$, then $U$ is admissible, i.e., the partial Cayley transform
$A_U$ leaves the fractional Sobolev space $\H^{1/2}(\pO)$ invariant and is continuous with respect to the Sobolev $1/2$-norm.
The orthogonal projection $P_W$ leaves also the Sobolev space $\H^{1/2}(\pO)$ invariant and is continuous 
with respecto to the $1/2$-norm.
\end{lemma}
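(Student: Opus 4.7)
My approach is to lift the construction of $P_W$ and $A_U$ from the $\L^2$-Borel functional calculus of $U$ to the Riesz--Dunford holomorphic calculus of the bounded Banach-space operator $U_+\in\mathcal{L}(\mathcal{H}^{1/2}(\pO))$, and then verify that the two constructions agree on $\mathcal{H}^{1/2}(\pO)$. The first ingredient is that $U_+$ inherits the spectral gap at $-1$: since $\sigma(U_+)=\sigma_p(U_+)$, any $\lambda\in\sigma(U_+)$ is an eigenvalue of $U_+$, and any associated eigenvector lies in $\mathcal{H}^{1/2}(\pO)\subset\L^2(\pO)$ and is an eigenvector of $U$ with the same eigenvalue, so $\sigma(U_+)\subset\sigma_p(U)\subset\sigma(U)\subset S^{1}$. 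Combined with Definition~\ref{DefGap}, this gives a $\delta>0$ with $\sigma(U_+)\cap\{|z+1|<\delta\}\subset\{-1\}$, so the circle $|z+1|=\delta/2$ lies in the resolvent set of $U_+$.

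For the spectral projection I would set
\[
\tilde P_W:=\frac{1}{2\pi i}\oint_{|z+1|=\delta/2}(z-U_+)^{-1}\,\mathrm{d}z\in\mathcal{L}(\mathcal{H}^{1/2}(\pO)),
\]
which exists as a norm-convergent Bochner integral. For $\phi\in\mathcal{H}^{1/2}(\pO)$ and each $z$ on the contour, the vectors $(z-U_+)^{-1}\phi$ and $(z-U)^{-1}\phi$ both solve $(z-U)\psi=\phi$ in $\L^2(\pO)$ and therefore coincide; since the $\mathcal{H}^{1/2}$-valued integral also converges in $\L^2(\pO)$ and the Riesz projection of the normal operator $U$ at the isolated eigenvalue $-1$ equals its spectral projection $P_W$, I conclude $\tilde P_W\phi=P_W\phi$. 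This shows that $P_W$ leaves $\mathcal{H}^{1/2}(\pO)$ invariant and is continuous in the $1/2$-Sobolev norm.

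An analogous Dunford integral handles the partial Cayley transform. Writing $A_U=\mathbf{i}(U-\mathbb{I})B$ with $B:=P_{W^\perp}(U+\mathbb{I})^{-1}$ (the inverse interpreted on $W^\perp$), one chooses a cycle $\Gamma$ in the resolvent set of $U_+$ enclosing $\sigma(U_+)\setminus\{-1\}$ but not $-1$, which is possible since $\sigma(U_+)\setminus\{-1\}$ is a compact subset of $S^{1}$ at distance at least $\delta$ from $-1$. One then defines
\[
\tilde B:=\frac{1}{2\pi i}\oint_{\Gamma}\frac{1}{z+1}(z-U_+)^{-1}\,\mathrm{d}z\in\mathcal{L}(\mathcal{H}^{1/2}(\pO)),
\]
where $(z+1)^{-1}$ is bounded on $\Gamma$. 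The same pointwise identification of resolvents shows that $\tilde B$ agrees with $B$ on $\mathcal{H}^{1/2}(\pO)$, so $A_U\bigr|_{\mathcal{H}^{1/2}(\pO)}=\mathbf{i}(U_+-\mathbb{I})\tilde B$ is bounded on $\mathcal{H}^{1/2}(\pO)$, which is the admissibility statement.

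The main obstacle I anticipate is the control of the $\mathcal{H}^{1/2}$-spectrum of $U_+$: without the hypothesis $\sigma(U_+)=\sigma_p(U_+)$, one could have $\sigma(U_+)$ strictly larger than $\sigma(U)$ (since $U_+$ need not be invertible on $\mathcal{H}^{1/2}(\pO)$ unless $U^{-1}$ also preserves that space), the chosen contours could meet $\sigma(U_+)$, and the Dunford integrals would fail to converge. The point-spectrum assumption is precisely what forces $\sigma(U_+)\subset\sigma(U)$ and transfers the spectral gap to the $\mathcal{H}^{1/2}$-setting, making the holomorphic functional calculus available.
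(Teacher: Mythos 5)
Your proof is correct and follows essentially the same route as the paper: both transfer the spectral gap to $U_+$ via the inclusion chain $\sigma(U_+)\subset\sigma_p(U_+)\subset\sigma_p(U)\subset\sigma(U)$ and then obtain $P_W$ and $A_U$ as Riesz--Dunford contour integrals of the resolvent, bounded on $\H^{1/2}(\pO)$. Your version merely spells out more carefully the identification of the $\H^{1/2}$-resolvent with the restriction of the $\L^2$-resolvent, a step the paper leaves implicit.
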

\begin{proof}
Note first that
\[
 \sigma(U_+)\subset\sigma_p(U_+)\subset\sigma_p(U)\subset \sigma(U)
\]
and, therefore, if $U$ has spectral gap at $-1$, then $U_+$ has also spectral gap at $-1$.
Then by Cauchy-Riesz functional calculus (cf., \cite[Chapter~VII]{DunfordI}) we have
that
\[
 A_U=\frac{1}{2\pi i}\int_{c_1}\,i\;\frac{\lambda-1}{\lambda+1}\left(\lambda-U\right)^{-1}\; d\lambda
 \quad\mathrm{and}\quad
 P_W=\frac{1}{2\pi i}\int_{c_2}\left(\lambda-U\right)^{-1}\; d\lambda\;,
\]
where $c_1, c_2$ are closed, simple and positively oriented curves. The curve $c_1$ encloses $\sigma(T)\setminus\{-1\}$ and
$c_2$ encloses only $\{-1\}$. Note that the the gap condition is essential here.
It is clear that both $A_U$ and $P_W$ are bounded operators in $\H^{1/2}(\pO)$ as required.
\end{proof}

\begin{corollary}\label{cor:2}
Consider a unitary $U\in\mathcal{U}(\L^2(\pO))$ with spectral gap at $-1$.
Assume that $\H^{1/2}(\pO)$ is invariant for $U$, that its restriction
$
 U_+:=U|\H^{1/2}(\pO)
$
is continuous with respect to the Sobolev $1/2$-norm and that
$\sigma(U_+)=\sigma_p(U_+)$. Then 
$Q_U$ on $\D_U$ is $G$-invariant iff $[v (g)\,,\,U]=0$ for all $g\in G$.
\end{corollary}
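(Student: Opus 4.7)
The plan is to show that this corollary follows almost immediately by combining Lemma~\ref{lem:2} with Theorem~\ref{repcommutation}, since the hypotheses of the corollary are precisely what is needed to invoke both directions of that theorem.

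First I would apply Lemma~\ref{lem:2} to extract two conclusions from the assumptions on $U_+$: that $U$ is admissible (so $Q_U$ on $\D_U$ is well-defined, closable and semi-bounded by Theorem~\ref{char_U}), and that the spectral projection $P_W$ onto the eigenspace at $-1$ restricts to a bounded operator on $\H^{1/2}(\pO)$. Both facts come out of the Cauchy--Riesz functional calculus applied along two disjoint contours separating $\{-1\}$ from the rest of $\sigma(U_+)$, which is permissible because the pure point spectrum hypothesis together with the spectral gap at $-1$ ensures such contours exist.

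For the direction ``$[v(g),U]=0 \Rightarrow Q_U$ is $G$-invariant'', I would simply invoke part~(i) of Theorem~\ref{repcommutation}: admissibility of $U$ (from Lemma~\ref{lem:2}) plus the assumed commutation with the trace representation $v$ gives $G$-invariance of $Q_U$, of its closure, and of the representing self-adjoint operator. For the reverse direction, I would invoke part~(ii) of Theorem~\ref{repcommutation}: the continuity of $P_W\colon \H^{1/2}(\pO)\to\H^{1/2}(\pO)$ (the second conclusion of Lemma~\ref{lem:2}) is exactly the hypothesis needed there, and yields $[v(g),U]=0$ from the $G$-invariance of $Q_U$.

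There is no real obstacle here: the corollary is a bookkeeping statement that repackages the spectral hypothesis $\sigma(U_+)=\sigma_p(U_+)$ into the two analytic hypotheses (admissibility and $\H^{1/2}$-continuity of $P_W$) that Theorem~\ref{repcommutation} requires. The only point that deserves a brief sentence in the write-up is to verify that applying the Cauchy--Riesz functional calculus to $U_+$ is legitimate, i.e., that $-1$ can indeed be isolated from $\sigma(U_+)$ by a simple closed curve; this uses the spectral gap hypothesis transferred from $U$ to $U_+$ via the inclusion $\sigma(U_+)\subset\sigma(U)$ already observed in the proof of Lemma~\ref{lem:2}.
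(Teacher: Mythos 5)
Your proposal is correct and follows essentially the same route as the paper: Lemma~\ref{lem:2} supplies admissibility of $U$ and $\H^{1/2}(\pO)$-continuity of $P_W$, and then both directions are read off from Theorem~\ref{repcommutation}. The extra remark about isolating $-1$ by a contour via $\sigma(U_+)\subset\sigma(U)$ is a detail already contained in the proof of Lemma~\ref{lem:2}, so nothing further is needed.
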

\begin{proof}
 By the preceding lemma we have that $U$ is admissible and that $P_W$ is continuous
in the $1/2$-norm. The statement follows then directly from Theorem~\ref{repcommutation}.
\end{proof}

In the final section we will present examples with unitaries which satisfy the conditions mentioned in the statements above.

\subsection{Groups acting by isometries}

We will discuss now the important instance when the unitary representation is determined by an action of the group $G$ on $\Omega$ by isometries.
Thus, assume that the group $G$ acts smoothly by isometries on the Riemannian manifold $(\Omega,\pO,\eta)$. 
Any $g\in G$ specifies a diffeomorphism $g:\Omega\to\Omega$ that we will denote with the same symbol for simplicity of notation. 
Moreover, we have that $$g^*\eta=\eta\;,$$ where $g^*$ stands for the pull-back by the diffeomorphism $g$. 
These diffeomorphisms restrict to isometric diffeomorphisms on the Riemannian manifold at the boundary $(\pO,\partial\eta)$
(see, e.g., \cite[Lemma 8.2.4]{marsden01}), 
$$
(g|_{_{\pO}})^*\partial\eta=\partial\eta\;.
$$ 
These isometric actions of the group $G$ induce unitary representations of the group on $\Omega$ and $\pO$\,.
In fact, consider the following representations:

\begin{equation*}
V\colon G\to\mathcal{U}(\L^2(\Omega))\;, \qquad V(g)\Phi=(g^{-1})^*\Phi\quad \Phi\in\L^2(\Omega)\;.
\end{equation*}
\begin{equation*}
v\colon G\to\mathcal{U}(\L^2(\pO))\;, \qquad
v (g)\varphi=(g|_{_{\pO}}^{-1})^*\varphi\quad \varphi\in\L^2(\pO)\;.
\end{equation*}
Then a simple computation shows that,
$$
\scalar{V(g^{-1})\Phi}{V(g^{-1})\Psi} =\scalar{\Phi}{\Psi}\;,
$$
where we have used the change of variables formula and the fact that isometric diffeomorphisms preserve the Riemannian volume, 
i.e., $g^*\negthinspace\d\mu_\eta=\d\mu_{\eta}$\,. The result for the boundary is proved similarly. 
The induced actions are related with the trace map as
$$
\gamma(V(g)\Phi)=v (g)\gamma(\Phi)\quad g\in G, \Phi\in\H^1(\Omega)\;,
$$
and therefore the unitary representation $V$ is traceable along the boundary of $\Omega$ with trace $v$.

Moreover we have that the quadratic form $Q_N$ is $G$-invariant.

\begin{proposition}\label{prop:dphiinvariant}
Let $G$ be a Lie group that acts by isometric diffeomorphisms on the Riemannian manifold 
$(\Omega,\pO,\eta)$ and let $V:G\to \mathcal{U}(\L^2(\Omega))$ be the associated unitary representation. 
Then, Neumann's quadratic form $Q_N (\Phi) = \scalar{\d\Phi}{\d\Phi}$ with domain $\H^1(\Omega)$ is $G$-invariant.
\end{proposition}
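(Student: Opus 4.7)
The plan is to verify the two defining properties of $G$-invariance for the quadratic form $Q_N$ on $\H^1(\Omega)$: first, that $V(g)\H^1(\Omega)\subset \H^1(\Omega)$ for every $g\in G$, and second, that $Q_N(V(g)\Phi)=Q_N(\Phi)$ for every $\Phi\in \H^1(\Omega)$ and $g\in G$. Both reduce to the naturality of the exterior derivative under smooth pullbacks together with the fact that $g$ is an isometry.

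First I would work on the dense subspace $\C^\infty(\Omega)\subset \H^1(\Omega)$. For smooth $\Phi$, the identity
\[
 d\bigl(V(g)\Phi\bigr)=d\bigl((g^{-1})^*\Phi\bigr)=(g^{-1})^*(d\Phi)
\]
holds, since $d$ commutes with smooth pullbacks. From the isometry condition $(g^{-1})^*\eta=\eta$ one obtains that the pointwise fibre metric induced on $\Lambda^1 T^*\Omega$ is preserved by the pullback, i.e., $\scalar{(g^{-1})^*\alpha}{(g^{-1})^*\beta}_{g^{-1}(x)}=\scalar{\alpha}{\beta}_x$ for one-forms $\alpha,\beta$. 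Together with the invariance of the Riemannian volume, $(g^{-1})^*\d\mu_\eta=\d\mu_\eta$, the change of variables formula gives
\[
 \scalar{d(V(g)\Phi)}{d(V(g)\Phi)}_{\Lambda^1}=\scalar{(g^{-1})^*d\Phi}{(g^{-1})^*d\Phi}_{\Lambda^1}=\scalar{d\Phi}{d\Phi}_{\Lambda^1}\;,
\]
which is precisely $Q_N(V(g)\Phi)=Q_N(\Phi)$ on smooth functions. This, combined with the $L^2$-isometry property of $V(g)$ already recorded in the text, shows that $V(g)$ is an isometry of $\C^\infty(\Omega)$ in the Sobolev norm $\|\cdot\|_1$ defined by $\|\Phi\|_1^2=\|\Phi\|_0^2+Q_N(\Phi)$.

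Then I would invoke density of $\C^\infty(\Omega)$ in $\H^1(\Omega)$ to extend these identities. Since $V(g)$ is an $\|\cdot\|_1$-isometry on a dense subspace, it extends by continuity to a bounded operator $\widetilde V(g)\colon \H^1(\Omega)\to \H^1(\Omega)$, and the extension must coincide with the $L^2$-action of $V(g)$ restricted to $\H^1(\Omega)$ (by uniqueness of continuous extensions, using the continuous embedding $\H^1(\Omega)\hookrightarrow \L^2(\Omega)$). In particular $V(g)\H^1(\Omega)\subset\H^1(\Omega)$. The equality $Q_N(V(g)\Phi)=Q_N(\Phi)$ then passes to all of $\H^1(\Omega)$ by continuity of $Q_N$ with respect to $\|\cdot\|_1$.

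The only possible obstacle is the density/continuity argument linking the smooth identities to the whole Sobolev space; but this is standard once the smooth case is in place, since $V(g)$ is already known to be $L^2$-unitary and the Sobolev norm differs from the $L^2$ norm only by $Q_N$, which we have shown to be invariant on $\C^\infty(\Omega)$. No subtle boundary contribution appears because $Q_N$ contains only the interior term $\scalar{d\Phi}{d\Phi}$, which is why the isometry property of $g$ on $(\Omega,\eta)$ alone suffices, without invoking the induced boundary metric $\partial\eta$.
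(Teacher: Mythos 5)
Your proof is correct and follows essentially the same route as the paper: the naturality of the exterior derivative under pullback, the preservation of the fibre metric on one-forms by the isometry, and the change of variables formula with $g^*\d\mu_\eta=\d\mu_\eta$. The only difference is presentational: you first verify the identities on $\C^\infty(\Omega)$ and then extend by density in the $\|\cdot\|_1$-norm, whereas the paper performs the computation directly and concludes $V(g)\H^1(\Omega)=\H^1(\Omega)$ from the equivalence of $\sqrt{\norm{\d\cdot}^2+\norm{\cdot}^2}$ with the Sobolev norm; your version is a slightly more careful rendering of the same argument.
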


\begin{proof}
First notice that the pull-back of a diffeomorphism commutes with the action of the exterior differential. 
Then we have that $$\d(V(g^{-1})\Phi)=\d(g^*\Phi)=g^*\negthinspace\d\Phi\;.$$ 
Hence
\begin{subequations}\label{dphiinvariant}
\begin{align}
\scalar{\d(V(g^{-1})\Phi)}{\d(V(g^{-1})\Psi)}&=\int_\Omega \eta^{-1}(g^*\negthinspace\d\Phi,g^*\negthinspace\d\Psi)\d\mu_\eta \nonumber\\
&=\int_\Omega g^*\negthickspace\left( \eta^{-1}(\d\Phi,\d\Psi) \right)g^*\negthinspace\d\mu_\eta \nonumber\\
&=\int_{g\Omega}\eta^{-1}(\d\Phi,\d\Psi)\d\mu_\eta \nonumber\\
&=\scalar{\d\Phi}{\d\Psi}\;,
\end{align}
\end{subequations}
where in the second inequality we have used that $g:\Omega\to\Omega$ is an isometry and therefore 
$$\eta^{-1}(g^*\negthinspace\d\Phi,g^*\negthinspace\d\Psi)=g^*\negthinspace \eta^{-1}(g^*\negthinspace\d\Phi,g^*\negthinspace\d\Psi)
=g^*\negthickspace\left( \eta^{-1}(\d\Phi,\d\Psi) \right)\;.$$ 
The equations \eqref{dphiinvariant} guaranty also that $V(g)\H^1(\Omega)=\H^1(\Omega)$ since $V(g)$
is a unitary operator in $\L^2(\Omega)$ and the norm 
$\sqrt{\norm{\operatorname{d}\cdot\;}_{\Lambda^1}^2+\norm{\cdot}^2}$ is equivalent to the Sobolev norm of order 1.
\end{proof}

Before making explicit the previous structures in concrete examples we summarize the the main ideas in this section as follows: 
given a group acting by isometric diffeomorphisms on a Riemannian manifold, then any operator at the boundary, 
that satisfies the conditions in Definition~\ref{DefGap} and Definition~\ref{def:admissible}, and that verifies the commutation relations of 
Theorem~\ref{repcommutation}~(i) describes a $G$-invariant quadratic form. The closure of this quadratic form characterizes uniquely a 
$G$-invariant self-adjoint extension of the Laplace-Beltrami operator (cf.~Theorems~\ref{fundteo} and \ref{QAinvariant}).

\section{Examples}
\label{sec:SymmetryExamples}

In this section we introduce two particular examples of $G$-invariant quadratic forms. In the first example we are considering a 
situation where the symmetry group is a finite, discrete group. In the second one we consider $G$ to be a compact Lie group.

\begin{example}
Let $\Omega$ be the cylinder $[-1,1]\times[-1,1]/\negthickspace\sim$\,, where $\sim$ is the equivalence relation $(x,1)\sim(x,-1)$\,. 
The boundary $\pO$ is the disjoint union of the two circles $\Gamma_1=\bigl\{\{-1\}\times[-1,1]/\negthickspace\sim\bigr\}$ and
$\Gamma_2=\bigl\{\{1\}\times[-1,1]/\negthickspace\sim\bigr\}$\,, (see Figure \ref{fig:cilindro}). Let $\eta$ be the euclidean metric on $\Omega$. 
Now let $G=\mathbb{Z}_2$=\{e,f\} be the discrete, abelian group of two elements and consider the following action in $\Omega$:
\begin{align*}
e:(x,y)&\to(x,y)\;,\\
f:(x,y)&\to(-x,y)\;.
\end{align*}
The induced action at the boundary is
\begin{align*}
e:(\pm1,y)\to(\pm1,y)\;,\\
f:(\pm1,y)\to(\mp1,y)\;.
\end{align*}
Clearly $G$ transforms $\Omega$ onto itself and preserves the boundary. Moreover, it is easy to check that $f^*\eta=\eta$\,.
\begin{figure}[h]
\centering
\includegraphics[width=12cm]{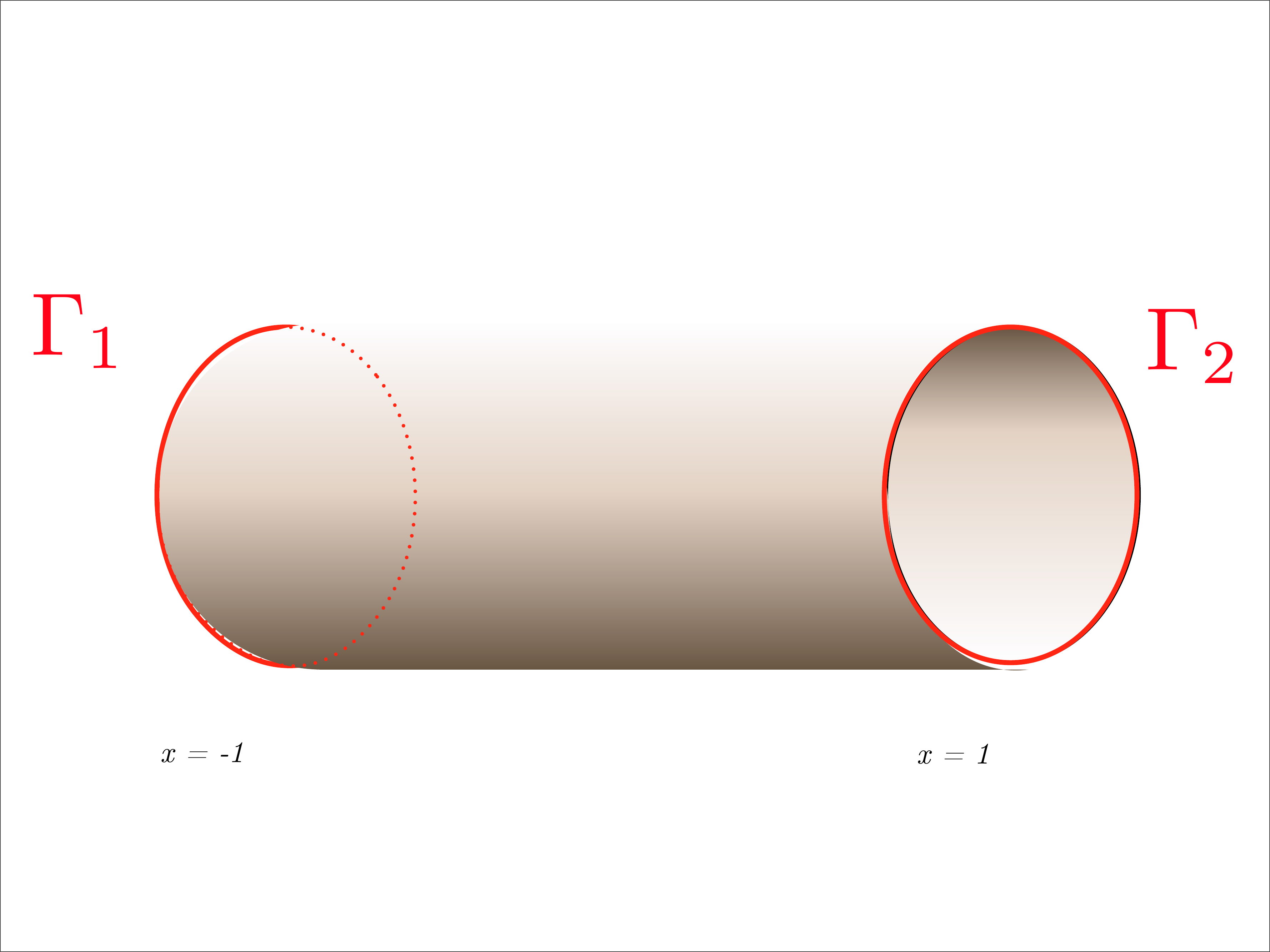}
\caption{}\label{fig:cilindro}
\end{figure}
Since the boundary $\pO$ consists of two disjoints manifolds $\Gamma_1$ and $\Gamma_2$\,, the Hilbert space of 
the boundary is $\L^2(\pO)=\L^2(\Gamma_1)\oplus\L^2(\Gamma_2)$. Any $\Phi\in\L^2(\pO)$
can be written as
$$\Phi=\begin{pmatrix}\Phi_1(y)\\ \Phi_2(y)\end{pmatrix}$$ 
with $\Phi_i\in\L^2(\Gamma_i)$\,. 
The only nontrivial action on $\L^2(\pO)$ is given by 
\[
v (f)\begin{pmatrix}\Phi_1(y)\\ \Phi_2(y)\end{pmatrix}
=\begin{pmatrix}0 & \mathbb{I}\\ \mathbb{I} & 0\end{pmatrix}\begin{pmatrix}\Phi_1(y)\\ \Phi_2(y)\end{pmatrix}\;.
\]
The set of unitary operators that describe the closable quadratic forms as defined in the previous section 
is given by suitable unitary operators 
$$U=\begin{pmatrix} U_{11} & U_{12} \\ U_{21} & U_{22} \end{pmatrix}\;,$$ 
with $U_{ij}=\L^2(\Gamma_j)\to\L^2(\Gamma_i)$. 
According to Theorem~\ref{repcommutation}~(i) the unitary operators commuting 
with $v (f)$ will lead to $G$-invariant quadratic forms. Imposing
$$\begin{pmatrix} 0 & \mathbb{I} \\ \mathbb{I} & 0\end{pmatrix}
\begin{pmatrix} U_{11} & U_{12} \\ U_{21} & U_{22} \end{pmatrix}=\begin{pmatrix} U_{11} & U_{12} \\ U_{21} & U_{22} 
\end{pmatrix}\begin{pmatrix} 0 & \mathbb{I} \\ \mathbb{I} & 0\end{pmatrix}\;,$$
we get the conditions
\begin{align*}
&U_{21}-U_{12}=0\;,\\
&U_{22}-U_{11}=0\;.
\end{align*}

Obviously there is a wide class of unitary operators, i.e., boundary conditions, that will be compatible with the symmetry group $G$. 
We will consider next two particular classes of boundary conditions. 
First, consider the following unitary operators 
\begin{equation}
U=\begin{bmatrix} e^{\mathrm{i}\beta_1}\mathbb{I}_1 & 0\\ 0 & e^{\mathrm{i}\beta_2}\mathbb{I}_2 \end{bmatrix}\;,
\end{equation}
where $\beta_i\in\C^\infty\left(S^1,[-\pi+\delta,\pi-\delta]\cup\{\pi\}\right)$ for some $\delta>0$. It is showed in 
\cite[Sections~3 and 5]{ILP13} that this class of 
unitary operators have spectral gap at -1 and are admissible (see also Subsection~\ref{subsec:6-2}). 
Moreover, this choice of unitary matrices corresponds to select Robin 
boundary conditions of the form:
\begin{equation}
\gamma\left(-\frac{\d\Phi}{\d x}\right)\biggr|_{\Gamma_1}=-\tan(\beta_1/2)\gamma(\Phi)\bigr|_{\Gamma_1}\quad;\quad 
\gamma\left(\frac{\d\Phi}{\d x}\right)\biggr|_{\Gamma_2}=-\tan(\beta_2/2)\gamma(\Phi)\bigr|_{\Gamma_2}\;.
\end{equation}
The $G$-invariance condition above imposes $\beta_1=\beta_2$. Notice that when $\beta_1\neq\beta_2$ 
we can obtain meaningful self-adjoint extensions of the Laplace-Beltrami operator that, however, will not be $G$-invariant.

We can also consider unitary operators of the form
\begin{equation}
U=\begin{bmatrix} 0 & e^{\mathrm{i}\alpha} \\ e^{-\mathrm{i}\alpha} & 0 \end{bmatrix}\;,
\end{equation}
where $\alpha\in \C^\infty (S^1,[0,2\pi])$\,. Again, it is proved in \cite{ILP13} that this class of unitary operators have spectral 
gap at $-1$ and are admissible.  
In this case the unitary matrix corresponds to select so-called quasi-periodic boundary conditions, 
cf., \cite{asorey83}, i.e.,
\begin{equation*}
  \gamma(\Phi)\bigr|_{\Gamma_1}=e^{i\alpha}\gamma(\Phi)\bigr|_{\Gamma_2}\;,
\end{equation*}
\begin{equation*}
  \gamma\left(-\frac{\d\Phi}{\d x}\right)\biggr|_{\Gamma_1}=e^{i\alpha}\gamma\left(\frac{\d\Phi}{\d x}\right)\biggr|_{\Gamma_2}\;.
\end{equation*}
The $G$-invariance condition imposes $e^{i\alpha}=e^{-i\alpha}$ and therefore among all the quasi-periodic conditions only the
periodic ones, $\alpha\equiv0$\,, are compatible with the $G$-invariance. 
\end{example}

\begin{example}\label{ex:upperhemisphere}
Let $\Omega$ be the unit, upper hemisphere centered at the origin. 
Its boundary $\pO$ is the unit circle on the $z=0$ plane. 
Let $\eta$ be the induced Riemannian metric from the euclidean metric in $\mathbb{R}^3$\,. Consider the compact Lie group 
$G=SO(2)$ that acts by rotation around the $z$-axis. 
If we use polar coordinates on the horizontal plane, then the boundary $\pO$ is isomorphic to the interval $[0,2\pi]$ with 
the two endpoints identified. 
We denote by $\theta$ the coordinate parameterizing the boundary and the boundary Hilbert space is $\L^2(S^1)$\,.

Let $\varphi\in\H^{1/2}(\pO)$ and consider the action on the boundary by a group element 
$g_\alpha\in G$, $\alpha\in [0,2\pi]$, given by
$$v (g^{-1}_\alpha)\varphi(\theta)=\varphi(\theta+\alpha)\;.$$
To analyze what are the possible unitary operators that lead to $G$-invariant quadratic forms it is convenient to use the Fourier 
series expansions of the elements in $\L^2(\pO)$\,. Let $\varphi\in\L^2(\pO)$\,, then
$$\varphi(\theta)=\sum_{n\in\mathbb{Z}}\hat{\varphi}_ne^{\mathrm{i}n\theta}\;,$$
where the coefficients of the expansion are given by $$\hat{\varphi}_n=\frac{1}{2\pi}\int_0^{2\pi}\varphi(\theta)e^{-\mathrm{i}n\theta}\d\theta\;.$$
We can therefore consider the induced action of the group $G$ as a unitary operator on ${\ell}_2$\,, the Hilbert space of square summable sequences. 
In fact we have that:
\begin{align*}
\widehat{(v (g^{-1}_\alpha)\varphi)}_n&
=\frac{1}{2\pi}\int_0^{2\pi}\varphi(\theta+\alpha)e^{-\mathrm{i}n\theta}\d\theta\\
&=\sum_{m\in\mathbb{Z}}\hat{\varphi}_me^{\mathrm{i}m\alpha}\int_0^{2\pi}\frac{e^{\mathrm{i}(m-n)\theta}}{2\pi}\d\theta
 =e^{\mathrm{i}n\alpha}\hat{\varphi}_n\;.
\end{align*}
This shows that the induced action of the group $G$ is a unitary operator in $\mathcal{U}(\ell_2)$ that acts 
diagonally in the Fourier series expansion.
More concretely, we can represent it as $\widehat{v (g^{-1}_\alpha)}_{nm}=e^{\mathrm{i}n\alpha}\delta_{nm}\,$\,.
From all the possible unitary operators acting on the Hilbert space of the boundary, only those whose representation in $\ell_2$
commutes with the above operator will lead to $G$-invariant quadratic forms (cf., Theorem~\ref{repcommutation}~(i)). 
Since $\widehat{v (g^{-1}_\alpha)}$ acts
diagonally on $\ell_2$ it is clear that only operators of the form $\hat{U}_{nm}=e^{\mathrm{i}\beta_n}\delta_{nm}$\,, 
$\{\beta_n\}_n\subset\mathbb{R}$\,, will lead to $G$-invariant quadratic forms.

As a particular case we can consider that all the parameters are equal, i.e., $\beta_n=\beta$, $n\in\mathbb{Z}$\,.
In this case it is clear that $(\widehat{U\varphi})_n=e^{\mathrm{i}\beta}\varphi_n$\,, which gives the following 
admissible unitary with spectral gap at $-1$:
$$U\varphi=e^{\mathrm{i}\beta}\varphi\;.$$
This shows that the unique Robin boundary conditions compatible with the $SO(2)$-invariance are those that are defined with a 
constant parameter along the boundary, i.e., 
\begin{equation}
\gamma\left(\frac{\d\Phi}{\d \vec{n}}\right)=-\tan(\beta/2)\gamma(\Phi)\;,\quad\beta\in[0,2\pi]\;,
\end{equation}
where $\vec{n}$ stands for normal vector field pointing outwards to the boundary.
\end{example}


\end{document}